\documentclass[12pt]{article}
\usepackage[margin=1in]{geometry}

\usepackage{style/qstyle}

\title{New Approaches to Complexity via Quantum Graphs}
\author[1]{Eric Culf}
\author[2]{Arthur Mehta}
\affil[1]{Faculty of Mathematics and Institute for Quantum Computing, University of Waterloo\footnote{\texttt{eculf@uwaterloo.ca}}}
\affil[2]{ Department of Mathematics and Statistics, University of Ottawa\footnote{\texttt{amehta2@uottawa.ca}}}
\date{\vspace{-10mm}}

\begin{document}

\pagenumbering{roman}
\setcounter{page}{1}

\maketitle

\begin{abstract}
Problems based on the structure of graphs --- for example finding cliques, independent sets, or colourings --- are of fundamental importance in classical complexity. Defining well-formulated decision problems for quantum graphs, which are an operator system generalisation of graphs, presents several technical challenges. Consequently, the connections between quantum graphs and complexity have been underexplored. 

In this work, we introduce and study the clique problem for quantum graphs. Our approach utilizes a well-known connection between quantum graphs and quantum channels. The inputs for our problems are presented as circuits inducing quantum channel, which implicitly determine a corresponding quantum graph. We show that, quantified over all channels, this problem is complete for $\tsf{QMA}(2)$; in fact, it remains $\tsf{QMA}(2)$-complete when restricted to channels that are probabilistic mixtures of entanglement-breaking and partial trace channels. Quantified over a subset of entanglement-breaking channels, this problem becomes $\tsf{QMA}$-complete, and restricting further to deterministic or classical noisy channels gives rise to complete problems for $\tsf{NP}$ and $\tsf{MA}$, respectively. In this way, we exhibit a classical complexity problem whose natural quantisation is $\tsf{QMA}(2)$, rather than $\tsf{QMA}$, and provide the first problem that allows for a direct comparison of the classes $\tsf{QMA}(2)$, $\tsf{QMA}$, $\tsf{MA}$, and $\tsf{NP}$ by quantifying over increasingly larger families of instances. 

We use methods that are inspired by self-testing to provide a direct proof of $\tsf{QMA}(2)$-completeness, rather than reducing to a previously-studied complete problem. We also give a new proof of the celebrated reduction of $\tsf{QMA}(k)$ to $\tsf{QMA}(2)$. In parallel, we study a version of the closely-related independent set problem for quantum graphs, and provide preliminary evidence that it may be in general weaker in complexity, contrasting to the classical case.

\end{abstract}

\newpage
\tableofcontents
\newpage
\pagenumbering{arabic}
\setcounter{page}{1}

\section{Introduction}

Shannon's foundational work on zero-error capacity established a deep connection between noisy channels and graph theory \cite{Sha56}. Shannon showed that to each classical noisy channel $N$, one can associate a corresponding graph $G_N$, called a \emph{confusability graph}, which encodes many important properties of the channel. Notably, $G_N$ will have an independent set (\emph{cf.} a clique) of size $k$ if and only if there is a collection of distinct inputs $x_1, \dots x_k$ for $N$ which have a zero (\emph{cf.} non-zero) probability of being confused as outputs.

A \emph{quantum graph} is a generalization of a graph which allows to extend the graph notion of confusability to quantum channels~\cite{DSW13}. In brief, given a quantum channel $\Phi$ there exists a corresponding \emph{operator system} $\mathcal{S}_{\Phi}$. 

A collection of states $\rho_1, \dots \rho_k$ forms an \emph{independent set (\emph{cf.} clique)} for $\mathcal{S}_{\Phi}$ if the average overlap $\Tr(\mathcal{N}(\rho_i) \mathcal{N}(\rho_j))$ is sufficiently small (\emph{cf.} large). Quantum graphs have been studied from various points of view including channel capacity, non-local games, and non-commutative relations~\cite{Sta16,TT20,KKS20,CW22,Mat22,CS22,Gan21,BEVW22}.

Due to the ubiquity of graphs in classical complexity, one might ask if it is also natural to study quantum graphs in the context of quantum complexity theory. In the classical setting, deciding if a graph has a clique (or an independent set) is $\tsf{NP}$-complete~\cite{Sip12}. The complexity class $\tsf{QMA}$ is recognized as a quantum analogue of $\tsf{NP}$, as the classical proof strings of $\tsf{NP}$ are replaced by general quantum states for this class. Kitaev emphasized this view by demonstrating $\tsf{QMA}$-completeness of the local Hamiltonian problem, a quantum analogue of the $\tsf{NP}$-complete problem $k$-SAT~\cite{KSV02}. Similarly, other $\tsf{NP}$-complete problems have been shown to quantise to $\tsf{QMA}$-complete problems, \emph{e.g.}~\cite{Bra06,BG22}. It is therefore reasonable to expect the clique and independent set problems for quantum graphs to be $\tsf{QMA}$-complete. 

Surprisingly, the complexity of these problems has received relatively little study. An initial treatment was given by Beigi and Shor~\cite{BS08}; however, as we discuss in \cref{subSec:BG08}, their work would require the additional assumption that the input states are product states in order to apply to our setting. An alternative quantum analogue of $\tsf{NP}$ is class $\tsf{QMA(2)}$, which contains $\tsf{QMA}$ and differs in that it is assumed that the that proofs are given as product states~\cite{KMY03,ABDFS09}. There are relatively few known $\tsf{QMA(2)}$-complete problems compared to the cornucopia of $\tsf{QMA}$-complete problems; moreover, known $\tsf{QMA(2)}$-complete problems like the sparse separable Hamiltonian  problem \cite{CS12}, and the product isometry output  problem \cite{GHMW15}, don't appear to be straightforward generalizations of $\tsf{NP}$-complete problems.

In this work, we study the decision problems of identifying cliques and independent sets in quantum graphs. We present our problem instances as circuit descriptions of quantum channels $\Phi$, which in turn determines a quantum graphs (operator systems) $\mathcal{S}_{\Phi}$. We also consider versions of the clique/independent set problems for graphs, where instead of an explicit description of a graph, the verifier is given a circuit description of a noisy channel which implicitly determines a graph, given as its confusability graph.

\vspace{-5mm}

\paragraph{Summary of Results} We examine the clique and independent set problems for quantum channels, and we show they are both in $\tsf{QMA}(2)$. Furthermore we prove that the clique problem is $\tsf{QMA}(2)$-complete and additionally show that, by quantifying over a restricted class of channels, we recover a complete problem for $\tsf{QMA}$. For deterministic and classical noisy channels, we show that our versions of both the clique and independent set problems are complete for $\tsf{NP}$ and $\tsf{MA}$, respectively. Our work outlines a hierarchy of decision problems which captures completeness for all four of these important complexity classes and provides evidence for viewing $\tsf{QMA(2)}$ as a natural quantum analogue of $\tsf{NP}$.

\subsection{Background}

In this section, we provide a brief overview of quantum graphs. For completeness we highlight that there are several overlapping notions of quantum graphs, which have been discovered in somewhat distinct settings, for example in early works due to Erd\H{o}s, Katavolos, Shulman \cite{EKS98}, and Weaver \cite{Wea10,Wea21}. More comprehensive introductions to the topics of operator systems and quantum graphs can be found in \cite{Pau02}, \cite{Pau16}, and \cite{Daw22}. 

We focus on the motivation of quantum graphs from the point of view of confusability graphs and describe how to include ordinary graphs\footnote{It is sometimes useful to emphasize the distinction between a graph (\emph{i.e.} vertex and edge set) and a quantum graph (\emph{i.e.} operator system). In this work, we highlight this by using the terms ordinary or classical graph.} as quantum graphs. We also discuss the technical challenges of giving a quantum graph as an input for a decision problem. Next, we outline our approach of presenting them as the description of a quantum circuit and then sketch the definitions of the $2$-clique and $2$-independent set problems. The extension of these definitions to larger sets of states is given in \cref{sec:motivation}. Lastly, we discuss how presenting problem instances as circuits descriptions of channels, which implicitly determines a confusability graph, motivates new versions of the clique and independent set problems for classical graphs as well.

\subsubsection{Quantum graphs}\label{subsec:Qgraphs} 

\paragraph{Confusability Graphs.} A classical noisy channel $N: X \rightarrow Y$ may be parametrised by a probability transition matrix $N=N(y|x)$, which gives the probability of receiving output $y$ given input $y$. Given a channel $N$, the corresponding \emph{confusability graph} $G_N$, is defined on vertex set $X$ with edges $x \sim x'$ whenever $N(y|x)N(y|x') \neq 0$, for some $y \in Y$, that is whenever there is non-zero probability that two inputs are mapped to the same output~\cite{Sha56}. A set $X' \subseteq X$ forms an independent set of $G_N$ if $x\not\sim x'$ for all distinct $x,x'\in X'$, which is if and only if the probability of any pair of distinct $x,x' \in X'$ being sent to the same output is $0$. Similarly, $X'$ forms a clique of $G_N$ if $x\sim x'$ for all distinct $x,x'\in X'$, which is if and only if any pair of distinct elements $x,x'\in X'$ have a non-zero probability as being received as the same output. 

In the quantum setting, classical noisy channels are superseded by quantum channels, represented by completely positive trace-preserving (CPTP) maps $\Phi: \mathbb{M}_n(\C) \rightarrow \mathbb{M}_m(\C)$. To each such quantum channel, one can associate an operator system $\mc{S}_{\Phi} \subseteq\mathbb{M}_n$, which is constructed as the linear span of all products  $A^*_iA_j$ for some set of Kraus operators $\lbrace A_i \rbrace_i$ of $\Phi$. The operator system $\mathcal{S}_{\Phi}$ is independent of the choice of Kraus representation and can be used to generalize the role of confusability graphs to study zero-error capacities of quantum channels. In particular, a collection of orthonormal states $\ket{v_1}, \dots \ket{v_k}$ can be distinguished from one another with certainty after passing through the channel if and only if $\ketbra{v_i}{v_j}$ is orthogonal to $\mathcal{S}_{\Phi}$, in which case we say $\ket{v_1}, \dots, \ket{v_k}$ is an \emph{(exact) independent set} for $\Phi$ or $\mathcal{S}_{\Phi}$~\cite{DSW13}. More generally, one can extend other notions from classical graph theory such as cliques and graph colourings to the setting of operator systems~\cite{OP16,Sta16,KM19,BTW21}. In this context, operator systems are often referred to as quantum graphs or non-commuative graphs. 

\paragraph{Classical Graphs as Quantum Graphs}
It is possible to embed classical graphs into the set of quantum graphs in the following way. Starting with an undirected graph $G$, define an operator system $\mathcal{S}_G$ by taking the span of $\ketbra{i}{j}$ over vertices $i$ and $j$ in $G$ satisfying $i=j$ or $i \sim_G j$. Graphs $G$ and $H$ are isomorphic if and only if $\mathcal{S}_G$ and $\mathcal{S}_H$ are unitarily equivalent~\cite{OP16}. Similarly, given a noisy channel $N=N(y|x)$, there are several ways to construct a corresponding quantum channel $\Phi_N$ which satisfies $\mathcal{S}_{\Phi_{N}} = \mathcal{S}_{G_{N}}$. Generalizations of cliques, independent sets, and colourings to quantum graphs respect this construction, \emph{i.e.} there is a correspondence between $k$-cliques, $k$-independent sets, and $k$-colourings for $G$ and those for the quantum graph $\mc{S}_G$~\cite{Pau16}.

\paragraph{The Complement of a Quantum Graph.} 
Recall that, given a graph $G$, the complement $\overline{G}$ is a graph on the same vertex set where the adjacency relation is defined as $i \sim_{\overline{G}} j$ if and only if $i\not\sim_{G} j$. However, there is no generally agreed-upon approach to extending the notion of the graph complement to quantum graphs. One straightforward approach would be to take the orthogonal complement of $\mathcal{S}$ inside $\mathbb{M}_n$ with respect to the Hilbert-Schmidt inner product, but this will not be an operator system as it fails to contain the identity element, and hence fail to correspond to any channel. One can remedy this by making a correction such as $\overline{\mc{S}} = \mathcal{S} + \C\Id$ to account for the identity~\cite{Sta16}. Unfortunately this does not extend the graph complement as $\mathcal{S}_{\overline{G}} \neq \overline{\mathcal{S}}_G$. Another approach allows both operator systems as well as their orthogonal complements to be considered as quantum graphs~\cite{KM19}. 

Relevant for this work is the fact there that there is no straightforward and unique notion of a graph complement for quantum graphs, and consequently the complexity of decision problems based on cliques and independent sets for quantum graphs cannot be reduced to one another by taking a complement. Indeed, as we discuss in \cref{subSec:OpenProblems}, it is even possible that they have different complexity.

\paragraph{Quantum Independent Sets and Quantum Cliques.} Motivated by the study of non-local games, quantum generalizations of many graph parameters have been found. The existence of $k$-cliques, $k$-independent sets, and $k$-colourings for a classical graph $G$ can be determined by the existence of a perfect classical strategy for a corresponding non-local game. In some cases, \emph{quantum strategies} for these games can outperform classical strategies, and perfect quantum strategies give rise to \emph{quantum cliques} and \emph{quantum independent sets} for classical graphs~\cite{MR16,HMPS19}. By considering non-local games where the questions are quantum states, one can also define quantum cliques and quantum independent sets for quantum graphs \cite{BGH19}.

It is known that determining the quantum value --- or commuting operator value --- of a non-local game is an undecidable problem \cite{CS19,JNV+20arxiv,MSY22}. These results can be captured using the non-local games associated to the graph parameters mentioned above, and thus it undecidable to determine the existence of quantum cliques/quantum independent sets in either classical and quantum graphs~\cite{Har2023}. In this work, we take a different perspective and do not consider quantum cliques or quantum independence sets. Instead, we exclusively study the complexity of determining the existence of cliques and independent sets in graphs and quantum graphs.

\subsubsection{The clique/independent set problems for quantum graphs}\label{subsec:PresentingGraphs-Channels}

Problems based on finding cliques, independent sets, or graph colourings are well-studied in classical complexity; they all give rise to $\tsf{NP}$-complete problems~\cite{Sip12}. In these cases, the input to the problem includes an explicit description of the vertices and edges of a graph. Providing an explicit description of a quantum graph raises some challenges: notably quantum graphs are linear subspaces $\mathcal{S} \subseteq \mathbb{M}_n$ and are thus determined by a basis of operators. As such, given a quantum channel $\Phi$ on an $n$-qubit system, the corresponding quantum graph $S_{\Phi}$ is a subspace with up to $2^{2n}$ dimensions, so specifying it in this way requires a description of an exponentially sized basis. Note that this problem also arises in the classical case, if the graphs in question are seen as arising from an noisy channel on an $n$-bit input space. For our work, we will instead consider quantum graphs as being determined by an underlying quantum channel $\Phi$, and give a problem instance as a quantum circuit $C$ describing the action of the channel $\Phi=\Phi_C$.

\paragraph{Presenting Quantum Graphs as Quantum Circuits.}
Different noisy channels can give rise to the same confusability graph, \emph{i.e.} the construction $N \mapsto G_N$ is not injective.
Similarly, in the quantum setting, the construction $\Phi \mapsto \mathcal{S}_\Phi$ is not injective. On the other hand, given any graph $G$ one can construct $N$ satisfying $G=G_N$ with an appropriate choice of transition probabilities $N=N(y|x)$; likewise given any operator system $\mathcal{S} \subseteq \mathbb{M}_n$, it is possible to define $\Phi$ such that $\mathcal{S}_{\Phi}= \mathcal{S}$~\cite{Pau16}. Consequently, quantifying over all quantum channels is sufficient to quantify over all quantum graphs, and it is hence well-motivated to consider parameters for quantum graphs at the level of channels. 

In \cref{Sec:QMA(2)}, we define approximate $k$-cliques and approximate $k$-independent sets at the level of quantum channels. For convenience we restate these definitions here. To keep the presentation concise, we only include the case $k=2$. In fact, restricting $k=2$ is sufficient for all of our complexity results in this paper, and in \cref{sec:a-k-cliques} we describe an explicit reduction of the corresponding decision problems from $k>2$ to $k=2$.

\begin{definition}\label{def:qClq/Ind:k=2}
    Let $\alpha \in [0,1]$. We say a pair of orthogonal states $\rho_1, \rho_2$ are an \emph{$(\alpha,2)$-clique} for $\Phi$ if $\Tr\squ*{\Phi(\rho_1)\Phi(\rho_2)}\geq\alpha$, and an \emph{$(\alpha,2)$-independent set}   if, $\Tr\squ*{\Phi(\rho_1)\Phi(\rho_2)}\leq 1 - \alpha$.  
\end{definition}

We remark on a few details motivating \cref{def:qClq/Ind:k=2}. Firstly, we require the inputs state to be exactly orthogonal. This agrees well with the classical noisy channel setting, which considers sets of distinct inputs $x_1 \neq x_2$. Second, we express the probability that the outputs of the channel are confused by means of the overlap under the Hilbert-Schmidt inner product $\Tr\squ*{\Phi(\rho_1)\Phi(\rho_2)}$. This also agrees well with classical notions of confusability since, if the output states are classical, \emph{i.e.} $\Phi(\rho_1)=\sum_x\lambda_x\ketbra{x}$ and $\Phi(\rho_2)=\sum_x\mu_x\ketbra{x}$, then the overlap $\sum_x\lambda_x\mu_x$ becomes simply the probability of the channels providing the same input on independent evaluations. Further, using the overlap as a measure of confusability is also natural in the computational setting, as the overlap can be efficiently sampled from the output states by means of the swap test, unlike other measures of closeness such as the trace distance. We note that we do not require the input states to be pure, though we can always assume the states in a clique or independent set are pure by convexity. Neither do we assume the output states are pure, as we quantify over general, not necessarily unitary, channels. Lastly, the definition reduces to the exact case for particular values of the parameters: $(1,k)$-independent sets of $\Phi$ are $k$-independent sets of $\mc{S}_\Phi$, and $(1,k)$-cliques are $k$-independent sets of $\mc{S}_\Phi^\perp$. 

What may be seen as a flaw of our definition is that it is possible for a mixed state $\rho$ to have small overlap with itself. This observation may seem somewhat contradictory to our measure of confusability, as this means that two inputs that have the same output state may nevertheless be close to independent, or \emph{pseudo-independent}. However, by seeing mixed states as samples from an ensemble of pure states, what this really means is that, upon taking two independent samples from the ensemble, they are unlikely to be equal. As we discuss in \cref{subsec:ProofTechniques}, this property actually plays an important role in our proof approach establishing $\tsf{QMA}(2)$-completeness of the clique problem.

\paragraph{Clique and Independent Set Problems} The \emph{quantum channel $2$-clique problem} with completeness $c$ and soundness $s$ is the promise problem with yes instances consisting of circuits $C$ such that $\Phi_C$ has a $(c,2)$-clique, and no instances consisting of circuits for which $\Phi_C$ has no $(s,2)$-clique. We always assume $c\geq s$, and in the cases we consider that there is some probability gap between them. We define the \emph{quantum channel $2$-independent set problem} in a parallel way. In \cref{def:qcliqueqISProblems}, we express these problems more formally and extend them to larger $k$. These promise problems are written $\ttt{qClique}(k)_{c,s}$ and $\ttt{qIS}(k)_{c,s}$. In this approach, a verifier given a problem instance $C$ may perform computations by running the circuit $C$ in order to implement channel $\Phi_C$, which determines properties of an underlying quantum graph $\mathcal{S}_{\Phi_{C}}$. The size of a problem instance is then the size of the quantum circuit which is used to run this computation, rather than the size of an explicit description for the operator system $\mathcal{S}_{\Phi_{C}}$.

We also consider the restricted problems where the circuits $C$ are only taken from a specific family of quantum circuits $\mc{C}$, which we write as $\ttt{qClique}(k, \mathcal{C})_{c,s}$ and $\ttt{qIS}(k, \mathcal{C})_{c,s}$. This allows us to determine hardness results with respect to particular families of channels/quantum circuits such as entanglement breaking channels and channels which implement partial traces. Importantly, in \cref{Sec:QMA} we specify a family of circuits for which these problems are  $\tsf{QMA}$-complete.

\paragraph{Revisiting the Classical Case.}
In light of the above discussion, we can revisit problems related to cliques and independent sets of classical graphs. More concretely, we consider the input given as a description of a classical circuit which implements a deterministic or probabilistic (noisy) function $f$ on input set $X$. Note that if $X =\lbrace 0,1 \rbrace^n$, then the confusability graph $G_f$ will be an exponentially sized graph. We then define our decision problems as deciding if the corresponding confusability graph $G_f$ has a clique or independent set of size $k$. In this approach, the size of the problem instance is related to the runtime needed to compute the adjacency relations in $G_f$, rather than the size of an explicit description of $G_f$ itself. We also note that our approach also differs from graph problems for succinctly presented graphs~\cite{GW83}, and to the best of our knowledge has not been studied before from this point of view.

In the case that $f$ is deterministic the confusability graph consists of a disjoint union of complete subgraphs. In this case, deciding if $G_f$ has a clique of size $k=2$ is equivalent to the well studied collision problem for functions. It asks whether, given some function $f:X\rightarrow Y$, is there a pair of elements in the domain $x,y\in X$ with the same image $f(x)=f(y)$ --- a collision. The presumed computational difficulty of this problem on certain classes of functions underlies the security of hash functions, which serve as a primary building block of much of modern cryptography.

More formally, we say a deterministic function $f$ has a \emph{$k$-clique}, or a \emph{$k$-independent set}, if there exists distinct inputs $x_1, \dots, x_k$ such that $f(x_1)= \dots =f(x_k)$ or $f(x_1) \neq \dots  \neq f(x_k)$ respectively. 
The corresponding decision problems, whose instances are classical circuits describing functions, are denoted  $\ttt{Clique}(k)$ and $\ttt{IS}(k)$. We study this case in \cref{subsec:DeterministicCase}.

In the case that the underlying function $f$ is probabilistic, \emph{i.e.} $f$ is a noisy classical channel, the corresponding confusability graph $G_f$ can be more interesting. Indeed, by suitably picking the transition probabilities it can be observed that any graph on a vertex set $V$ can arise as the confusability graph of some noisy channel $f$ with input set $V$. In \cref{subsec:Probabilistic}, we study the complexity of deciding the existence of approximate $k$-cliques and approximate $k$-independent sets for probabilistic functions. We restate the relevant definitions for the case $k=2$ here.
\begin{definition} Let $\alpha\in[0,1]$. A probabilistic function $f:X\rightarrow Y$ has an \emph{$(\alpha,2)$-clique} if there exist distinct $x,x' \in X$ such that $\Pr\squ*{f(x)=f(x')}\geq\alpha,$
and an \emph{$(\alpha,2)$-independent set} if there exist distinct $x, x'\in X$ such that $\Pr\squ*{f(x_i)=f(x_j)}\leq1-\alpha.$
\end{definition}

The \emph{probabilistic $2$-clique problem} and the \emph{probabilistic $2$-independent set problem} are the promise problems that consist of deciding the existence of approximate $2$-cliques or $2$-independent sets, up to some specified promise gap $c-s$. In \cref{subsec:Probabilistic} we extend these definitions to larger $k$ and we write $\ttt{pClique}(k)_{c,s}$ and $\ttt{pIS}(k)_{c,s}$ for the corresponding promise problems.

\subsection{Main results}

Our contributions focus on the complexity of the $k$-clique and $k$-independent set problems across four different settings: the quantum case, a restricted quantum case where we only consider entanglement breaking channels, and we conclude by addressing the classical deterministic and probabilistic cases. We show that deciding $2$-cliques across these settings is complete for the classes $\tsf{QMA(2)}$, $\tsf{QMA}$, $\tsf{NP}$, and $\tsf{MA}$, respectively. Further, we show that the $k$-independent set problems in the deterministic, probabilistic, and restricted quantum cases are complete for the same classes.

In the quantum setting we study the complexity of deciding approximate $k$-cliques ($\ttt{qClique}(k)_{c,s}$) and approximate $k$-independent sets ($\ttt{qIS}(k)_{c,s}$)  for a quantum channel $\Phi$. The inputs for these problems are given as the description of a quantum circuit which implements $\Phi$. In \cref{Sec:QMA(2)}, we show that for $k=2$ both problems are in $\tsf{QMA}(2)$, and establish the completeness of $\ttt{qClique}(2)_{c,s}$. Our hardness result is obtained when quantifying over a specific subset of quantum circuits, $\mc{C}_{\Tr}\oplus\mc{C}_{\Tr}\oplus\mc{C}_{\mathrm{EB}}$, as defined in \cref{def:directSums,def:circuit-direct-sum,def:EntanglementBreaking}. Informally, this corresponds to the set of circuits which implement channels that are probabilistic mixtures of entanglement breaking channels and partial trace maps. This is stronger than hardness of the problem quantified over all channels and implies it.

\begin{theorem}[Quantum Case]\label{thrm:MainMain}
       There exist $c,s:\N\rightarrow(0,1)$ with constant gap such that the clique problem $\ttt{qClique}(2,\mc{C}_{\Tr}\oplus\mc{C}_{\Tr}\oplus\mc{C}_{\mathrm{EB}})_{c,s}$ is $\tsf{QMA}(2)$-complete.
\end{theorem}

In \cref{subsec:NewReduction}, we show that, by slightly extending our proof of hardness of the quantum clique problem for $\tsf{QMA}(2)$, we can show that it is also hard for $\tsf{QMA}(k)$. Hence, we provide an alternate proof that $\tsf{QMA}(k)=\tsf{QMA}(2)$ than the original proof of Harrow and Montanaro~\cite{HM13}.

In \cref{Sec:QMA}, we show that by further restricting the set of channels to a particular subset $\mc{C}_{\mathrm{EB}'}$ of the entanglement breaking channels, we can obtain $\tsf{QMA}$-completeness for $\ttt{qClique}(2,\mathcal{C}_{\mathrm{EB}'})_{c,s}$. This suggests any potential distinction between $\tsf{QMA}$ and $\tsf{QMA}(2)$ may be reflected in the difference between the properties of these channels and those considered above.

\begin{theorem}[Restricted Quantum Case]
    There exist $c,s:\N\rightarrow(0,1)$ with polynomial gap such that $\ttt{qClique}(2,\mc{C}_{\mathrm{EB}'})_{c,s}$ is $\tsf{QMA}$-complete.
\end{theorem}

In \cref{Sec:ClassicalCase}, we review the complexity of deciding $k$-cliques and $k$-independent sets for deterministic functions --- the problems $\ttt{Clique}(k)$ and $\ttt{IS}(k)$. These problems can be viewed as a rephrasing of the well-studied collision problem for functions. Our contribution here is to provide new proofs for $\tsf{NP}$-completeness of these problems and to state them in a way which generalizes to the problems considered in \cref{subsec:Probabilistic,Sec:QMA(2),Sec:QMA}.

\begin{theorem}[Deterministic case]
    $\ttt{Clique}(2)$ and $\ttt{IS}(2)$ are $\tsf{NP}$-complete.
\end{theorem}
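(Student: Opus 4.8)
The plan is to prove membership in $\tsf{NP}$ and $\tsf{NP}$-hardness separately for both problems. Recall the definitions: a deterministic $f:X\to Y$ has a $2$-clique if there exist distinct $x,x'$ with $f(x)=f(x')$ (a collision), and a $2$-independent set if there exist distinct $x,x'$ with $f(x)\neq f(x')$. The inputs are circuits $C$ computing $f$ on $X=\{0,1\}^n$, so $|X|$ is exponential in the instance size but $f$ is efficiently evaluable.

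For membership in $\tsf{NP}$, both problems have short, efficiently checkable witnesses. For $\ttt{Clique}(2)$, the witness is a pair $(x,x')$ of distinct inputs; the verifier runs $C$ twice to confirm $f(x)=f(x')$, which takes time polynomial in the circuit size. For $\ttt{IS}(2)$, the witness is likewise a pair $(x,x')$ with $x\neq x'$ and $f(x)\neq f(x')$, checked the same way. In each case the witness has length $2n = O(\textup{poly}(|C|))$ and verification is polynomial-time, so both are in $\tsf{NP}$.

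For $\tsf{NP}$-hardness, the plan is to reduce from $\tsf{SAT}$ (or $\tsf{Circuit\text{-}SAT}$). I would first handle $\ttt{Clique}(2)$: given a Boolean formula $\varphi$ on $n$ variables, construct a circuit $C$ computing a function $f:\{0,1\}^n\to Y$ designed so that collisions encode satisfying assignments. A clean approach is to let $f(x)$ output a fixed special symbol (say $0$) whenever $\varphi(x)=1$, and output $x$ itself (a unique value, padded into a distinct part of the output range) whenever $\varphi(x)=0$. Then two distinct inputs collide if and only if both satisfy $\varphi$ and map to $0$; hence $f$ has a $2$-clique iff $\varphi$ has at least two satisfying assignments. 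To reduce exactly from satisfiability (one or more assignments), I would add a dummy variable or a canonical ``always-colliding'' input that pairs with any genuine satisfying assignment, so that a single satisfying assignment already produces a collision. The circuit $C$ is clearly constructible in polynomial time from $\varphi$.

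For $\ttt{IS}(2)$, a $2$-independent set exists precisely when $f$ is non-constant, i.e.\ takes at least two distinct values, which for a generic encoding is trivially true and therefore does not directly encode hardness. The main obstacle is thus designing the reduction so that non-constancy is the nontrivial condition: I would build $f$ so that it is constant exactly when $\varphi$ is unsatisfiable and non-constant otherwise. Concretely, let $f(x)=1$ if $\varphi(x)=1$ and $f(x)=0$ if $\varphi(x)=0$, but then arrange (via a dummy input or a fixed baseline value) so that the image always contains $0$; then the image contains a second value iff some assignment satisfies $\varphi$, giving a $2$-independent set iff $\varphi$ is satisfiable. Care is needed to ensure the ``distinct inputs'' requirement is met and that the padding inputs lie in the domain $\{0,1\}^n$ after a standard re-indexing. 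I expect this careful gadget design for $\ttt{IS}(2)$ — ensuring the independent-set condition is equivalent to satisfiability rather than being generically satisfied — to be the trickiest step; the $\tsf{NP}$ membership arguments and the $\ttt{Clique}(2)$ reduction are routine by comparison.
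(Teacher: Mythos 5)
Your proposal is correct and follows essentially the same route as the paper: membership via a pair witness checked by evaluating the circuit twice, and hardness via a dummy-bit gadget --- your patched clique construction is exactly the paper's $f_x(y,b)=(y,\,b\lor V(x,y))$, and your baseline-anchor idea for $\ttt{IS}(2)$ matches the paper's $f_x(y)=(y_1\land V(x,y),\,\lnot y_1\land V(x,y))$, the only cosmetic difference being that you reduce from SAT while the paper reduces directly from an arbitrary $\tsf{NP}$ verifier. One small caution: of your two patches for the clique case, the dummy-variable option is the sound one, since the ``canonical always-colliding input'' needs the same domain re-indexing you mention for $\ttt{IS}(2)$ (otherwise the special input could itself be the unique satisfying assignment and no collision would exist).
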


In \cref{subsec:Probabilistic}, we consider the problems of deciding approximate $k$-cliques ($\ttt{pClique}(k)_{c,s}$) and approximate $k$-independent sets ($\ttt{pIS}(k)_{c,s}$) for probabilistic functions. 

\begin{theorem}[Probabilistic case]
     There exist $c,s:\N\rightarrow(0,1)$ with constant gap such that $\ttt{pClique}(2)_{c,s}$  and $\ttt{pIS}(2)_{c,s}$ are $\tsf{MA}$-complete.
\end{theorem}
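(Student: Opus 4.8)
The plan is to prove containment in $\tsf{MA}$ and $\tsf{MA}$-hardness separately, in both cases reducing everything to the collision probability $\Pr[f(x)=f(x')]=\sum_y N(y|x)N(y|x')$, which is exactly the overlap of the two output distributions that the definitions of approximate cliques and independent sets measure. For containment I would have Merlin send a single pair of distinct inputs $(x,x')$ and have Arthur estimate their collision probability by sampling: he runs the circuit for $f$ on $x$ and on $x'$ with independent coins and checks whether the outputs agree, repeating $\mathrm{poly}(n)$ times and averaging. Since the promise gap $c-s$ is constant, a Hoeffding bound shows polynomially many samples place the estimate within $(c-s)/2$ of the true value except with exponentially small probability. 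Arthur accepts when the estimate lies above $(c+s)/2$ for $\ttt{pClique}(2)_{c,s}$ and below $1-(c+s)/2$ for $\ttt{pIS}(2)_{c,s}$; completeness holds because a yes instance gives Merlin a good pair to name, and soundness holds because a no instance forces every pair, in particular the named one, onto the wrong side of the threshold.

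For hardness I start from a language in $\tsf{MA}$ with verifier $V(x,w;r)$, amplified so that on yes instances some proof is accepted with probability $\ge 1-2^{-n}$ while on no instances every proof is accepted with probability $\le 2^{-n}$; writing $q_w=\Pr_r[V(x,w;r)=1]$, I build in polynomial time a circuit for a probabilistic function on \emph{tagged} proofs $(w,i)$ with $i\in\{0,1\}$. The tag is the device that turns a single good proof into two distinct inputs, which is needed because the definitions quantify over distinct $x\ne x'$. For the clique problem the function samples $r$, runs $V$, outputs a fixed shared symbol $\star$ on acceptance, and outputs the string $r$ on rejection. Then $(w,0)$ and $(w,1)$ collide with probability $q_w^2+(1-q_w)/|R|$, and any two inputs carrying distinct proofs collide with probability $q_wq_{w'}$ up to an $O(1/|R|)$ error, where $|R|$ is exponentially large. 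Hence a good proof yields a pair with collision near $1$, whereas if every proof is bad then every pair collides with probability at most $2^{-2n}+O(1/|R|)$; this realises completeness near $1$ and soundness near $0$, certifying hardness for any fixed constant gap.

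For the independent set problem I dualise the output rule: on acceptance the function emits the tagged value $(i,r)$ and on rejection the shared symbol $\star$. Now two copies of one proof overlap only through $\star$, so they collide with probability exactly $(1-q_w)^2$, which a good proof drives to near $0$ and so produces the required low-overlap pair; conversely, every pair of distinct inputs collides at least through $\star$ with probability $(1-q_w)(1-q_{w'})\ge(1-2^{-n})^2$, so when no proof is good every pair collides with probability near $1$ and no approximate independent set exists. I expect the main obstacle to be the no-instance analysis, where one must bound the collision probability over \emph{all} distinct pairs rather than only the matched copies of a single proof: the whole point of spreading the rejection outputs (for cliques) or the acceptance outputs (for independent sets) over exponentially many distinct strings, while collapsing the complementary case onto one shared symbol, is precisely to keep these cross-proof collisions negligible, respectively forced high. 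A secondary point to get right is that amplification must precede the reduction, so the resulting promise gap is a genuine constant independent of $n$ and matches the gap used for containment.
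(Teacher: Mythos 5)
Your proposal is correct, and its skeleton --- Merlin names a distinct pair, Arthur estimates the collision probability by sampling, and hardness comes from converting the verifier's acceptance probability into a collision probability via a tag bit --- matches the paper. Where you genuinely diverge is in how the crux, cross-proof collisions, is handled. The paper does not amplify at all: for cliques it uses $f_x(y,b)=(y,\,V(x,y)\lor b)$, echoing the proof $y$ into the output, so two inputs carrying distinct proofs collide with probability exactly zero, and the only viable pair $(y,0),(y,1)$ collides with probability exactly $\Pr[V(x,y)=1]$; this yields completeness $2/3$ and soundness $1/3$ with no error analysis and no $O(1/|R|)$ bookkeeping. You instead amplify to error $2^{-n}$ first and spread rejection outputs over the random string $r$, making cross-proof collisions negligible rather than zero; this costs a preprocessing step and an $O(1/|R|)$ error term, but buys a near-$\{0,1\}$ completeness/soundness gap (the paper's gap for $\ttt{pIS}$ is only $2/3$ versus $5/9$). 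For independent sets the paper likewise avoids amplification and an explicit tag by outputting $(y_1\land V(x,y),\,\lnot y_1\land V(x,y))$: the first proof bit plays the role of your tag $i$, and $(0,0)$ plays the role of your shared symbol $\star$. On the containment side, the paper's verifier samples each of $f(x)$ and $f(x')$ exactly once, so its acceptance probability equals the collision probability on the nose, with the constant gap then absorbed by standard $\tsf{MA}$ amplification --- slightly cleaner than your explicit Hoeffding test, which, as a cosmetic point, needs estimation accuracy strictly better than $(c-s)/2$ (say $(c-s)/4$) to clear the midpoint threshold; you should also have Arthur explicitly reject proofs with $x=x'$, since a non-distinct pair can have high self-collision on a no instance. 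Both routes are sound.
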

Finding natural $\tsf{MA}$-complete problems is in itself a active area of research, and few examples are known \cite{BT10,AG19}.

If one begins with a deterministic or classical noisy channel $f$, and considers its inclusion as a quantum channel $\Phi_f$, as described in \cite{Pau16}, then $f$ will have an $(\alpha,k)$-clique if and only if $\Phi_f$ does. Hence, the classical decision problems analysed in \cref{Sec:ClassicalCase} can naturally be viewed as a restriction of the problems in \cref{Sec:QMA}, which in turn are a restriction of the problems considered and \cref{Sec:QMA(2)}. Our results can thus be viewed as a natural hierarchy of problems which are complete for all four of the proof classes $\tsf{NP}$, $\tsf{MA}$, $\tsf{QMA}$, and $\tsf{QMA}(2)$, and whose complexity may be tuned by varying the collection of channels.

In \cref{sec:a-k-cliques}  we provide a direct reductions from clique and independent set problems in the deterministic, probabilistic, and quantum settings from larger $k$ to $k=2$. In \cref{a-alternateproof}, we provide a more straightforward proof establishing $\tsf{QMA}(2)$-hardness for $\ttt{qClique}(2)_{c,s}$ when quantifying over all quantum circuits than the stronger one given in \cref{thrm:MainMain}.

\subsection{Proof techniques}\label{subsec:ProofTechniques}

Our proof techniques were inspired by techniques in self-testing and protocols for delegating quantum computation, such as in \cite{Gri19,BMZ23}. Other recent work of Bassirian, Fefferman, and Marwaha considers different connections between self-testing and $\tsf{QMA}(2)$~\cite{BFM23}. In short, if we know a channel has or does not have cliques over a particular set of states --- for example, the separable states --- then we can combine it with another channel to guarantee that possible cliques among the remaining states are not important. The proofs use two important constructions, which we will comment on here. We also discuss different techniques we use in the proof of $\tsf{QMA}$-completeness of a restricted version of the problem.

\paragraph{Non-unitary circuits and direct sums} First, we devise a new technique to take direct sums of quantum circuits (\cref{def:circuit-direct-sum}). In order to be able to parametrise arbitrary non-unitary channels by means of circuit diagrams, we work with an extended circuit model that appends partial trace and state preparation maps to a basic set of unitary gates. In this way, we extend the universality of a gate set to CPTP maps by means of the Stinespring dilation theorem. At the level of channels, there are a variety of ways to compose the maps to get new ones, but it is not  always immediate how that translates to composition at the level of circuits --- importantly, whether it is computationally efficient. In our constructions, the way we will combine channels is by means of the direct sum: given two channels $\Phi_1$ and $\Phi_2$ and a probability $p$, the direct sum is $p\Phi_1\oplus(1-p)\Phi_2$, the convex combination of the channels $\Phi_1$ and $\Phi_2$ with orthogonal output spaces. Of course, the circuit model is well-adapted for composition and direct products of maps, but not for sums. To get the wanted superposition of circuits, we might add a qubit in state $\sqrt{p}\ket{0}+\sqrt{1-p}\ket{1}$, conditionally act by either $\Phi_1$ or $\Phi_2$ based on this qubit, and then measure the added qubit. This, however, has a technical obstruction: conditioning a non-unitary channel can put the number of output qubits in superposition or probabilistic mixture, which is untenable. Instead, we use the fact that the circuits decompose into the basic gates to separate the unitary from the non-unitary gates. Then, the unitary gates act conditionally on the additional modulating qubit as wanted and the non-unitary gates can be aligned to act the same way for both $\Phi_1$ and $\Phi_2$. In this way, we get a circuit that implements a direct sum of channels of size polynomial (in fact linear) in the size of the circuits describing the original channels. See also \cref{fig:circuit-sum} for an illustration of this construction. We are optimistic that this direct sum construction might be able to find various other applications in quantum information.

\paragraph{Self-testing for cliques} Next, the direct sum construction allows us to port ideas from self-testing over to questions about quantum cliques. The main technical tool we prove to use this technique is \cref{lem:like-5.5}. As noted above, the technique takes the direct sum of two channels in order to expand the set of pairs of states on which some property of one of the channels holds. To illustrate how this works, we outline the proof of $\tsf{QMA}(2)$-hardness of the quantum $2$-clique problem (\cref{thm:clique-qma2-hard}), which is based on this technique. Given some instance $x$ of a $\tsf{QMA}(2)$ promise problem $(Y,N)$, we know that there is some efficiently constructable circuit $\Phi_x$ that accepts with high probability on some separable state if $x$ is a yes instance, and with low probability on every separable state if $x$ is a no instance. In the first step, we construct an entanglement-breaking channel $\Psi_x$ that runs this verification, measures, and outputs a pure state if it accepts and a highly mixed state if it rejects (\cref{lem:algorithm-channel}). Note that this channel acts on one more qubit than $\Phi_x$, which it immediately traces out, in order to have that qubit guarantee orthogonality of any clique. If $\Phi_x$ accepts with high probability, then $\Psi_x$ outputs a state close to pure, and thus, the overlap of two input states differing only in the orthogonality qubit will be high. As such, if $x$ is a yes instance, $\Psi_x$ has a clique in the separable states. On the other hand, if $\Phi_x$ accepts with low probability, then the output state will be far from pure, resulting in low overlap --- as such, a no instance leads to a channel $\Psi_x$ with no cliques among the separable states. The trick however is to lift this property to all the states. To do so, we use the channel $\Phi_1$ constructed in \cref{lem:close-to-separable,lem:separator}, consisting of a direct sum of partial traces on the subregisters we wish to be unentangled. For this channel, we uncover a property reminiscent of self-testing: if states form a near optimal clique, then they must be separable, since partial traces of entangled states are mixed and hence have low overlap. Finally, in the convex combination of $\Phi_1$ and $\Psi_x$, we find via \cref{lem:like-5.5} that if we weight $\Phi_1$ sufficiently, then the testing for separability overcomes the any potential non-separable cliques of $\Psi_x$, and thus that for a no instance the channel has no cliques. Since the weighting can be chosen to preserve polynomial or constant gaps, this proves the hardness of the $2$-clique problem for $\tsf{QMA}(2)$. We believe that this technique is readily extendable to prove other properties of channels; in fact, in \cref{sec:a-k-cliques}, we use $\Phi_1$ as well as a channel that guarantees orthogonality among the tensor terms of a separable state to reduce the quantum $k$-clique problem directly to the $2$-clique problem.

\paragraph{Independent sets} It is interesting to note that, though this proof technique is very useful in determining properties of cliques, it seems not to be well-adapted to the study of independent sets. First, \cref{lem:like-5.5} is used to expand collections of states where there is no clique, whereas this would require us to be able to expand collections of states where there is no independent set. A more serious technical obstruction, however, is that where cliques may arise from essentially one source, independent sets arise from two. The overlap of two states is high if they are approximately equal \emph{and} pure. However, the overlap of two states is low if they are approximately orthogonal \emph{or} highly mixed. The first corresponds to an approximate version of the independent sets for an operator system, but the second does not. This case of low overlap, which we refer to as a \emph{pseudo-independent} set underpins the proofs of hardness of the clique problem, but seems to be an obstruction for the independent set problem. In particular, if we want to guarantee small overlap, it is not required to output orthogonal states --- it suffices to output a highly mixed state. Also, high overlap is critical in guaranteeing separability of states by means of \cref{lem:separator}. This raises an interesting question: does there exist a channel where the independent sets are separable states? Resolving this would serve to illuminate the complexity of the clique problem greatly.

\paragraph{Entanglement-breaking channels and \textsf{QMA}} Finally, our proof of the $\tsf{QMA}$-completeness of the clique problem for a subclass of the entanglement-breaking channels uses significantly different techniques. The main difficulty lies in showing that this problem actually lies in $\tsf{QMA}$. This relies on an interesting result of Brand\~ao~\cite{Bra08}, which shows that the subclass of $\tsf{QMA}(2)$ where the verifier may measure with a Bell measurement --- measure each of the factors in the separable state separately, and then classically compute whether or not to accept --- lies in $\tsf{QMA}$. We find that computing the clique value by running the channel and then doing the swap test is in fact a Bell measurement when the channel is entanglement-breaking. Nevertheless we must still further restrict, because there is another check needed to verify a clique: checking orthogonality of the states. In our proof that the clique problem is in $\tsf{QMA}(2)$, we also do this via the swap test. However, since this swap test is not prefixed by an entanglement-breaking channel, it cannot be evaluated by a Bell measurement. In fact, Harrow and Montanaro~\cite{HM13} show that the swap test cannot even be run by an LOCC measurement, which is a larger class than the Bell measurements. As such, we need a different method to guarantee orthogonality; we work over channels where the output is independent of one of the qubits, \emph{i.e.} that qubit is immediately traced out. This is the same method as we have used to guarantee existence of an orthogonal clique in the $\tsf{QMA}(2)$-hard clique problem. As such, the prover need not even send as a proof a clique where the states are orthogonal --- the verifier may nevertheless be certain that there exists one. Finally, we note that, unlike the $\tsf{QMA}(2)$-completeness result, the $\tsf{QMA}$-completeness result generalises immediately to the independent set problem as well.

\subsection{Relation to other works}\label{subSec:BG08}
Closely related to our work is that given in the unpublished work of Beigi and Shor \cite{BS08}. In their work, a promise problem referred to as the \emph{quantum clique problem} is proven to be complete for $\tsf{QMA}$. This problem is similar to but distinct from $\ttt{qIS}(k,\mc{C})_{c,s}=(Y,N)$ as we define it in \cref{def:qClique/qIS}. Firstly, in terms of terminology, what Beigi and Shor refer to as a \emph{quantum clique} for a channel is more closely related to what is commonly called an independent set in the quantum graph literature. The standardization of this terminology was established after the release of their work and we chose to model our terminology following more current conventions. 

There are also a few more substantial differences which make our results and theirs incomparable in a direct sense. In order to highlight some of these differences we restate their definition below.

\begin{definition}\label{def:BS08}[Definition 2.9 from \cite{BS08}] Quantum clique problem $(\Phi, k, a, b)$

\begin{itemize}

\item {\bf Input} Integer numbers  $n$ and $k$, non-negative real numbers $a, b$ with an inverse
polynomial gap $b-a > n^{-c}$, and an entanglement breaking
channel $\Phi$ that acts on $n$-qubit states.

\item {\bf Promise} Either there exists $\rho^1\otimes \dots \otimes\rho^k$ such that
$\sum_{i,j} tr(S\,\Phi(\rho^i)\otimes\Phi(\rho^j)) \leq a $ or for
any state $\rho^{1,2\dots k}$ we have $\sum_{i,j}
tr(S\,\Phi^{\otimes 2}(\rho^{i,j})) \geq b$.

\item {\bf Output} Decide which one is the case.

\end{itemize}

\end{definition}

One key difference between \cref{def:qClique/qIS} and \cref{def:BS08} is that their promise problem makes explicit reference to the SWAP operator $S$ --- see \cref{subsec:SWAP} for more details on this and the well-known swap test. Even when quantifying over only entanglement-breaking maps, it is not clear that the two problems are the same, since the swap test can fail if the input state is not given as a product state. That is, the promise made in the work of Beigi and Shor is more significant than we use: no instances are those where the swap test rejects with high probability on all, potentially entangled, states and not just on separable states, as we require.

Another difference is with regard to how the classical description for a quantum channel $\Phi$ is given as input to the problems. Beigi and Shor establish $\tsf{QMA}$-completeness for \cref{def:BS08} when considering entanglement-breaking quantum channels given in terms of an operator sum, or Kraus, representation
\begin{equation} 
\Phi(\rho)=\sum_{i=1}^{r} E_i\rho E_i^\dagger,
\end{equation}
whereas in our case we consider inputs given as a description of a quantum circuit $C$.

Despite some differences in our approaches we took much inspiration from the ideas present in \cite{BS08}. Separate from our work here, another active project is investigating the complexity of quantum graph problems using an approach more closely following that given in \cite{BS08}.

\subsection{Open problems}\label{subSec:OpenProblems}

In this section, we outline a selection of open problems motivated by our work.

\paragraph{Deciding Independent Sets in Quantum Graphs}
The existence of graph complementation provides a straightforward way to reduce the problems of deciding cliques and independent sets for graphs. This also extends to the versions of these problems introduced in \cref{Sec:ClassicalCase}. 

In the quantum setting, our proof technique establishing $\tsf{QMA}(2)$-hardness for the clique problem $\ttt{qClique}(k)_{c,s}$ could not readily be extended to the problem $\ttt{qIS}(k)_{c,s}$. As discussed in \cref{subsec:Qgraphs}, there is no canonical way to take the complement of a quantum graph as there is classically. Instead, several differing approaches have been considered in the literature. Even when one considers the different approaches available, it is not clear how to use these maps on the level of operator systems to obtain an efficient reduction between $\ttt{qClique}(k)_{c,s}$ and $\ttt{qIS}(k)_{c,s}$. Indeed, in an informal sense, it would appear that under any appropriate notion of complementation the complement of a quantum graph can be a significantly more complicated object than the original quantum graph. On the one hand, we know that the quantum independent set problem is contained in $\tsf{QMA}(2)$. On the other hand, it is hard for $\tsf{QMA}$. However, we feel it is unlikely that there exists a $\tsf{QMA}$ verification circuit for $\ttt{qIS}(k)_{c,s}$. This is a similar situation to the pure state version of the $\tsf{QMA}$-complete consistency of local density matrices problem, which has long been conjectured to be complete for $\tsf{QMA}(2)$ \cite{BG22}. It is possible that neither of these problems is $\tsf{QMA}$ or $\tsf{QMA}(2)$-complete, and a reduction from one of the two problems to the other would be enlightening.

\paragraph{Deciding Colourings} Famously, Lov\'{a}sz showed a reduction between the $k$-colouring problem and the $3$-colouring problem in ordinary graphs \cite{Lov73}. In this case, the instances are given using an explicit description of the graph. In \cref{Sec:ClassicalCase}, we introduce new variants of the $k$-clique and $k$-independent set problems motivated from the study of noisy channels and confusability graphs. It is straightforward to similarly define a new version of the $k$-colouring problem. We did not investigate the complexity of this problem but suggest it as worthy area of exploration. Since a colouring on exponentially many vertices cannot be described in polynomial time, it is altogether possible that the class where this colouring problem lives is significantly larger than $\tsf{NP}$.

In the quantum setting, various approaches to define a $k$-colouring for a quantum graph have been considered \cite{KM19,Sta16}. Informally, a $k$-colouring of a quantum graph $\mathcal{S} \subseteq \mathbb{M}_k$ consists of a partition of an orthogonal basis for $\C^k$ into independent sets. As in the classical case, for an $n$-qubit quantum channel, this would require specifying a basis for a $2^n$ dimensional space. It seems unlikely that the corresponding problem could be decided in $\tsf{QMA}(2)$. 

\paragraph{Other Graph Problems:} There are a variety of decision problems for graphs which can possibly be extended to the quantum setting via a similar approach to the one we use for the clique and independent set problems. We suggest two more here. Firstly, in our problems, the size $k$ of a clique or independent set is a parameter which defines the problem. One can instead consider the value $k$ to be specified along with the circuit description as an input. In the classical case of graphs described by their adjacency matrices, this takes the complexity of finding a clique or independent set from $\tsf{P}$ to $\tsf{NP}$-hard \cite{Sip12}. One might imagine a similar jump happens in the cases we have considered, from $\tsf{NP}$, $\tsf{MA}$, $\tsf{QMA}$, or $\tsf{QMA}(2)$ to larger classes.

Secondly, one can consider versions of the well-studied graph isomorphism and non-isomorphism problems. In particular, consider a quantum verifier who receives as input a pair of circuits $C_1$ and $C_2$ which implement classical or quantum channels $\Phi_1$ and $\Phi_2$. The corresponding promise problems would then be to decide if the quantum graphs $\mathcal{S}_{\Phi_{1}}$ and $\mathcal{S}_{\Phi_{1}}$ are approximately isomorphic or far from isomorphic. The quantum analogues of these problems may also provide new approaches to zero-knowledge proof protocols in the quantum setting, since graph isomorphism/non-isomorphism played an important role in early work on zero-knowledge proof systems~\cite{GMW91,BOGG88}. The area of zero-knowledge proof systems in the quantum setting has seen a flurry of recent advances~\cite{GSY19arxiv,VZ19arxiv, BG22, CFGS18, BMZ23}, and it would be interesting if the study of quantum graph isomorphisms can provide new approaches to zero-knowledge in some of these settings.

Another possible area of exploration is Ramsey theory for quantum graphs. Ramsey theory is a branch of mathematics initiated by F. Ramsey which examines the existence of ordered substructures arising in graphs of a sufficient size~\cite{R30}. Ramsey theory has some interesting connections with computational complexity, such as studying the computational complexity of computing the Ramsey numbers~\cite{Bur90}. Several of the motivating themes of Ramsey theory have been explored for quantum graphs~\cite{Wea17,KKS20}, and it may be worthwhile to explore connections to complexity inspired by these works that incorporates our approaches for presenting quantum graphs as circuits.

\subsection{Acknowledgements}

We thank David Cui, Joshua Nevin, Seyed Sajjad Nezhadi, Laura Man\v{c}inska, Se-Jin Kim, and Adrian She for lengthy discussions during the early stages of this project. We also thank Adrian She and David Cui for detailed discussions and for sharing extensive notes on \cite{BS08}.

\subsection{Outline}

In \cref{sec:Prelims} we outline the important notation we use, and give some brief background information on complexity theory and aspects of quantum information we require. In \cref{sec:motivation}, we recall the standard definitions of cliques and independent sets for graphs, and connect them to our definitions of these properties for quantum channels. Next, in \cref{Sec:QMA(2)}, we define and study the clique and independent set problems for quantum channels. We show in \cref{sec:qma2-complete} that the quantum $2$-clique problem is $\tsf{QMA}(2)$-complete, and in \cref{subsec:NewReduction} that this completeness implies a new proof that $\tsf{QMA}(k)=\tsf{QMA}(2)$. In \cref{sec:qis}, we discuss the quantum independent set problem for quantum channels. Finally, in \cref{Sec:QMA}, we present a restriction of the quantum clique and independent set problems that is $\tsf{QMA}$-complete. In \cref{Sec:ClassicalCase}, we consider the classical deterministic (\cref{subsec:DeterministicCase}) and probabilistic (\cref{subsec:Probabilistic}) versions of the clique and independent set problems. We show their completeness for the complexity classes $\tsf{NP}$ and $\tsf{MA}$, respectively. We also collect some additional results in appendices: first, in \cref{sec:a-k-cliques}, we give reductions from $k$-clique and independent set problems to the corresponding problem with $k=2$; also, in \cref{a-alternateproof}, we give a simpler but weaker proof of the $\tsf{QMA}(2)$-completeness of the quantum $2$-clique problem.

\section{Preliminaries}\label{sec:Prelims}

\subsection{Notation}
	
	We use italic capitals $H,K,L$ to represent Hilbert spaces, which we always assume to be finite-dimensional. Given a finite set $S$, write $\C^S$ for the $|S|$-dimensional Hilbert space with canonical (computational) basis $\set*{\ket{s}}{s\in S}$. As a special case, we denote the space of one qubit $Q=\C^{\{0,1\}}$. Fix some Hilbert space $H$. Write $H_\perp$ for the superspace $H+\C\ket{\perp}$ where $\norm{\ket{\perp}}=1$ and $\braket{\perp}{\psi}=0$ for all $\ket{\psi}\in H$. For any subspace $K\subseteq H$, write $\mu_K$ for the maximally-mixed state on $K$. Write $\mc{B}(H)$ for the space of linear operators on $H$, write $\mc{P}(H)$ for the space of positive operators, and $\mc{D}(H)$ to be the set of density operators (trace $1$ positive operators). We use quantum state interchangeably with density operator, specifying to pure quantum states when necessary; and we use quantum channel interchangeably with completely positive trace-preserving (CPTP) map.
	
	We denote the set of all bitstrings $\{0,1\}^\ast:=\bigcup_{n=0}^\infty\{0,1\}^n$. Given some $x\in\{0,1\}^\ast$, write $|x|\in\N$ for its length. For a function $f:\N\rightarrow[0,1]$, write $f:\N\rightarrow(0,1)_{\exp}$ to mean that there exist $N,k>0$ such that $2^{-n^k}<f(n)<1-2^{-n^k}$ for all $n\geq N$. Write the set of polynomially-bounded functions $\ttt{poly}=\set*{f:\N\rightarrow\R}{\exists\,k,N\geq 0.\;f(n)\leq n^k\,\forall\,n\geq N}$. For a natural number $n\in\N$, write $[n]:=\{1,2,\ldots,n\}$.
 
	\subsection{Complexity theory}
	
	A \emph{language} is a subset $L\subseteq\{0,1\}^\ast$. An element $x\in L$ is called a yes instance, and an element $x\notin L$ is called a no instance. A \emph{promise problem} is a pair $(Y,N)$ of disjoint subsets $Y,N\subseteq\{0,1\}^\ast$. Here, $x\in Y$ is called a yes instance and $x\in N$ is called a no instance, while $x\notin Y\cup N$ is indeterminate. Promise problems generalise languages in the sense that every language $L$ corresponds to the promise problem $(L,L^c)$. A \emph{complexity class} is a collection of languages or promise problems.
	
	A function $f:\{0,1\}^\ast\rightarrow\{0,1\}^\ast$ is computable in time $T(n)$ if there exists a Turing machine $M$ that, on input $x$, outputs $M(x)=f(x)$ in $T(|x|)$ steps. We say $f$ is polynomial-time computable if there exists a polynomial $p:\N\rightarrow\N$ such that $f$ is computable in time $p(n)$.
	
	A language $L_1$ is \emph{polynomial-time Karp-reducible} to another language $L_2$ if there exists a polynomial-time computable function $f$ such that $x\in L_1$ iff $f(x)\in L_2$. For promise problems, a similar definition holds, the only difference being that the function must map no instances to no instances as well. Given a complexity class $\tsf{C}$, $L$ is \emph{$\tsf{C}$-hard} (under polynomial-time Karp reductions) if every element of $\tsf{C}$ is polynomial-time Karp reducible to $L$. $L$ is \emph{$\tsf{C}$-complete} if additionally $L\in\tsf{C}$.
	
	\subsection{Quantum and classical circuits}
	
	In its usual usage, the circuit model of quantum computation is used to describe unitary channels. To describe unitaries, we fix a (finite) universal gate set $S\in\mc{U}_{n_0}(\C)$ --- in this work, we do not need to specify the choice of gate set, but one can, for example, always assume that it is the Clifford$+T$ gate set $\{\mathrm{CNOT},H,T\}$. We can approximate any unitary to arbitrary precision by a composition of finitely many tensors of the basic gates.
	
	In order to describe all quantum channels, we need to add two non-unitary operations, that change the number of qubits, to the basic gate set:
	\begin{itemize}
		\item The \emph{partial trace} on one qubit $\Tr:\mc{B}(\C^2)\rightarrow\C$;
		\item \emph{State preparation} of one qubit in the computational basis $\C\rightarrow\mc{B}(\C^2)$, $\lambda\mapsto\lambda\ketbra{0}$.
	\end{itemize}
	Due to the Stinespring dilation theorem, the augmented gate set $S\cup\{\Tr,\ketbra{0}\}$ can approximate any quantum channel to arbitrary precision. Stinespring dilation also guarantees that the circuit can be expressed in a canonical form where qubit preparation happens first, followed by unitary gates, followed by partial traces. In fact, given a circuit description, it can be taken to this canonical form efficiently simply by sliding the state preparations backwards and the partial traces forwards. As such, we will assume that the circuits are always in this form.
	
	Note that POVM measurements can in particular be realised in this model. For example, measurement of a qubit in the computational basis can be realised by the following circuit:
	\begin{center}
		\begin{tikzpicture}
			\draw (-1.5,0.5) -- (2.5,0.5) (-0.5,-0.5) node[left]{$\ket{0}$} -- (1.5,-0.5);
			\draw (0.5,-0.5) circle (0.25cm) (0.5,0.5) -- (0.5,-0.75);
			\fill (0.5,0.5) circle (0.05cm);
			\draw (1.5,-0.25) rectangle (2,-0.75) node[pos=0.5]{${\Tr}$};
		\end{tikzpicture}
	\end{center}
	Other measurements can be realised similarly --- diagonalising for PVMs or using Naimark's theorem for general POVMs --- although the circuit representations may consist of many gates. 
	
	For the computational problems we consider, the instances take the form of circuit diagrams describing some quantum channel. It is perhaps simplest from the theory standpoint to present the diagrams as pictures, but it is wholly equivalent to represent the circuits as an encoding of the diagram as a bitstring, which is more natural from a computational standpoint. For a circuit diagram $C$, write $\Phi_C$ for its implementation as a quantum channel. We take the length of the circuit $|C|=\ttt{in}(C)+\ttt{out}(C)+\ttt{gates}(C)$, the sum of the number of input qubits, the number of output qubits, and the number of gates. Note that it is possible to choose an encoding as a bitstring whose length is polynomial in the circuit length described above, and vice versa.
	
	We can also treat classical circuits in a similar way. Here, circuits describe not channels but functions $f:\{0,1\}^m\rightarrow\{0,1\}^n$. But, as in the quantum case, we can choose a finite universal gate set, for example $\{\mathrm{NAND}\}$. Given a classical circuit diagram $C$, write $f_C$ for the function it represents and $|C|=\ttt{in}(C)+\ttt{out}(C)+\ttt{gates}(C)$ for the circuit size.
	
	\subsection{Distances between states}
	
	There are a variety of natural distance measures between quantum states; we use three: the Euclidean norm, the trace norm, and the Frobenius (or Hilbert-Schmidt) norm. The Euclidean distance is only defined on pure states and is simply the natural norm induced by the inner product on the Hilbert space $\norm{\ket{\psi}}=\sqrt{\braket{\psi}}$. The trace norm is defined for all $A\in\mc{B}(H)$ as $\norm{A}_{\Tr}=\frac{1}{2}\Tr\squ{\sqrt{A^\dag A}}$; and the Frobenius norm is defined similarly as $\norm{A}_2=\sqrt{\Tr\squ{A^\dag A}}$. The following simple lemma relates the metrics that these norms induce on the pure states. We use this extensively and without mention.

	\begin{lemma}\label{lem:relate-norms}
		Let $\ket{\psi},\ket{\phi}$ be pure states. Then,
		\begin{align}
			&\norm{\ketbra{\psi}-\ketbra{\phi}}_2=\sqrt{2}\norm{\ketbra{\psi}-\ketbra{\phi}}_{\Tr},\\
			&\norm{\ketbra{\psi}-\ketbra{\phi}}_2\leq\sqrt{2}\norm{\ket{\psi}-\ket{\phi}}.
		\end{align}
	\end{lemma}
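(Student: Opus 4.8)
The plan is to exploit the fact that $P:=\ketbra{\psi}-\ketbra{\phi}$ is a Hermitian operator supported on the (at most) two-dimensional subspace $\mathrm{span}\{\ket{\psi},\ket{\phi}\}$ and is traceless, since $\Tr\ketbra{\psi}=\Tr\ketbra{\phi}=1$. Consequently its only possibly-nonzero eigenvalues form a pair $\lambda,-\lambda$ with $\lambda\geq 0$, every other eigenvalue being zero. First I would record what each norm evaluates to in terms of $\lambda$: directly from the definitions, $\norm{P}_2=\sqrt{\Tr\squ{P^2}}=\sqrt{\lambda^2+\lambda^2}=\sqrt{2}\,\lambda$, while $\norm{P}_{\Tr}=\tfrac12\Tr\squ{\sqrt{P^2}}=\tfrac12(\lambda+\lambda)=\lambda$. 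Comparing the two yields the first identity $\norm{P}_2=\sqrt{2}\norm{P}_{\Tr}$ immediately, with no need to compute $\lambda$ itself; the factor $\sqrt2$ is precisely the mismatch between the sum of squared eigenvalues and the sum of absolute eigenvalues, after accounting for the $\tfrac12$ in the definition of the trace norm.

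For the inequality, I would instead compute both sides explicitly in terms of the overlap $z:=\braket{\psi}{\phi}$. Expanding $P^2$ and taking the trace gives $\norm{P}_2^2=\Tr\squ{P^2}=2\bigl(1-|z|^2\bigr)$, and expanding the inner product gives $\norm{\ket{\psi}-\ket{\phi}}^2=2-2\Re z$. Hence the claimed bound $\norm{P}_2\leq\sqrt2\,\norm{\ket{\psi}-\ket{\phi}}$ is equivalent, after squaring and dividing by $2$, to the scalar inequality $1-|z|^2\leq 2-2\Re z$, i.e. $2\Re z\leq 1+|z|^2$. This final step is elementary: since $\Re z\leq|z|$ and $(1-|z|)^2\geq0$, we get $2\Re z\leq 2|z|\leq 1+|z|^2$, which closes the argument.

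There is essentially no serious obstacle here, and the two things requiring care are bookkeeping ones. The first is to keep track of the factor $\tfrac12$ built into the paper's convention $\norm{A}_{\Tr}=\tfrac12\Tr\squ{\sqrt{A^\dag A}}$, which is exactly what forces the constant to be $\sqrt2$ rather than its reciprocal. The second is the reduction of the spectrum of $P$ to the single pair $\pm\lambda$, which uses both Hermiticity and tracelessness; I would state this structural fact explicitly rather than leave it implicit, since it is what produces the clean closed forms for both norms and makes the first identity fall out without any computation of the overlap.
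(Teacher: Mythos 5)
Your proof is correct and takes essentially the same route as the paper: the equality via the spectral structure of the traceless, rank-at-most-two operator $\ketbra{\psi}-\ketbra{\phi}$ with eigenvalue pair $\pm\lambda$, and the inequality via the overlap $z=\braket{\psi}{\phi}$. Your scalar chain $2\latRe z\leq 2|z|\leq 1+|z|^2$ is just a rearrangement of the paper's chain $1-|z|^2\leq 2(1-|z|)\leq 2(1-\latRe z)$, both resting on $\latRe z\leq|z|$ and $(1-|z|)^2\geq 0$.
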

	
	\begin{proof}
		For the first relation, let $X=\ketbra{\psi}-\ketbra{\phi}$. If $X=0$, then this holds trivially. Else, there exists an eigenvalue $\lambda\neq 0$ of $X$. Since $\mathrm{rank}(X)=2$ and $\Tr(X)=0$, the only nonzero eigenvalues of $X$ are $\lambda$ and $-\lambda$. This gives that $\norm{X}_2=\sqrt{\lambda^2+(-\lambda)^2}=\sqrt{2}|\lambda|$; and $\norm{X}_{\Tr}=\tfrac{1}{2}\parens*{|\lambda|+|-\lambda|}=|\lambda|$.
		
		For the second relation, note that
		\begin{align}
			\begin{split}
				\norm{\ketbra{\psi}-\ketbra{\phi}}_2^2&=2-2\abs*{\braket{\psi}{\phi}}^2\leq2(2-2\abs{\braket{\psi}{\phi}})\\
				&\leq2(2-2\latRe\braket{\psi}{\phi})=2\norm{\ket{\psi}-\ket{\phi}}^2.
			\end{split}
		\end{align}
	\end{proof}

	\subsection{The swap test}\label{subsec:SWAP}
	
	The swap test, first introduced in~\cite{BCWdW01}, provides a computationally efficient way to project on the symmetric subspace of a tensor product Hilbert space --- the $+1$ eigenspace of the swap operator that exchanges the two tensor terms. The swap test proceeds on a state $\rho\in\mc{D}(H\otimes H)$ as follows:
	\begin{enumerate}[1.]
		\item Prepare an auxiliary qubit in the state $\ket{+}$.
		\item Act with the swap operator on $\rho$ conditionally on the auxiliary qubit.
		\item Measure the auxiliary qubit in the Hadamard basis. Accept if the measurement result is $0$ and reject if not.
	\end{enumerate}
	The swap test is illustrated as a quantum circuit \cref{fig:swap-test}.
	
	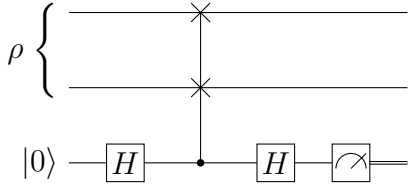
\begin{figure}
		\centering
		\begin{tikzpicture}
			\draw (0,1) -- (4.5,1) (0,0) -- (4.5,0) (0,-1) -- (0.5,-1) (1,-1) -- (2.5,-1) (3,-1) -- (3.5,-1);
			\draw[double] (4,-1) -- (4.5,-1);
			\draw (1.75,1) -- (1.75,-1) (1.625,0.875) -- (1.875,1.125) (1.625,1.125) -- (1.875,0.875) (1.625,-0.125) -- (1.875,0.125) (1.625,0.125) -- (1.875,-0.125);
			\draw (0.5,-1.25) rectangle (1,-0.75) node[pos=0.5]{$H$};
			\draw (2.5,-1.25) rectangle (3,-0.75) node[pos=0.5]{$H$};
			\draw (3.5,-1.25) rectangle (4,-0.75) (3.75,-0.875) arc (90:0:0.2) (3.75,-0.875) arc (90:180:0.2) (3.75,-1.075) -- (3.95,-0.875);
			\fill (1.75,-1) circle (0.05cm);
			\node[left] at (0,0.5){$\rho\;{\Bigg\{}$};
			\node[left] at (0,-1){$\ket{0}$};
		\end{tikzpicture}
		
		\caption{Quantum circuit representation of the swap test.}
		\label{fig:swap-test}
	\end{figure}
	
	Let $\Pi_H:H\otimes H\rightarrow H\otimes H$ be the projector onto the symmetric subspace. Then, the swap test implements the channel $\rho\mapsto\Pi_H\rho\Pi_H\otimes\ketbra{0}+(\Id-\Pi_H)\rho(\Id-\Pi_H)\otimes\ketbra{1}$. It's also useful to note that, if $\rho$ is a separable state $\rho=\rho_1\otimes\rho_2$, then the probability of the swap test accepting is $\frac{1}{2}+\frac{1}{2}\Tr\squ*{\rho_1\rho_2}$.

\section{Graph Parameters for Quantum Channels}\label{sec:motivation}

In this section, we recall the definitions of quantum graphs and their graph parameters. Since every quantum graph can be generated by a quantum channel, we discuss how the clique and independent set graph parameters can be pulled back to the level of quantum channels. This serves to motivate the approximate version of these parameters that we introduce in~\cref{def:qClique/qIS}.

\begin{definition}
    A \emph{quantum graph} is an operator system $S\subseteq\mc{B}(H)$ for $H$ a finite-dimensional Hilbert space, that is a vector subspace closed under the adjoint and containing $I$. 
\end{definition}

As noted in the introduction, any quantum channel generates a quantum graph, sometimes referred to as its quantum confusability graph. For $\Phi:\mc{B}(H)\rightarrow\mc{B}(K)$ with Kraus decomposition $\Phi(\rho)=\sum_{i=1}^n A_i\rho A_i^\ast$, the corresponding quantum graph is
$$S_\Phi=\spn_\C\set*{A_i^\ast A_j}{1\leq i,j\leq n}.$$

\begin{proposition}[{\cite[Proposition 7.18]{Pau16}}]
    Let $S\subseteq\mc{B}(H)$ be a quantum graph. Then, there exists a finite-dimensional Hilbert space $K$ and a quantum channel $\Phi:\mc{B}(H)\rightarrow\mc{B}(K)$ such that~$S=S_\Phi$.
\end{proposition}

However, as for classical confusability  graphs, the quantum channel corresponding to a quantum graph is not unique. Next, we recall the definitions of the independent set and clique graph parameters for a quantum graph.

\begin{definition}[\cite{DSW13,BTW21}]
    Let $S\subseteq\mc{B}(H)$ be an operator system. A collection of orthogonal states $\ket{v_1},\ldots,\ket{v_k}\in H$ form a
    \begin{itemize}
        \item $k$-\emph{independent set} of $S$ if $\ketbra{v_i}{v_j}$ is orthogonal to $S$ for all $i\neq j$;
        \item $k$-\emph{clique} of $S$ if $\ketbra{v_i}{v_j}\in S$ for all $i\neq j$.
    \end{itemize}
\end{definition}

As for the classical case, these graph parameters can also be described in terms of a channel generating $S$. This will allow us to work with quantum graph parameters for quantum channels.

In the case of independent sets, Duan, Severini, and Winter establish that the independent sets for $S$ correspond to states which are perfectly distinguishable after passing through the channel.

\begin{lemma}[\cite{DSW13}]
    Let $\Phi:\mc{B}(H)\rightarrow\mc{B}(K)$ be a quantum channel and $S=S_\Phi\subseteq\mc{B}(H)$ be its quantum confusability graph. Then, pure states $\ket{v_1},\ldots,\ket{v_k}\in H$ are a $k$-independent set of $S$ if and only if the overlap
    \begin{align*}
        \Tr\squ*{\Phi(\ketbra{v_i})\Phi(\ketbra{v_j})}=0
    \end{align*}
    for all $i\neq j$.
\end{lemma}

The connection between cliques for quantum graphs and channels is not as direct, since in general the quantum graph corresponding to a channel does not retain enough information about the original channel. Nevertheless given a quantum graph $S$ one can associate to it a channel $\Phi$ whose behavior determines cliques in the quantum graph $S$. 

\begin{lemma}\label{lem:Construction1}
    Let $S\subseteq\mc{B}(H)$ be an operator system. Then, there exists a Hilbert space $K$, a channel $\Phi:\mc{B}(H)\rightarrow\mc{B}(K)$, and a constant $c>0$ (depending on the operator system) such that orthogonal states $\ket{u},\ket{v}\in H$ satisfy $\ketbra{u}{v}\in S$ if and only if
    $$\Tr\squ*{\Phi(\ketbra{u})\Phi(\ketbra{v})}\geq c.$$
\end{lemma}

\begin{proof}
    First we construct the requisite channel, following the construction of Paulsen~\cite{Pau16}. Write $d=\dim H$ and $n=\dim S$. Let $X_1,\ldots,X_{n-1}\in\mc{B}(H)$ be hermitian operators such that $\{I/\sqrt{d},X_1,\ldots,X_n\}$ is an orthonormal basis of $S$. Then, set $H=\sum_{i=1}^{n-1}X_i\otimes\parens*{\ketbra{i}{i+1}+\ketbra{i+1}{i}}$. $H$ is hermitian, so there exists $r>0$ such that $rI+H\geq 0$. Let $C\in\mbb{M}_n(\mc{B}(H))$ be such that $C^\ast C=\frac{1}{nr}(rI+H)$ and let $\{C_i\}_{i=1}^n$ be the $dn\times d$ block columns of $C$. Define $\Phi:\mc{B}(H)\rightarrow\mbb{M}_n(\mc{B}(H))$ as $\Phi(\rho)=\sum_iC_i\rho C_i^\ast$. Note that $\Phi$  is a quantum channel as $\sum_i C_i^\ast C_i=\sum_i\frac{I}{n}=I$. Also, $S=S_\Phi$ since $$C_i^\ast C_j=\begin{cases}I/n&i=j\\X_i/(nr)&j=i+1\\X_{i+1}/(nr)&j=i-1\\0&\text{ else}\end{cases},$$
    giving $S_\Phi=\spn\{C_i^\ast C_j\}=S$. Also, take $c=\frac{2}{(nr)^2}$.

    Let $\ket{u},\ket{v}\in H$ be orthogonal pure states. Then, the overlap
    \begin{align*}
        \Tr\squ*{\Phi(\ketbra{u})\Phi(\ketbra{v})}&=\sum_{i,j}\Tr\squ*{C_i\ketbra{u}C_i^\ast C_j\ketbra{v}C_j^\ast}=\sum_{i,j}\abs*{\Tr\parens*{C_i^\ast C_j\ketbra{v}{u}}}^2\\
        &=\frac{2}{(nr)^2}\sum_{i}\abs*{\Tr\parens*{X_i\ketbra{v}{u}}}^2=c\norm*{\mathrm{proj}_S\ketbra{u}{v}}^2_2,
    \end{align*}
    the length of the orthogonal projection onto $S$ with respect to the Hilbert-Schmidt inner product. Then, if $\ketbra{u}{v}\in S$, $\norm*{\mathrm{proj}_S\ketbra{u}{v}}^2_2=1$ and if $\ketbra{u}{v}\notin S$, $\norm*{\mathrm{proj}_S\ketbra{u}{v}}^2_2<1$, giving that $\Tr\squ*{\Phi(\ketbra{u})\Phi(\ketbra{v})}\geq c$ if and only if $\ketbra{u}{v}\in S$.
\end{proof}

We use~\cref{lem:Construction1} to motivate the notion of a clique (\emph{cf.} independent set) for a quantum channel $\Phi$, as being a collection of orthogonal states with sufficiently high average (\emph{cf.} low) overlap after passing through the channel. 

\begin{definition}\label{def:qClique/qIS} Let $H, K$ be Hilbert spaces and let $k\in\N$. 
    \begin{itemize}
        \item A quantum channel $\Phi:\mc{B}(H)\rightarrow\mc{B}(K)$ has an \emph{$(\alpha,k)$-clique} if there exist orthogonal states $\rho_1,\ldots,\rho_k\in\mc{D}(H)$ such that
        \begin{align}
            \frac{2}{k(k-1)}\sum_{1\leq i<j\leq k}\Tr\squ*{\Phi(\rho_i)\Phi(\rho_j)}\geq\alpha.
        \end{align}

        \item A quantum channel $\Phi:\mc{B}(H)\rightarrow\mc{B}(K)$ has an \emph{$(\alpha,k)$-independent set} if there exist orthogonal states $\rho_1,\ldots,\rho_k\in\mc{D}(H)$ such that
        \begin{align}
            \frac{2}{k(k-1)}\sum_{1\leq i<j\leq k}\Tr\squ*{\Phi(\rho_i)\Phi(\rho_j)}\leq1-\alpha.
        \end{align}
    \end{itemize}
\end{definition}

\section{The Complexity of the Clique Problem}\label{Sec:QMA(2)}

\subsection{The quantum clique problem is \textsf{QMA}(2)-complete}\label{sec:qma2-complete}

In this section, we introduce and study the complexity of the promise problem of deciding the existence of an approximate independent set/clique for a quantum channel, given as a circuit.

\begin{definition}\label{def:qcliqueqISProblems}
    Let $\mc{C}$ be some collection of quantum circuits, $c,s:\N\rightarrow[0,1]$, and $k\in\N$. The \emph{quantum channel $k$-clique problem} on $\mc{C}$ with completeness $c$ and soundness $s$ is the promise problem $\ttt{qClique}(k,\mc{C})_{c,s}=(Y,N)$ with
    \begin{align}
    \begin{split}
        &Y=\set*{\text{quantum circuits }C}{C\in\mc{C},\,\Phi_C\text{ has a $(c(|C|),k)$-clique}},\\
        &N=\set*{\text{quantum circuits }C}{C\in\mc{C},\,\Phi_C\text{ has no $(s(|C|),k)$-clique}},
    \end{split}
    \end{align}
    
    The \emph{quantum channel $k$-independent set problem} on $\mc{C}$ with completeness $c$ and soundness $s$ is the promise problem $\ttt{qIS}(k,\mc{C})_{c,s}=(Y,N)$ with
    \begin{align}
    \begin{split}
        &Y=\set*{\text{quantum circuits }C}{\Phi_C\in\mc{C},\,\Phi_C\text{ has a $(c(|C|),k)$-independent set}},\\
        &N=\set*{\text{quantum circuits }C}{\Phi_C\in\mc{C},\,\Phi_C\text{ has no $(s(|C|),k)$-independent set}},
    \end{split}
    \end{align}

    Write $\ttt{qClique}(k)_{c,s}$ and $\ttt{qIS}(k)_{c,s}$ for the promise problems with respect to all circuits.
\end{definition}

The main results of this section are to show that all of the above problems are in $\tsf{QMA}(2)$ for $c,s:\N\rightarrow(0,1)_{\exp}$ with polynomial gap, and to show that the quantum $2$-clique problem for some class of channels is in fact $\tsf{QMA}(2)$-hard. First, we recall the formal definition of $\tsf{QMA}(2)$.

\begin{definition}
    Let $c,s:\N\rightarrow[0,1]$. A promise problem $Y,N\subseteq\{0,1\}^\ast$ is in $\tsf{QMA}(2)_{c,s}$ if there exist polynomials $p,q:\N\rightarrow\N$ and a Turing machine $V$ with one input tape and one output tape such that
    \begin{itemize}
        \item For all $x\in\{0,1\}^\ast$, $V$ halts on input $x$ in $q(|x|)$ steps and outputs the description of a quantum circuit from $2p(|x|)$ qubits to one qubit.

        \item For all $x\in Y$, there exist states $\rho,\sigma\in\mc{D}(Q^{\otimes p(|x|)})$ such that $\braket{1}{\Phi_{V(x)}(\rho\otimes\sigma)}{1}\geq c(|x|)$.

        \item For all $x\in N$ and $\rho,\sigma\in\mc{D}(Q^{\otimes p(|x|)})$, $\braket{1}{\Phi_{V(x)}(\rho\otimes\sigma)}{1}\leq s(|x|)$.
    \end{itemize}
\end{definition}

We know that whenever $c,s:\N\rightarrow(0,1)_{\exp}$ and have polynomial gap, then $\tsf{QMA}(2)_{c,s}=\tsf{QMA}(2)_{\frac{2}{3},\frac{1}{3}}=:\tsf{QMA}(2)$, as for $\tsf{MA}$ and $\tsf{QMA}$. Due to \cite{HM13}, we also know that $\tsf{QMA}(P)=\tsf{QMA}(2)$ for $P:\N\rightarrow\N$ any polynomial number of provers.

\begin{lemma}\label{lem:in-qma2}
    Let $\mc{C}$ be any set of quantum circuits and let $c,s:\N\rightarrow[0,1]$. Then, the problem $\ttt{qClique}(2,\mc{C})_{c,s}\in\tsf{QMA}(2)_{c',s'}$, where $c'=\frac{1}{2}+(c-s)\frac{c}{8}$ and $s'=\frac{1}{2}+(c-s)\frac{c+s}{16}$
\end{lemma}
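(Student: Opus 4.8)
The plan is to construct a $\tsf{QMA}(2)$ verifier that, on instance $C$, expects the prover to send the two purported clique states as the unentangled registers $\rho,\sigma\in\mc{D}(H)$, and then simultaneously checks that they are orthogonal and that their images under $\Phi_C$ have large overlap. Both checks are swap tests. Concretely, the verifier prepares a coin qubit in $\sqrt{p}\ket{0}+\sqrt{1-p}\ket{1}$ with $p=(c-s)/4$, measures it, and branches classically on the outcome: on outcome $0$ it applies $\Phi_C$ to each register and runs the swap test on the two outputs $\Phi_C(\rho)\otimes\Phi_C(\sigma)$, accepting iff the test accepts; on outcome $1$ it runs the swap test directly on $\rho\otimes\sigma$ and accepts iff the test \emph{rejects}. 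Measuring the coin first lets the branch be classically controlled, so no channel is coherently controlled and only one copy of each register is consumed per run; the whole procedure is a single channel to one qubit, as required. Since the registers are promised to be a product state, the swap-test acceptance formula from \cref{subsec:SWAP} gives
\[
P_{\mathrm{acc}}(\rho,\sigma)=\frac12+\frac{p}{2}\,\Tr\squ*{\Phi_C(\rho)\Phi_C(\sigma)}-\frac{1-p}{2}\,\Tr\squ*{\rho\sigma}.
\]

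For completeness, if $C$ is a yes instance the prover sends a $(c,2)$-clique $\rho_1,\rho_2$. These are orthogonal, so $\Tr\squ*{\rho_1\rho_2}=0$, while $\Tr\squ*{\Phi_C(\rho_1)\Phi_C(\rho_2)}\geq c$; hence $P_{\mathrm{acc}}\geq\frac12+\frac{pc}{2}=\frac12+(c-s)\tfrac{c}{8}=c'$.

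Soundness is the main work. First, since $P_{\mathrm{acc}}$ is affine in each argument separately, its supremum over $\mc{D}(H)\times\mc{D}(H)$ is attained at a pair of pure states, so it suffices to bound $P_{\mathrm{acc}}$ for $\rho=\ketbra{\psi}$, $\sigma=\ketbra{\phi}$. Set $t=|\braket{\psi}{\phi}|^2$ and $u=\Tr\squ*{\Phi_C(\ketbra{\psi})\Phi_C(\ketbra{\phi})}$. The key estimate is $u< s+2\sqrt{t}$: let $\ket{\chi}$ be the normalised component of $\ket{\phi}$ orthogonal to $\ket{\psi}$, so $|\braket{\chi}{\phi}|^2=1-t$ and, by \cref{lem:relate-norms}, $\norm{\ketbra{\phi}-\ketbra{\chi}}_{\Tr}=\sqrt{t}$. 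Since $\ket{\psi}\perp\ket{\chi}$ and $C$ is a no instance, $\Tr\squ*{\Phi_C(\ketbra{\psi})\Phi_C(\ketbra{\chi})}<s$; combining trace-norm contractivity of $\Phi_C$ with the trace-duality inequality and $\norm{\Phi_C(\ketbra{\psi})}_{\mathrm{op}}\leq1$ gives
\[
u\le \Tr\squ*{\Phi_C(\ketbra{\psi})\Phi_C(\ketbra{\chi})}+2\norm{\Phi_C(\ketbra{\phi})-\Phi_C(\ketbra{\chi})}_{\Tr}< s+2\sqrt{t}.
\]
Substituting and maximising $\tfrac12+\tfrac{p}{2}(s+2\sqrt t)-\tfrac{1-p}{2}t$ over $t\in[0,1]$ (the maximum is at $\sqrt t=\tfrac{p}{1-p}$, with value $\tfrac{p^2}{2(1-p)}$) yields $P_{\mathrm{acc}}\le\frac12+\frac{ps}{2}+\frac{p^2}{2(1-p)}$. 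Finally, $p=(c-s)/4\le\tfrac14<\tfrac12$ makes $\frac{p^2}{2(1-p)}\le p^2=\frac{(c-s)^2}{16}$, so $P_{\mathrm{acc}}\le\frac12+\frac{(c-s)s}{8}+\frac{(c-s)^2}{16}=\frac12+(c-s)\tfrac{c+s}{16}=s'$, as claimed.

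I expect the soundness argument to be the crux — in particular the estimate $u<s+2\sqrt t$, which captures that a nearly- but not exactly-orthogonal pair can forge an overlap only slightly above $s$, and the bookkeeping verifying that the single choice $p=(c-s)/4$ produces \emph{simultaneously} the stated $c'$, the stated $s'$, and a gap $c'-s'=\tfrac{(c-s)^2}{16}$ that remains inverse-polynomial. The reduction to pure states by bilinearity is what makes the orthogonalisation available; the only mild technical points are realising the biased coin (computing $p$ from $c,s$ and approximating the rotation to exponential precision within the gate set) and matching register sizes so that the prover's $p(|C|)$-qubit messages are valid inputs to $\Phi_C$.
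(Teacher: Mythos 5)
Your proposal is correct and follows essentially the same route as the paper's proof: a verifier that randomly interleaves a swap test on the proof registers (accepting on rejection, to penalise non-orthogonality) with a swap test on the channel outputs, a reduction to pure proofs by bilinearity, the decomposition of $\ket{\phi}$ into components along and orthogonal to $\ket{\psi}$ yielding a perturbation bound of the form $\Tr\left[\Phi(\ketbra{\psi})\Phi(\ketbra{\phi})\right]\leq s+O(\sqrt{t})$, and a quadratic maximisation over the residual overlap followed by tuning the branch probability. The only differences are bookkeeping --- you weight the channel-test branch by $p=(c-s)/4$ and use the factor-$2$ trace-duality bound where the paper weights the orthogonality branch by $1-\frac{c-s}{2}$ and uses a factor-$1$ bound, and you make explicit the classically-controlled branching after measuring the coin --- and both analyses land on exactly the stated $c'$ and $s'$.
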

Thus, if $c,s:\N\rightarrow(0,1)_{\exp}$ have polynomial gap, then $c'-s'=\frac{(c-s)^2}{16}\in\frac{1}{\mathrm{poly}}$, giving that $\ttt{qClique}(2,\mc{C})_{c,s}\in\tsf{QMA}(2)$.

\begin{proof}
    Suppose the verifier receives a quantum circuit $C$. Let $\Phi=\Phi_C$. The verifier expects to receive a proof $\rho\otimes\sigma\in \mc{D}(Q^{\otimes 2\ttt{in}(C)})$, and then effects the following procedure.
    \begin{enumerate}[1.]
        \item The verifier samples a random bit $b$, which is $0$ with probability $p$ (to be specified later).

        \item If $b=0$, the verifier runs the swap test on $\rho\otimes\sigma$. If it fails, she accepts and outputs $1$, and if it passes she rejects and outputs $0$.

        \item If $b=1$, the verifier acts with $\Phi$ to get $\Phi(\rho)\otimes\Phi(\sigma)$. Then, she runs the swap test on this state, and outputs $1$ if it passes and $0$ if it fails.
    \end{enumerate}
    Now, suppose $C\in Y$. Then, there exists a $(c(|C|),2)$-clique $\rho,\sigma\in\mc{D}(Q^{\ttt{in}(C)})$, so the prover provides $\rho\otimes\sigma$ to the verifier. If $b=0$, the probability of accepting is $\frac{1}{2}-\frac{1}{2}\Tr(\rho\sigma)=\frac{1}{2}$, as $\rho$ and $\sigma$ are orthogonal; if $b=1$, since $\rho$ and $\sigma$ form a clique, the probability of accepting is $\frac{1}{2}+\frac{1}{2}\Tr\squ*{\Phi(\rho)\Phi(\sigma)}\geq\frac{1}{2}+\frac{c(|C|)}{2}$. All together, the acceptance probability is at least $c'(|C|)=p\frac{1}{2}+(1-p)\parens*{\frac{1}{2}+\frac{c(|C|)}{2}}=\frac{1}{2}+(1-p)\frac{c(|C|)}{4}$. On the other hand, suppose that $C\in N$. Then, for all orthogonal $\rho,\sigma\in\mc{D}(Q^{\ttt{in}(C)})$, $\Tr\squ*{\Phi(\rho)\Phi(\sigma)}\leq s(|C|)$. Without loss of generality, we may assume that the verifier receives a pure state proof $\rho\otimes\sigma=\ketbra{\psi}\otimes\ketbra{\phi}$. Then, if $b=0$, the probability of accepting is $\frac{1}{2}-\frac{1}{2}|\braket{\psi}{\phi}|^2$. There exists a state $\ket{\phi'}$ orthogonal to $\ket{\psi}$ such that $\ket{\phi}=\alpha\ket{\psi}+\beta\ket{\phi'}$ with $|\alpha|^2=|\braket{\psi}{\phi}|^2$. Now, note that
    \begin{align}
        \norm{\ketbra{\phi}-\ketbra{\phi'}}_{\Tr}=\frac{1}{\sqrt{2}}\norm{\ketbra{\phi}-\ketbra{\phi'}}_2=\sqrt{\frac{1}{2}\parens*{2-2|\beta|^2}}=|\braket{\phi}{\psi}|.
    \end{align}
    So, if $b=1$, the probability of accepting is
    \begin{align}
    \begin{split}
        \frac{1}{2}+\frac{1}{2}\Tr\squ*{\Phi(\ketbra{\psi})\Phi(\ketbra{\phi})}&\leq\frac{1}{2}+\frac{1}{2}\Tr\squ*{\Phi(\ketbra{\psi})\Phi(\ketbra{\phi'})}+\frac{1}{2}\norm{\ketbra{\phi}-\ketbra{\phi'}}_{\Tr}\\
        &\leq\frac{1}{2}+\frac{s(|C|)}{2}+\frac{|\braket{\psi}{\phi}|}{2}.
    \end{split}
    \end{align}
    Thus, the total probability of accepting is at most
    \begin{align}
    \begin{split}
        s'(|C|)&=\frac{1}{2}+(1-p)\frac{s(|C|)}{4}+\frac{1}{2}\parens*{(1-p)|\braket{\psi}{\phi}|-p|\braket{\psi}{\phi}|^2}\\
        &\leq\frac{1}{2}+(1-p)\frac{s(|C|)}{4}+\frac{(1-p)^2}{8p},
    \end{split}
    \end{align}
    by maximising over $|\braket{\phi}{\psi}|\in\R$. Now taking $p=1-\frac{c(|C|)-s(|C|)}{2}$, we get $c'=\frac{1}{2}+(c-s)\frac{c}{8}$ and $s'=\frac{1}{2}+(c-s)\frac{s}{8}+\frac{(c-s)^2}{16(2-(c-s))}\leq\frac{1}{2}+(c-s)\frac{s}{8}+\frac{(c-s)^2}{16}=\frac{1}{2}+(c-s)\frac{c+s}{16}$
\end{proof}

Next, we want to show that the quantum clique problem is in fact $\tsf{QMA}(2)$-hard as well. In fact, we show the stronger property that a subclass of all channels achieves this hardness. First, we introduce this subclass.

\begin{definition}\label{def:EntanglementBreaking}
    A quantum channel $\Phi:\mc{B}(H)\rightarrow\mc{B}(K)$ is \emph{entanglement-breaking} if there exists a POVM $\{M_i\}_i\subseteq\mc{P}(H)$ and a set of states $\{\sigma_i\}_i\subseteq\mc{P}(K)$ such that $\Phi(\rho)=\sum_i\Tr\squ*{M_i\rho}\sigma_i$.
    
    Write $\mc{C}_{\mathrm{EB}}$ for the set of quantum circuits representing entanglement-breaking channels. Write $\mc{C}_{\Tr}$ for the collection of circuits that are partial traces on some of the qubits.
\end{definition}

The name \emph{entanglement-breaking} for these maps arises since, if $\Phi$ is entanglement-breaking, then the output state of $\Id\otimes\Phi$ is always separable. These are the only maps with the property. For more on entanglement-breaking channels, see for example~\cite{HSR03,Wat18}.

The circuits we are interested in combine the above two restricted classes of channels by means of a direct sum. At the level of channels, we mean to use channels of the form $p\Phi_{\Tr}\oplus(1-p)\Phi_{\mathrm{EB}}$. To do so, we need a way to form the direct sum of two quantum circuits.

\begin{definition}\label{def:circuit-direct-sum}
	Let $C_1$ and $C_2$ be circuits with $\ttt{in}(C_1)=\ttt{in}(C_2)$ and $p\in[0,1]$. Let $P_i$ be the number of qubits prepared by $C_i$, and let $T_i=\ttt{in}(C_i)+P_i-\ttt{out}(C_i)$ be the number of qubits traced out by $C_i$. We define the \emph{direct sum circuit} $C_1\oplus_p C_2$ as the circuit constructed as follows
	\begin{enumerate}[(i)]
		\item The circuit prepares a qubit $R$ in the state $\sqrt{p}\ket{0}+\sqrt{1-p}\ket{1}$.
		
		\item The circuit prepares $\max\{P_1,P_2\}+|\ttt{out}(C_1)-\ttt{out}(C_2)|$ qubits in state $\ket{0}$.
		
		\item The circuit acts with all the unitary gates of $C_1$ conditioned on $R$. Then, it swaps, conditionally on $R$ as well, the qubits of $C_1$ that would be traced out to the last positions.
		
		\item The circuit flips $R$ with $X$ and acts in the same way as above for $C_2$ conditioned on $R$, before flipping back with $X$.
		
		\item The circuit traces out the last $\max\{T_1,T_2\}$ qubits.
		
		\item The circuit measures $R$ in the computational basis.
	\end{enumerate}
\end{definition}

Note that the state $\sqrt{p}\ket{0}+\sqrt{1-p}\ket{1}$ cannot be generated exactly by finitely many gates from the gate set for all but countably many values of $p$. However, for any $p$, polynomially-many gates (in the circuit sizes) can be used to construct an exponentially-good approximation to the state, which we use to keep the circuits of reasonable length. See also \cref{fig:circuit-sum} for an example of the implementation of the above definition.
	
	\begin{figure}
		\centering
		\begin{tikzpicture}
			\node at (-4,-0.5){$C_1=$};
			\node at (-1.25,0.25){$\vdots$};
			\node at (-0.25,-1){$\vdots$};
			\node at (3.25,0.25){$\vdots$};
			\node at (2.25,-1){$\vdots$};
			\node at (1,-0.5){$\cdots$};
			\node at (-2.5,0.125){$\ttt{in}(C_1)\,{\Big\{}$};
			\node at (-2,-1.125){$P_1\,{\Big\{}$};
			\node at (4.5,0.125){${\Big\}}\ttt{out}(C_1)$};
			\node at (4,-1.125){${\Big\}}T_1$};
			\draw (-1.5,0.5) -- (0.625,0.5) (1.375,0.5) -- (1.5,0.5) (2,0.5) -- (3.5,0.5) (-1.5,-0.25) -- (0,-0.25) (0.5,-0.25) -- (0.625,-0.25) (1.375,-0.25) -- (3.5,-0.25) (-0.5,-0.75) node[left]{$\ket{0}$} -- (0,-0.75) (0.5,-0.75) -- (0.625,-0.75) (1.375,-0.75) -- (2.5,-0.75) (-0.5,-1.5) node[left]{$\ket{0}$} -- (0.625,-1.5) (1.375,-1.5) -- (2.5,-1.5);
			\draw (0,-0.125) rectangle (0.5,-0.875) node[pos=0.5]{$U_1$};
			\draw (1.5,0.25) rectangle (2,0.75) node[pos=0.5]{$U_k$};
			\draw (2.5,-0.5) rectangle (3,-1) node[pos=0.5]{${\Tr}$};
			\draw (2.5,-1.25) rectangle (3,-1.75) node[pos=0.5]{${\Tr}$};
		\end{tikzpicture}
		
		\vspace{0.5cm}
	
		\begin{tikzpicture}
			\node at (-4,-0.5){$C_2=$};
			\node at (-1.25,0.25){$\vdots$};
			\node at (-0.25,-1){$\vdots$};
			\node at (3.25,0.25){$\vdots$};
			\node at (2.25,-1){$\vdots$};
			\node at (1,-0.5){$\cdots$};
			\node at (-2.5,0.125){$\ttt{in}(C_2)\,{\Big\{}$};
			\node at (-2,-1.125){$P_2\,{\Big\{}$};
			\node at (4.5,0.125){${\Big\}}\ttt{out}(C_2)$};
			\node at (4,-1.125){${\Big\}}T_2$};
			\draw (-1.5,0.5) -- (0,0.5) (0.5,0.5) -- (0.625,0.5) (1.375,0.5) -- (3.5,0.5) (-1.5,-0.25) -- (0.625,-0.25) (1.375,-0.25) -- (1.5,-0.25) (2,-0.25) -- (3.5,-0.25) (-0.5,-0.75) node[left]{$\ket{0}$} -- (0.625,-0.75) (1.375,-0.75) -- (2.5,-0.75) (-0.5,-1.5) node[left]{$\ket{0}$} -- (0.625,-1.5) (1.375,-1.5) -- (2.5,-1.5);
			\draw (0,0.675) rectangle (0.5,-0.125) node[pos=0.5]{$V_1$};
			\draw (1.5,-0.5) rectangle (2,0) node[pos=0.5]{$V_l$};
			\draw (2.5,-0.5) rectangle (3,-1) node[pos=0.5]{${\Tr}$};
			\draw (2.5,-1.25) rectangle (3,-1.75) node[pos=0.5]{${\Tr}$};
		\end{tikzpicture}
		
		\vspace{0.5cm}
	
		\begin{tikzpicture}
			\node at (-3,0){$C_1\oplus_p C_2=$};
			\node at (-1.25,0.25){$\vdots$};
			\node at (-0.25,-1){$\vdots$};
			\node at (7.75,0.25){$\vdots$};
			\node at (2.75,-1){$\vdots$};
			\node at (6.75,-1){$\vdots$};
			\node at (1.5,-0.5){$\cdots$};
			\node at (4,-0.5){$\cdots$};
			\draw (-0.5,1) node[left]{$\ket{0}$} -- (1.125,1) (1.875,1) -- (3.625,1) (4.375,1) -- (7,1) (-0.5,1.5) node[left]{$\ket{0}$} -- (0,1.5) (0.5,1.5) -- (1.125,1.5) (1.875,1.5) -- (2.5,1.5) (3,1.5) -- (3.625,1.5) (4.375,1.5) -- (5.5,1.5) (6,1.5) -- (8,1.5);
			\draw (-1.5,0.5) -- (1.125,0.5) (1.875,0.5) -- (2,0.5) (2.5,0.5) -- (3,0.5) (3.5,0.5) -- (3.625,0.5) (4.375,0.5) -- (8,0.5) (-1.5,-0.25) -- (0.5,-0.25) (1,-0.25) -- (1.125,-0.25) (1.875,-0.25) -- (3.625,-0.25) (4.375,-0.25) -- (8,-0.25) (-0.5,-0.75) node[left]{$\ket{0}$} -- (0.5,-0.75) (1,-0.75) -- (1.125,-0.75) (1.875,-0.75) -- (3.625,-0.75) (4.375,-0.75) -- (7,-0.75) (-0.5,-1.5) node[left]{$\ket{0}$} -- (1.125,-1.5) (1.875,-1.5) -- (3.625,-1.5) (4.375,-1.5) -- (7,-1.5);
			\draw (0,1.25) rectangle (0.5,1.75) node[pos=0.5]{$U_p$};
			\draw (0.5,-0.125) rectangle (1,-0.875) node[pos=0.5]{$U_1$} (0.75,-0.125) -- (0.75,1.5);
			\fill (0.75,1.5) circle (0.05cm);
			\draw (2,0.25) rectangle (2.5,0.75) node[pos=0.5]{$U_k$} (2.25,0.75) -- (2.25,1.5);
			\fill (2.25,1.5) circle (0.05cm);
			\draw (2.5,1.25) rectangle (3,1.75) node[pos=0.5]{$X$};
			\draw (3,0.675) rectangle (3.5,-0.125) node[pos=0.5]{$V_1$} (3.25,0.675) -- (3.25,1.5);
			\fill (3.25,1.5) circle (0.05cm);
			\draw (4.5,-0.2) rectangle (5,0.3) node[pos=0.5]{$V_l$} (4.75,0.3) -- (4.75,1.5);
			\fill (4.75,1.5) circle (0.05cm);
			\draw (5.25,-1.5) -- (5.25,1.5) (5.125,-0.125) -- (5.375,-0.375) (5.125,-0.375) -- (5.375,-0.125) (5.125,-1.375) -- (5.375,-1.625) (5.125,-1.625) -- (5.375,-1.375);
			\fill (5.25,1.5) circle (0.05cm);
			\draw (5.5,1.25) rectangle (6,1.75) node[pos=0.5]{$X$};
			\draw (6.5,1) circle (0.25cm) (6.5,0.75) -- (6.5,1.5);
			\fill (6.5,1.5) circle (0.05cm);
			\draw (7,0.75) rectangle (7.5,1.25) node[pos=0.5]{${\Tr}$};
			\draw (7,-0.5) rectangle (7.5,-1) node[pos=0.5]{${\Tr}$};
			\draw (7,-1.25) rectangle (7.5,-1.75) node[pos=0.5]{${\Tr}$};
		\end{tikzpicture}
		
		\caption{Representation of the construction of the circuit $C_1\oplus_p C_2$ from the circuits $C_1$ and $C_2$ in canonical form, with $\ttt{out}(C_2)<\ttt{out}(C_1)$. $U_p$ is a unitary that implements the map $\ket{0}\mapsto\sqrt{p}\ket{0}+\sqrt{1-p}\ket{1}$ to good approximation.}
		\label{fig:circuit-sum}
	\end{figure}
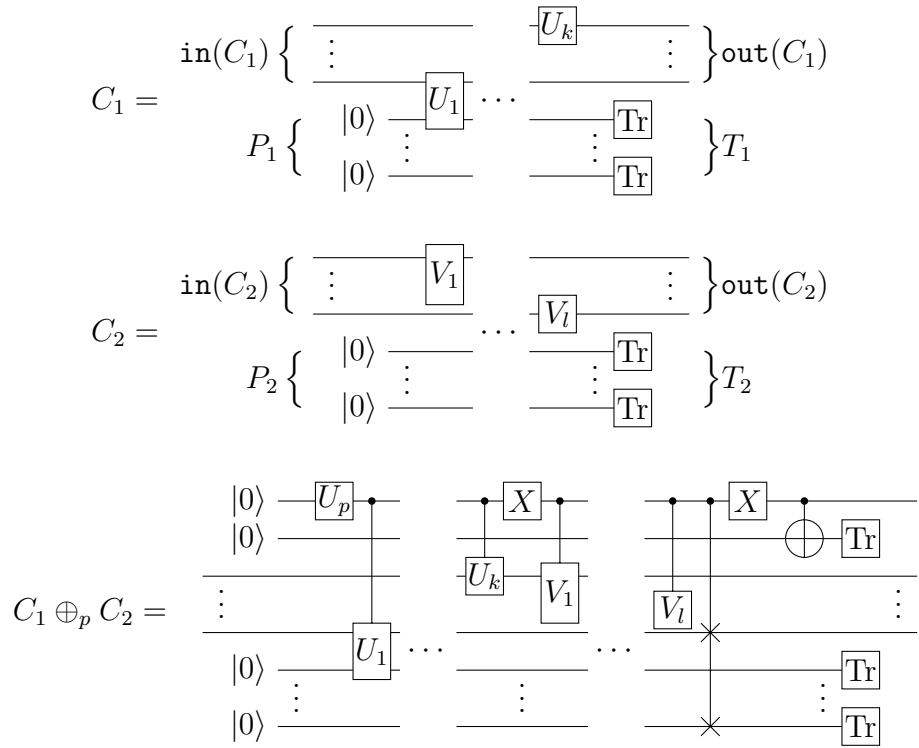

Using the natural identification, for $m>n$, of $Q^{\otimes m}\oplus Q^{\otimes n}$ with the subspace $\ket{0}\otimes Q^{\otimes m}+\ket{1}\otimes Q^{\otimes n}\otimes\ket{0^{m-n}}\subseteq Q^{\otimes(m+1)}$, we see that $\Phi_{C_1\oplus_p C_2}=p\Phi_{C_1}\oplus(1-p)\Phi_{C_2}$. Note also that, by the above definition $|C_1\oplus_p C_2|=O(|C_1|+|C_2|)$.

\begin{definition}\label{def:directSums}
    Let $\mc{C}_1$ and $\mc{C}_2$ be collections of quantum circuits. Write
    \begin{align}
        \mc{C}_1\oplus\mc{C}_2=\set*{C_1\oplus_p C_2}{C_1\in\mc{C}_1,\,C_2\in\mc{C}_2,\,p\in[0,1]}.
    \end{align}
\end{definition}

\begin{theorem}\label{thm:clique-qma2-hard}
    There exist $c,s:\N\rightarrow(0,1)_{\exp}$ with constant gap such that the clique problem $\ttt{qClique}(2)_{c,s}(2,\mc{C}_{\Tr}\oplus\mc{C}_{\Tr}\oplus\mc{C}_{\mathrm{EB}})$ is $\tsf{QMA}(2)$-hard.
\end{theorem}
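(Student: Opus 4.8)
The plan is to give a polynomial-time Karp reduction from an arbitrary $\tsf{QMA}(2)$ promise problem $(Y,N)$ to $\ttt{qClique}(2,\mc{C}_{\Tr}\oplus\mc{C}_{\Tr}\oplus\mc{C}_{\mathrm{EB}})_{c,s}$. Fix an instance $x$ and let $\Phi_x$ be its efficiently constructible verification circuit, acting on two registers $A,B$, which accepts some product state with probability $\geq\tfrac23$ when $x\in Y$ and accepts every product state with probability $\leq\tfrac13$ when $x\in N$. The reduction outputs a single circuit for a channel of the form $\Phi=\Phi_1\oplus_q\Psi_x$, where $\Psi_x\in\mc{C}_{\mathrm{EB}}$ encodes the verification and $\Phi_1\in\mc{C}_{\Tr}\oplus\mc{C}_{\Tr}$ is a separability tester given by a direct sum of the two partial traces onto the $A$ and $B$ registers. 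Using the direct-sum construction of \cref{def:circuit-direct-sum} together with the size bound $|C_1\oplus_p C_2|=O(|C_1|+|C_2|)$, this channel has a polynomial-size circuit lying in $\mc{C}_{\Tr}\oplus\mc{C}_{\Tr}\oplus\mc{C}_{\mathrm{EB}}$, so the map $x\mapsto\Phi$ is efficient; it remains to verify completeness and soundness.

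First I would construct $\Psi_x$ (the content of \cref{lem:algorithm-channel}). This channel appends a fresh orthogonality qubit, runs the verification on $A\otimes B$, measures the accept/reject outcome, and then prepares a fixed pure state on acceptance and a maximally mixed state on rejection, immediately tracing out the orthogonality qubit; being measure-and-prepare, it is entanglement-breaking. The traced-out orthogonality qubit lets me build, from any accepted product state, an orthogonal pair $\omega_0,\omega_1$ differing only in that qubit, which $\Psi_x$ sends to the \emph{same} near-pure state, so the clique overlap is large exactly when acceptance is high. Hence $x\in Y$ yields a separable $(\approx 1,2)$-clique for $\Psi_x$, while $x\in N$ forces every separable pair to have low overlap through $\Psi_x$.

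The core of the argument is lifting the no-case bound from separable states to all states, which is where $\Phi_1$ and the combination lemma \cref{lem:like-5.5} enter. The key self-testing-flavoured fact (\cref{lem:close-to-separable} and \cref{lem:separator}) is that $\Tr\squ*{\Phi_1(\omega_1)\Phi_1(\omega_2)}$ is large only when the reduced states on $A$ and $B$ are near-pure and matched, which forces $\omega_1,\omega_2$ to be close to a common product state; taking the $\omega_i$ pure by convexity, a near-optimal $\Phi_1$-clique is approximately separable, and for genuinely product inputs the reductions are exactly pure, so $\Phi_1$ does not damage the completeness of the clique from the previous step. Combining, for $\Phi=\Phi_1\oplus_q\Psi_x$ the overlap splits as $q^2\Tr\squ*{\Phi_1(\omega_1)\Phi_1(\omega_2)}+(1-q)^2\Tr\squ*{\Psi_x(\omega_1)\Psi_x(\omega_2)}$; in the no case any candidate clique is either far from separable, making the first term small, or near-separable, making the second term small, and \cref{lem:like-5.5} selects $q$ so these two penalties cannot be evaded simultaneously.

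I expect the main obstacle to be exactly this no-case soundness: making the two-sided trade-off quantitative while preserving a constant promise gap. The subtlety is that raising $q$ to enforce separability more strongly also shrinks the factor $(1-q)^2$ multiplying the verification term, so the weighting in \cref{lem:like-5.5} must be tuned so that the entanglement penalty from $\Phi_1$ dominates any spurious non-separable clique of $\Psi_x$ \emph{and} the verification penalty still bites on the remaining near-separable states, all while completeness stays bounded away from soundness. I would also need to check that the approximate separability extracted from a near-optimal $\Phi_1$-clique is accurate enough to invoke the product-state no-case bound without eroding the gap, and that every overlap involved stays within the $(0,1)_{\exp}$ range so that the standard $\tsf{QMA}(2)$ gap amplification applies.
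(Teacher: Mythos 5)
Your proposal follows essentially the same route as the paper's proof: the same entanglement-breaking verification channel $\Psi_x$ with a traced-out orthogonality qubit (\cref{lem:algorithm-channel}), the same separability-testing direct sum of partial traces $\Phi_1$ (\cref{lem:separator}, via \cref{lem:close-to-separable}), the same circuit direct sum $\Phi_1\oplus_p\Psi_x$, and the same use of \cref{lem:like-5.5} to lift the no-case bound from separable pairs to all states while tuning the weight to keep a constant gap. The only cosmetic difference is your ``$(\approx1,2)$-clique'' in the yes case --- with completeness $3/4$ the actual overlap through $\Psi_x$ is $19/32$, not near $1$ --- but this does not affect the argument since a constant gap above the no-case bound is all that is needed.
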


To prove this theorem, we develop a technique to take a channel whose cliques have certain properties with respect to a set of states, and combine it with another channel in order to enlarge the set of states on which those properties hold. This is inspired by self-testing techniques, such as Lemma 5.5 of \cite{BMZ23}.

\begin{lemma}\label{lem:like-5.5}
    Let $H$, $K$, and $K'$ be Hilbert spaces, and $A,B\subseteq\mc{D}(H)^2$ be sets of pairs of states. Suppose there exist channels $\Phi:\mc{B}(H)\rightarrow\mc{B}(K)$ and $\Psi:\mc{B}(H)\rightarrow\mc{B}(K')$ with the properties that
    \begin{itemize}
        \item For $(\rho,\sigma)\in A$, $\Tr\squ*{\Phi(\rho)\Phi(\sigma)}\leq p_0$, and if $\Tr\squ*{\Phi(\rho)\Phi(\sigma)}\geq p_0-\varepsilon$ then there exist $(\rho_0,\sigma_0)\in B$ such that $\norm{\rho-\rho_0}_{\Tr}+\norm{\sigma-\sigma_0}_{\Tr}\leq C\varepsilon^\alpha$ for some constants $C,\alpha>0$ independent of $(\rho,\sigma)$.

        \item For all $(\rho,\sigma)\in B$, $\Tr\squ*{\Psi(\rho)\Psi(\sigma)}\leq q_0$.
    \end{itemize}
    Then, for all $\eta>0$, there exists $p\in(0,1)$ such that the channel $\Phi_p:\mc{B}(H)\rightarrow\mc{B}(K\oplus K')$ defined $\Phi_p=p\Phi\oplus(1-p)\Psi$ satisfies the property that, for all $(\rho,\sigma)\in A$,
    \begin{align}
        \Tr\squ*{\Phi_p(\rho)\Phi_p(\sigma)}\leq p^2p_0+(1-p)^2q_0+\eta.
    \end{align}
\end{lemma}

From the proof below, it may be verified that if $\eta=1/\mathrm{poly}$, $C=\mathrm{poly}$, and $\alpha<1$ is constant, we get a probability $p,1-p=1/\mathrm{poly}$. Further, the construction preserves polynomial gaps. In fact, if we have the promise that $\Tr\squ*{\Psi(\rho)\Psi(\sigma)}\leq s$ for all $(\rho,\sigma)\in B$ or $\Tr\squ*{\Psi(\rho)\Psi(\sigma)}\geq c$ for some $(\rho,\sigma)\in A\cap B$ with $c-s=1/\mathrm{poly}$, then, we have that either $\Tr\squ*{\Phi_p(\rho)\Phi_p(\sigma)}\leq s'$ for all $(\rho,\sigma)\in A$ or $\Tr\squ*{\Phi_p(\rho)\Phi_p(\sigma)}\geq c'$ for some $(\rho,\sigma)\in A\cap B$, where $c'=p^2p_0+(1-p)^2c$ and $s'=p^2p_0+(1-p)^2s+\eta$. So the gap becomes $c'-s'=(1-p)^2(c-s)-\eta$. Since $(1-p)^2=\frac{\eta^{1-\alpha}}{\mathrm{poly}}$, we may choose $\eta=1/\mathrm{poly}$ small enough to preserve the polynomial gap. 

\begin{proof}
    Let $(\rho,\sigma)\in A$. Then, there exists some $\varepsilon\geq 0$ such that $\Tr\squ*{\Phi(\rho)\Phi(\sigma)}=p_0-\varepsilon$. By hypothesis, there exist $(\rho_0,\sigma_0)\in B$ such that $\norm{\rho-\rho_0}_{\Tr}+\norm{\sigma-\sigma_0}_{\Tr}\leq C\varepsilon^\alpha$. Also, we have by hypothesis that $\Tr\squ*{\Psi(\rho_0)\Psi(\sigma_0)}\leq q_0$. Therefore,
    \begin{align}
    \begin{split}
        \Tr\squ*{\Psi(\rho)\Psi(\sigma)}&=\Tr\squ*{\Psi(\rho_0)\Psi(\sigma_0)}+\Tr\squ*{\Psi^\dag(\Psi(\rho_0))(\sigma-\sigma_0)}+\Tr\squ*{(\rho-\rho_0)\Psi^\dag(\Psi(\sigma))}\\
        &\leq\Tr\squ*{\Psi(\rho_0)\Psi(\sigma_0)}+\norm{\sigma-\sigma_0}_{\Tr}+\norm{\rho-\rho_0}_{\Tr}\\
        &\leq q_0+C\varepsilon^\alpha.
    \end{split}
    \end{align}
    Using this bound,
    \begin{align}
    \begin{split}
        \Tr\squ*{\Phi_p(\rho)\Phi_p(\sigma)}&=p^2\Tr\squ*{\Phi(\rho)\Phi(\sigma)}+(1-p)^2\Tr\squ*{\Psi(\rho)\Psi(\sigma)}\\
        &\leq p^2(p_0-\varepsilon)+(1-p)^2(q_0+C\varepsilon^\alpha)\\
        &=p^2p_0+(1-p)^2q_0+(1-p)^2C\varepsilon^\alpha-p^2\varepsilon
    \end{split}
    \end{align}
    To finish the proof, it remains to show that $p$ may be chosen so that $(1-p)^2C\varepsilon^\alpha-p^2\varepsilon\leq\eta$ for all $\varepsilon\geq0$. Without loss of generality, we may assume $\alpha<1$. Then, by extremising the expression with respect to $\varepsilon$, we see that it is maximised at $\varepsilon=\parens*{C\alpha\parens*{\tfrac{1}{p}-1}^2}^{1/(1-\alpha)}$, giving $(1-p)^2C\varepsilon^\alpha-p^2\varepsilon\leq C^{1/(1-\alpha)}\parens*{\alpha^{\alpha/(1-\alpha)}-\alpha^{1/(1-\alpha)}}p^2\parens*{\tfrac{1}{p}-1}^{2/(1-\alpha)}$. Thus, rearranging the wanted expression $ C^{1/(1-\alpha)}\parens*{\alpha^{\alpha/(1-\alpha)}-\alpha^{1/(1-\alpha)}}p^2\parens*{\tfrac{1}{p}-1}^{2/(1-\alpha)}\leq\eta$ gives
    \begin{align}
        \frac{1-p}{p^\alpha}\leq\parens*{\frac{\eta}{C^{1/(1-\alpha)}\parens*{\alpha^{\alpha/(1-\alpha)}-\alpha^{1/(1-\alpha)}}}}^{(1-\alpha)/2}.
    \end{align}
    Since $\frac{1-p}{p}\geq\frac{1-p}{p^\alpha}$, we may take $\frac{1-p}{p}=\parens*{\frac{\eta}{C^{1/(1-\alpha)}\parens*{\alpha^{\alpha/(1-\alpha)}-\alpha^{1/(1-\alpha)}}}}^{(1-\alpha)/2}$, or
    \begin{align}
        p=\frac{1}{1+\parens*{\frac{\eta}{C^{1/(1-\alpha)}\parens*{\alpha^{\alpha/(1-\alpha)}-\alpha^{1/(1-\alpha)}}}}^{(1-\alpha)/2}}
    \end{align}
    to get the result. 
\end{proof}

\begin{lemma}\label{lem:close-to-separable}
    Let $H$ and $K$ be Hilbert spaces and pure states $\rho=\ketbra{\psi},\sigma=\ketbra{\phi}\in \mc{D}(H^{\otimes k}\otimes K)$. Suppose that there exist $\varepsilon_i>0$ such that $\Tr(\rho_i\sigma_i)\geq 1-\varepsilon_i$ for all $i$, where $\rho_i$ and $\sigma_i$ are the marginals on the $i$-th copy of $H$. Write $\varepsilon=\frac{1}{k}\sum_i\varepsilon_i$. Then, there exist separable states $\tilde{\rho}=\ketbra{\psi_1}\otimes\cdots\otimes\ketbra{\psi_k}\otimes\ketbra{\psi'},\tilde{\sigma}=\ketbra{\phi_1}\otimes\cdots\otimes\ketbra{\phi_k}\otimes\ketbra{\phi'}\in 
    \mc{D}(H^{\otimes k}\otimes K)$ such that $\norm{\rho-\tilde{\rho}}_2\leq2\sqrt{k\varepsilon}$, $\norm{\sigma-\tilde{\sigma}}_2\leq2\sqrt{k\varepsilon}$, and $\norm{\Tr_K\tilde{\rho}-\Tr_K\tilde{\sigma}}_2\leq6\sqrt{k\varepsilon}$.
\end{lemma}

\begin{proof}
    Fix some $i=1,\ldots,k$. Then, $\Tr(\rho_i\sigma_i)\geq 1-\varepsilon_i$ implies that $\sqrt{\Tr\rho_i^2}\sqrt{\Tr\sigma_i^2}\geq 1-\varepsilon_i$. In particular, since $\Tr\sigma_i^2\leq 1$, $\Tr\rho_i^2\geq(1-\varepsilon_i)^2$. Now, Let $\rho_i=\sum_j\lambda_{i,j}\ketbra{\psi_{j}^{(i)}}$ be s spectral decomposition of $\rho_i$, with $\lambda_{i,1}\geq\lambda_{i,2}\geq\ldots$. First, the above implies that $\sum_j\lambda_{i,j}^2\geq(1-\varepsilon_i)^2$ and the fact that $\rho_i$ is a state means that $\sum_j\lambda_{i,j}=1$. As such, 
    \begin{align}
        \lambda_{i,1}=\sum_j\lambda_{i,1}\lambda_{i,j}\geq\sum_j\lambda_{i,j}^2\geq(1-\varepsilon_i)^2.
    \end{align}
    Thus, $\braket{\psi_1^{(i)}}{\rho_i}{\psi_1^{(i)}}=\lambda_{i,1}\geq(1-\varepsilon_i)^2$, or equivalently, $\norm{(\Id\otimes\bra{\psi^{(i)}_1}\otimes\Id)\ket{\psi}}\geq1-\varepsilon_i$. Now, we can expand $\ket{\psi}$ in the basis $\ket{\psi^{(1)}_{j_1}}\otimes\cdots\otimes\ket{\psi^{(k)}_{j_k}}$ of $H^{\otimes k}$ as $\ket{\psi}=\sum_{j_1,\ldots,j_k}\ket{\psi_{j_1}}\otimes\cdots\otimes\ket{\psi_{j_k}}\otimes\ket{\psi_{j_1\ldots j_k}}$, where the $\ket{\psi_{j_1\ldots j_k}}\in K$ are subnormalised states such that $\sum_{j_1,\ldots,j_k}\braket{\psi_{j_1\ldots j_k}}=1$. For all $i$, we have that
    \begin{align}
        \sum_{j_i\neq 1}\braket{\psi_{j_1\ldots j_k}}=1-\norm{(\Id\otimes\bra{\psi^{(i)}_1}\otimes\Id)\ket{\psi}}^2\leq 1-(1-\varepsilon_i)^2\geq2\varepsilon_i.
    \end{align}
    Therefore, by a union bound,
    \begin{align}
    \begin{split}
        \braket{\psi_{1\ldots 1}}&=1-\sum_{j_1\neq 1\lor\ldots\lor j_k\neq 1}\braket{\psi_{j_1\ldots j_k}}\geq1-\sum_i\sum_{j_i\neq 1}\braket{\psi_{j_1\ldots j_k}}\\
        &\geq1-2k\varepsilon.
    \end{split}
    \end{align}
    Let $\ket{\psi_i}=\ket{\psi^{(i)}_1}$ and $\ket{\psi'}=\frac{\ket{\psi_{1\ldots 1}}}{\norm{\ket{\psi_{1\ldots 1}}}}$. Then,
    \begin{align}
    \begin{split}
        \norm{\rho-\tilde{\rho}}_2^2&=2-2\norm{(\bra{\psi_1}\otimes\cdots\otimes\bra{\psi_k}\otimes\bra{\psi'})\ket{\psi}}^2=2-2|\braket{\psi'}{\psi_{1\ldots 1}}|^2\\
        &=2-2\braket{\psi_{1\ldots 1}}\leq 4k\varepsilon.
    \end{split}
    \end{align}
    By symmetry, we also get separable $\tilde{\sigma}$ such that $\norm{\sigma-\tilde{\sigma}}_2\leq 2\sqrt{k\varepsilon}$.

    For the last part of the proof, note that
    \begin{align}
    \begin{split}
        \norm{\Tr_K\tilde{\rho}-\Tr_K\tilde{\sigma}}_2^2&=\norm{\ketbra{\psi_1}\otimes\cdots\otimes\ketbra{\psi_k}-\ketbra{\phi_1}\otimes\cdots\otimes\ketbra{\phi_k}}_2^2\\
        &=2-2\abs{\braket{\psi_1}{\phi_1}}^2\cdots\abs{\braket{\psi_k}{\phi_k}}^2\\
        &\hspace{-1cm}=2-2(1-\tfrac{1}{2}\norm{\ketbra{\psi_1}-\ketbra{\phi_1}}_2^2)\cdots(1-\tfrac{1}{2}\norm{\ketbra{\psi_k}-\ketbra{\phi_k}}_2^2)\\
        &\leq\sum_i\norm{\ketbra{\psi_i}-\ketbra{\phi_i}}_2^2.
    \end{split}
    \end{align}
    Next, $\norm{\rho_i-\ketbra{\psi_i}}_2^2=\Tr(\rho_i^2)+1-2\braket{\psi_i}{\rho_i}{\psi_i}\leq 2-2(1-\varepsilon_i)^2\leq 4\varepsilon_i$ and similarly for $\sigma$, so
    \begin{align}
    \begin{split}
        \norm{\ketbra{\psi_i}-\ketbra{\phi_i}}_2&\leq\norm{\ketbra{\psi_i}-\rho_i}_2+\norm{\rho_i-\sigma_i}_2+\norm{\sigma_i-\ketbra{\phi_i}}_2\\
        &\leq2\sqrt{\varepsilon_i}+\sqrt{2\varepsilon_i}+2\sqrt{\varepsilon_i}=(4+\sqrt{2})\sqrt{\varepsilon_i}.
    \end{split}
    \end{align}
    Together, this gives $\norm{\Tr_K\tilde{\rho}-\Tr_K\tilde{\sigma}}_2^2\leq k(4+\sqrt{2})^2\varepsilon$, as wanted
\end{proof}

\begin{lemma}\label{lem:separator}
    Let $H$ and $K$ be Hilbert spaces and $k\in\N$, and define the channel $\Phi_1:\mc{B}(H^{\otimes k}\otimes K)\rightarrow\mc{B}(H\otimes\C^{[k]})$ as
    \begin{align}
        \Phi_1(\rho)=\expec_{i=1}^k\rho_i\otimes\ketbra{i},
    \end{align}
    where $\rho_i$ is the marginal of $\rho$ on the $i$-th copy of $H$ and $\expec$ is the normalised sum (expectation). Let $\rho=\ketbra{\psi},\sigma=\ketbra{\phi}\in\mc{D}(H^{\otimes k}\otimes H')$ be orthogonal pure states. Then, $\Tr\squ*{\Phi_1(\rho)\Phi_1(\sigma)}\leq\frac{1}{k}$ and if $\Tr\squ*{\Phi_1(\rho)\Phi_1(\sigma)}\geq\frac{1}{k}-\varepsilon$, then there exist states $\ket{\psi_1},\ldots,\ket{\psi_k}\in H$ and orthogonal states $\ket{\alpha},\ket{\beta}\in K$ such that
    \begin{align}
        \norm{\rho-\tilde{\rho}}_{\Tr}+\norm{\sigma-\tilde{\sigma}}_{\Tr}\leq 10k\varepsilon^{1/4},
    \end{align}
    where $\tilde{\rho}=\ketbra{\psi_1}\otimes\cdots\otimes\ketbra{\psi_k}\otimes\ketbra{\alpha}$ and $\tilde{\sigma}=\ketbra{\psi_1}\otimes\cdots\otimes\ketbra{\psi_k}\otimes\ketbra{\beta}$.
\end{lemma}

\begin{proof}
    First, $\Tr\squ*{\Phi_1(\rho)\Phi_1(\sigma)}=\frac{1}{k}\expec_{i}\Tr(\rho_i\sigma_i)\leq\frac{1}{k}\expec_i1=\frac{1}{k}$. Now, suppose that $\Tr\squ*{\Phi_1(\rho)\Phi_1(\sigma)}\geq\frac{1}{k}-\varepsilon$. Let $\varepsilon_i\in[0,1]$ such that $\Tr(\rho_i\sigma_i)=1-\varepsilon_i$; we know that $k\varepsilon\geq\expec_i\varepsilon_i$. Then, due to \cref{lem:close-to-separable}, there exist separable states $\rho'=\ketbra{\psi_1}\otimes\cdots\otimes\ketbra{\psi_k}\otimes\ketbra{\alpha},\sigma'=\ketbra{\phi_1}\otimes\cdots\otimes\ketbra{\phi_k}\otimes\ketbra{\phi'}\in\mc{D}(H^{\otimes k}\otimes H')$ such that $\norm{\rho-\rho'}_2\leq2k\sqrt{\varepsilon}$, $\norm{\sigma-\sigma'}_2\leq2k\sqrt{\varepsilon}$, and $\norm{\Tr_{H'}\rho'-\Tr_{H'}\sigma'}_2\leq6k\sqrt{\varepsilon}$. Since all these are pure states, this implies that $\norm{\rho-\rho'}_{\Tr}\leq\sqrt{2}k\sqrt{\varepsilon}$, $\norm{\sigma-\sigma'}_{\Tr}\leq\sqrt{2}k\sqrt{\varepsilon}$, and $\norm{\Tr_{H'}\rho'-\Tr_{H'}\sigma'}_{\Tr}\leq3\sqrt{2}k\sqrt{\varepsilon}$. Let $\sigma''=\Tr_{H'}(\rho')\otimes\ketbra{\phi'}$; then
    \begin{align}
        \norm{\sigma-\sigma''}_{\Tr}\leq\norm{\sigma-\sigma'}_{\Tr}+\norm{\Tr_{H'}\sigma'-\Tr_{H'}\rho'}_{\Tr}\leq 4\sqrt{2}k\sqrt{\varepsilon}.
    \end{align}
    Now, the inner product
    \begin{align}
    \begin{split}
        \abs*{\braket{\alpha}{\phi'}}^2=\Tr\squ*{\rho'\sigma''}\leq\Tr\squ*{\rho\sigma}+\norm{\rho-\rho'}_{\Tr}+\norm{\sigma-\sigma''}_{\Tr}\leq5\sqrt{2}k\sqrt{\varepsilon}.
    \end{split}
    \end{align}
    There exists a state $\ket{\beta}\in H'$ orthogonal to $\ket{\alpha}$ such that $\ket{\phi'}=a\ket{\alpha}+b\ket{\beta}$ for some $|\alpha|^2\leq 5\sqrt{2}k\sqrt{\varepsilon}$. Due to this
    \begin{align}
        \norm{\ketbra{\phi'}-\ketbra{\beta}}_{\Tr}=\frac{1}{\sqrt{2}}\norm{\ketbra{\phi'}-\ketbra{\beta}}_2=\sqrt{\frac{1}{2}\parens*{2-2|b|^2}}=|a|\leq\sqrt{5\sqrt{2}k\sqrt{\varepsilon}}.
    \end{align}
    Let $\tilde{\rho}=\rho'$ and $\tilde{\sigma}=\Tr_{H'}\sigma''\otimes\ketbra{\beta}$. Then we get the wanted result
    \begin{align}
    \begin{split}
        \norm{\rho-\tilde{\rho}}_{\Tr}+\norm{\sigma-\tilde{\sigma}}_{\Tr}&\leq\norm{\rho-\rho'}_{\Tr}+\norm{\sigma-\sigma''}_{\Tr}+\norm{\ketbra{\alpha}+\ketbra{\beta}}_{\Tr}\\
        &\leq5\sqrt{2}k\sqrt{\varepsilon}+\sqrt{5\sqrt{2}k\sqrt{\varepsilon}}\leq 10k\varepsilon^{1/4}.
    \end{split}
    \end{align}
\end{proof}

\begin{lemma}\label{lem:algorithm-channel}
    Let $H$ be a Hilbert space. For any channel $\Phi:\mc{B}(H)\rightarrow\mc{B}(Q)$ define $\Psi:\mc{B}(H\otimes Q)\rightarrow\mc{B}(Q_\perp)$ as
    \begin{align}
        \Psi(\rho)=\braket{0}{(\Phi\otimes\Tr)(\rho)}{0}\mu_Q+\braket{1}{(\Phi\otimes\Tr)(\rho)}{1}\ketbra{\perp}
    \end{align}
    Then, for any $\rho\in\mc{D}(H)$ and $\sigma,\sigma'\in\mc{D}(Q)$, if $\braket{1}{\Phi(\rho)}{1}\geq\frac{3}{4}$, then $\Tr\squ*{\Psi(\rho\otimes\sigma)\Psi(\rho\otimes\sigma')}\geq\frac{19}{32}$; and if $\braket{1}{\Phi(\rho)}{1}\leq\frac{2}{3}$, then $\Tr\squ*{\Psi(\rho\otimes\sigma)\Psi(\rho\otimes\sigma')}\leq\frac{1}{2}$ 
\end{lemma}

\begin{proof}
    We can directly compute this. First, $\Psi(\rho\otimes\sigma)=\braket{0}{\Phi(\rho)}{0}\mu_Q+\braket{1}{\Phi(\rho)}{1}\ketbra{\perp}$. Therefore,
    \begin{align}
    \begin{split}
        \Tr\squ*{\Psi(\rho\otimes\sigma)\Psi(\rho\otimes\sigma')}&=\braket{0}{\Phi(\rho)}{0}^2\Tr(\mu_Q^2)+\braket{1}{\Phi(\rho)}{1}^2|\braket{\perp}|^2\\
        &=\frac{1}{2}\parens*{1-\braket{1}{\Phi(\rho)}{1}}^2+\braket{1}{\Phi(\rho)}{1}^2.
    \end{split}
    \end{align}
    The inequalities follow from this expression.
\end{proof}

\begin{proof}[Proof of \cref{thm:clique-qma2-hard}]
    Let $L=(Y,N)\in\tsf{QMA}(2)_{\frac{3}{4},\frac{1}{4}}$, and let $x\in Y\cup N$. Then, there exists a polynomial-time computable channel $\Phi_x:\mc{B}(Q^{\otimes2p(|x|)})\rightarrow\mc{B}(Q)$ such that, if $x\in Y$, there exist $\rho,\sigma\in\mc{D}(Q^{\otimes p(|x|)})$ such that $\braket{1}{\Phi_x(\rho\otimes\sigma)}{1}\geq \frac{3}{4}$; and if $x\in N$, for all $\rho,\sigma\in\mc{D}(Q^{\otimes p(|x|)})$, $\braket{1}{\Phi_x(\rho\otimes\sigma)}{1}\leq \frac{1}{4}$. Construct the channel $\Psi_x:\mc{B}(Q^{\otimes(2p(|x|)+1)})\rightarrow\mc{B}(Q_\perp)$ from $\Phi_x$ as in \cref{lem:algorithm-channel}. Also, write $H=Q^{\otimes p(|x|)}$ and $K=Q$, and let $\Phi_1:\mc{B}(H\otimes H\otimes K)\rightarrow\mc{B}(H\otimes\C^{[2]})$ as in \cref{lem:separator}. For $p\in(0,1)$, consider the channel $\Phi_p=p\Phi_1\oplus(1-p)\Psi_x$. Using the construction of \cref{def:circuit-direct-sum} and the fact that $\Phi_x$ admits a circuit polynomial in the size of the instance $x$, we see that this channel admits a circuit representation polynomial in $|x|$.
    
    First, suppose that $x\in Y$. Then, I claim that there exist $\rho,\sigma\in\mc{D}(Q^{\otimes(2p(|x|)+1)})$ such that $\Tr\squ*{\Phi_p(\rho)\Phi_p(\sigma)}\geq \frac{p^2}{2}+\frac{19(1-p)^2}{32}$. Since $x\in Y$, there exists a separable state on which $\Phi_x$ accepts with high probability; we may, without loss of generality, assume that this state is pure so of the form $\ket{\psi}\otimes\ket{\phi}$. Let $\rho=\ketbra{\psi}\otimes\ketbra{\phi}\otimes\ketbra{0}$ and $\sigma=\ketbra{\psi}\otimes\ketbra{\phi}\otimes\ketbra{1}$. Then, $\Phi_p(\rho)=\frac{p}{2}\parens*{\ketbra{\psi}\otimes\ketbra{1}+\ketbra{\phi}\otimes\ketbra{2}}\oplus (1-p)\Psi_x(\rho)$, and similarly for $\sigma$, giving that
    \begin{align}
    \begin{split}
        \Tr\squ*{\Phi_p(\rho)\Phi_p(\sigma)}&=\frac{p^2}{4}\parens*{|\braket{\psi}|^2+|\braket{\phi}|^2}+(1-p)^2\Tr\squ*{\Psi_x(\rho)\Psi_x(\sigma)}\\
        &\geq \frac{p^2}{2}+(1-p)^2\frac{19}{32}.
    \end{split}
    \end{align}
    
    Now, suppose that $x\in N$. We use \cref{lem:like-5.5} to show that for all $\eta>0$, there exists $p\in (0,1)$ such that $\Tr\squ*{\Phi_p(\rho)\Phi_p(\sigma)}\leq\frac{p^2}{2}+\frac{(1-p)^2}{2}+\eta$. We need that $\Phi_1$ and $\Psi_x$ satisfy the conditions of lemma. Let
    \begin{align}
        &A=\set*{(\ketbra{\psi},\ketbra{\phi})}{\ketbra{\psi},\ketbra{\phi}\in\mc{D}(Q^{\otimes (2p(|x|)+1)}),\;\braket{\psi}{\phi}=0},\\
        \begin{split}
        &B=\big\{(\ketbra{\psi_1}\otimes\ketbra{\psi_2}\otimes\ketbra{\alpha},\ketbra{\psi_1}\otimes\ketbra{\psi_2}\otimes\ketbra{\beta})\\
        &\qquad\qquad\big|\ketbra{\psi_1},\ketbra{\psi_2}\in\mc{D}(Q^{\otimes p(|x|)}),\;\ketbra{\alpha},\ketbra{\beta}\in\mc{D}(Q),\;\braket{\alpha}{\beta}=0\big\}.
        \end{split}
    \end{align}
    First, due to \cref{lem:separator}, for all $(\rho,\sigma)\in A$, $\Tr\squ*{\Phi_1(\rho)\Phi_1(\sigma)}\leq\frac{1}{2}$, and if $\Tr\squ*{\Phi_1(\rho)\Phi_1(\sigma)}\geq\frac{1}{2}-\varepsilon$, then there exists $(\tilde{\rho},\tilde{\sigma})\in B$ such that $\norm{\rho-\tilde{\rho}}_{\Tr}+\norm{\sigma-\tilde{\sigma}}_{\Tr}\leq 20\varepsilon^{1/4}$. Also, due to \cref{lem:algorithm-channel}, we know that for all $(\tilde{\rho},\tilde{\sigma})\in B$, $\Tr\squ*{\Psi_x(\tilde{\rho})\Psi_x(\tilde{\sigma})}\leq \frac{1}{2}$. As such, \cref{lem:like-5.5} gives that for all $\eta>0$, there exists $p\in (0,1)$ such that $\Tr\squ*{\Phi_p(\rho)\Phi_p(\sigma)}\leq\frac{p^2}{2}+\frac{(1-p)^2}{2}+\eta$ for all $(\rho,\sigma)\in A$. As this bound holds for all pure states, it holds for all states. This gives that $\ttt{qClique}(2)_{p^2/2+19(1-p)^2/32,p^2/2+(1-p)^2/2+\eta}$ is $\tsf{QMA}(2)_{c,s}$-hard. Finally, as remarked near~\cref{lem:like-5.5}, $\eta$ may be chosen so that $p^2/2+19(1-p)^2/32$ and $p^2/2+(1-p)^2/2+\eta$ have constant gap $3(1-p)^2/32-\eta$, as the coefficient from \cref{lem:separator} is constant in the case $k=2$.
\end{proof}

\subsection{Reducing \textsf{QMA}(\textit{k}) to \textsf{QMA}(2) using quantum cliques}\label{subsec:NewReduction}

In this section, we give an alternate proof of $\tsf{QMA}(k)=\tsf{QMA}(2)$ (for $k$ polynomial) than the original proof of Harrow and Montanaro~\cite{HM13}. To do this, we show that $\ttt{qClique}(2)_{c,s}$ with some polynomial gap is hard for $\tsf{QMA}(k)$, independently of the result of~\cite{HM13}. The proof is similar to the case $k=2$ that we showed in \cref{thm:clique-qma2-hard}. This implies the wanted equality via \cref{lem:in-qma2}, which showed that $\ttt{qClique}(2)_{c,s}\in\tsf{QMA}(2)$ for all $c,s$ with polynomial gap.

    \begin{theorem}\label{thm:clique-qmak-hard}
        For all $c,s:\N\rightarrow(0,1)_{\exp}$ with polynomial gap and $k:\N\rightarrow\N$ polynomial, there exist $c',s':\N\rightarrow(0,1)_{\exp}$ with polynomial gap such that $\ttt{qClique}(2)_{c',s'}$ is $\tsf{QMA}(k)_{c,s}$-hard.
    \end{theorem}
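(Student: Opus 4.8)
The plan is to mimic the structure of the proof of \cref{thm:clique-qma2-hard}, replacing the separability testing for two factors with separability testing for $k$ factors. Let $L=(Y,N)\in\tsf{QMA}(k)_{c,s}$, and for an instance $x$ let $\Phi_x:\mc{L}(Q^{\otimes kp(|x|)})\rightarrow\mc{L}(Q)$ be the associated verification channel, which on a yes instance accepts with probability at least $c$ on some product state $\ket{\psi^{(1)}}\otimes\cdots\otimes\ket{\psi^{(k)}}$ across the $k$ registers $H=Q^{\otimes p(|x|)}$, and on a no instance accepts with probability at most $s$ on every such product state. As before, I would first apply \cref{lem:algorithm-channel} to turn $\Phi_x$ into an entanglement-breaking channel $\Psi_x$ that acts on one extra orthogonality qubit and outputs a near-pure state when the verification accepts and a highly mixed state when it rejects, so that cliques of $\Psi_x$ among product states detect yes instances.

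The essential new ingredient is the separator channel. Where the $k=2$ case used $\Phi_1$ from \cref{lem:separator} to test separability into two factors, here I would use the same channel $\Phi_1(\rho)=\expec_{i=1}^{k}\rho_i\otimes\ketbra{i}$ on $H^{\otimes k}\otimes K$, which by \cref{lem:separator} has clique value bounded by $\frac{1}{k}$, and whose near-optimal cliques are forced to be of the form $\ket{\psi_1}\otimes\cdots\otimes\ket{\psi_k}\otimes\ket{\alpha}$, i.e.\ separable across all $k$ registers with orthogonality relegated to the final register. This is exactly the self-testing property needed: \cref{lem:separator} already handles arbitrary $k$, with the perturbation bound $\norm{\rho-\tilde\rho}_{\Tr}+\norm{\sigma-\tilde\sigma}_{\Tr}\leq 10k\varepsilon^{1/4}$. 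I would then form $\Phi_p=p\Phi_1\oplus(1-p)\Psi_x$ using the direct-sum circuit of \cref{def:circuit-direct-sum}, which remains polynomial in $|x|$ since $k$ is polynomial.

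The completeness and soundness analysis then runs through \cref{lem:like-5.5} in parallel to the $k=2$ case. Take $A$ to be the pairs of orthogonal pure states on $Q^{\otimes(kp(|x|)+1)}$ and $B$ the pairs that are separable across all $k$ copies of $H$ and differ only in the orthogonal final register. For a yes instance, feeding the witness product state together with the orthogonality qubit set to $\ket{0}$ versus $\ket{1}$ gives a clique with value at least $\frac{p^2}{k}+(1-p)^2\frac{19}{32}$. For a no instance, \cref{lem:separator} supplies the first hypothesis of \cref{lem:like-5.5} (with $p_0=\frac1k$, $C=10k$, $\alpha=\frac14$), while \cref{lem:algorithm-channel} supplies the second ($q_0=\frac12$, since product states in $B$ are rejected), so the clique value is at most $\frac{p^2}{k}+\frac{(1-p)^2}{2}+\eta$ on all of $A$. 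Choosing $\eta=1/\mathrm{poly}$ small enough and $p$ as prescribed by \cref{lem:like-5.5}, the remark following that lemma guarantees the gap $c'-s'=(1-p)^2\bigl(\frac{19}{32}-\frac12\bigr)-\eta$ remains $1/\mathrm{poly}$; I must check here that $1-p=1/\mathrm{poly}$ despite the $k$-dependence in $C=10k$, which holds because $C$ enters only polynomially and $k$ is polynomial.

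The main obstacle I anticipate is \emph{not} the separability testing, which \cref{lem:separator} already delivers for general $k$, but rather confirming that the polynomial gap survives. Specifically, the perturbation constant $C=10k$ now grows with $k$, and I must verify that the choice of $p$ from \cref{lem:like-5.5} still yields $1-p$ (and the resulting gap) bounded below by an inverse polynomial when $C$ and the soundness gap are both polynomial in $|x|$; this is exactly the regime flagged in the remark after \cref{lem:like-5.5} ($\eta=1/\mathrm{poly}$, $C=\mathrm{poly}$, $\alpha<1$ constant), so it should go through, but it is the one quantitative point requiring care.
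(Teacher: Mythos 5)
Your construction matches the paper's proof almost line for line --- the same $\Psi_x$ from \cref{lem:algorithm-channel}, the same separator $\Phi_1$ from \cref{lem:separator} with $k$ tensor factors, the same direct sum $\Phi_p=p\Phi_1\oplus(1-p)\Psi_x$, the same instantiation of \cref{lem:like-5.5}, and the same quantitative check that the coefficient $10k$ from \cref{lem:separator} stays polynomial --- but there is one genuine gap: you apply \cref{lem:algorithm-channel} directly to a verifier with arbitrary completeness and soundness $c,s$. That lemma has hard-coded thresholds: it needs $\braket{1}{\Phi_x(\rho)}{1}\geq\frac{3}{4}$ to conclude output overlap at least $\frac{19}{32}$, and $\braket{1}{\Phi_x(\rho)}{1}\leq\frac{2}{3}$ to conclude overlap at most $\frac{1}{2}$. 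For general $c,s$ with merely polynomial gap neither hypothesis holds, and the failure is not cosmetic: the overlap produced by $\Psi_x$ is $g(x)=x^2+\frac{1}{2}(1-x)^2$ in the acceptance probability $x$, which is \emph{decreasing} on $[0,\frac{1}{3}]$, so for instance with $c=0.3$ and $s=0.2$ one gets $g(c)=0.335<0.36=g(s)$ --- your reduction would assign yes instances a strictly smaller clique value than no instances. So the constants $\frac{19}{32}$ and $q_0=\frac{1}{2}$ that drive your completeness and soundness bounds are unjustified as stated.

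The fix is the paper's first step, which your proposal omits: amplify before constructing anything. Parallel repetition with the same provers is not available here (that is precisely the nontrivial $\tsf{QMA}(2)$-amplification problem, and invoking \cite{HM13} would defeat the purpose of this alternate proof), so the paper amplifies by \emph{increasing the number of provers}: there exists a polynomial $k'$ such that $L\in\tsf{QMA}(k')_{\frac{3}{4},\frac{1}{4}}$, obtained by running the verifier on fresh groups of $k$ mutually unentangled proofs and taking a majority vote. One then runs your construction verbatim with $k'$ in place of $k$: the separator acts on $H^{\otimes k'}\otimes K$, \cref{lem:algorithm-channel} applies with room to spare, and the final gap $(1-p)^2\frac{3}{32}-\eta$ is inverse polynomial by exactly the quantitative point you already checked, since the perturbation constant becomes $10k'$ with $k'$ polynomial. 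With that single amplification step inserted, your argument coincides with the paper's proof.
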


    \begin{proof}
         The proof proceeds analogously to \cref{thm:clique-qma2-hard}. Let $L=(Y,N)\in\tsf{QMA}(k)_{c,s}$. By increasing the number of proofs, it is possible to amplify the probability gap. In particular, there exists a polynomial $k':\N\rightarrow\N$ such that $L\in\tsf{QMA}(k')_{\frac{3}{4},\frac{1}{4}}$. Let $x\in Y\cup N$ be an instance of $L$. Now, let $\Phi_x:\mc{B}(Q^{\otimes k'p(|x|)})\rightarrow\mc{B}(Q)$ be the channel of the $\tsf{QMA}(k')_{\frac{3}{4},\frac{1}{4}}$ verifier for $L$, from which we can follow the construction of \cref{lem:algorithm-channel} to get $\Psi_x:\mc{B}(Q^{\otimes k'p(|x|)+1})\rightarrow\mc{B}(Q_{\perp})$. Also, let  $H=Q^{\otimes p(|x|)}$, $K=Q$, and $\Phi_1:\mc{B}(H^{\otimes k'}\otimes K)\rightarrow\mc{B}(H\otimes\C^{[k']})$ be the channel from \cref{lem:separator} with $k=k'$. Let $\Phi_p=p\Phi_1\oplus(1-p)\Psi_x$; we consider the cliques of this channel. Note that, using the construction of \cref{def:circuit-direct-sum}, it is direct to see that this channel admits a circuit representation polynomial in the size of the instance $x$.

         First, suppose $x\in Y$. Then, there exist some separable state $\rho_0=\ketbra{\psi_1}\otimes\cdots\otimes\ketbra{\psi_k}$ such that $\braket{1}{\Phi_x(\rho_0)}{1}\geq\frac{3}{4}$. As such, we have that with $\rho=\rho_0\otimes\ketbra{0}$ and $\sigma=\rho_0\otimes\ketbra{1}$, $\Tr\squ*{\Phi_p(\rho)\Phi_p(\sigma)}\geq\frac{p^2}{k'}+(1-p)^2\frac{19}{32}$, giving that $(\rho,\sigma)$ forms a $(\frac{p^2}{k'}+(1-p)^2\frac{19}{32},2)$-clique. On the other hand, if $x\in N$, then, we know that for all separable $\rho_0\in(Q^{\otimes p(|x|)})^{\otimes k'}$, $\braket{1}{\Phi_x(\rho_0)}{1}\leq\frac{1}{4}$. As such, we know by \cref{lem:like-5.5} that for all $\eta>0$, there exists $p$ such that $\Phi_p$ has no $\parens*{\frac{p^2}{k'}+\frac{(1-p)^2}{2}+\eta,2}$-clique, as in \cref{thm:clique-qma2-hard}. This gives a probability gap $(1-p)^2\frac{3}{32}-\eta$, which can be chosen to be inverse polynomial, as the coefficient from \cref{lem:separator}, $10k'$, is polynomial.
    \end{proof}
    
    \begin{corollary}
        For all $c,s:\N\rightarrow(0,1)_{\exp}$ with polynomial gap and $k:\N\rightarrow\N$ polynomial, there exist $c',s':\N\rightarrow(0,1)_{\exp}$ with polynomial gap such that $\tsf{QMA}(k)_{c,s}\subseteq\tsf{QMA}(2)_{c',s'}$.
    \end{corollary}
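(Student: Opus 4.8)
The plan is to read off the inclusion directly from the hardness result \cref{thm:clique-qmak-hard} combined with the containment \cref{lem:in-qma2}, exploiting the fact that $\tsf{QMA}(2)$ (like any natural verification class) is closed under polynomial-time Karp reductions. Fix $c,s:\N\rightarrow(0,1)_{\exp}$ with polynomial gap and $k:\N\rightarrow\N$ polynomial, and let $L=(Y,N)\in\tsf{QMA}(k)_{c,s}$ be arbitrary. The goal is to produce $c',s':\N\rightarrow(0,1)_{\exp}$ with polynomial gap, \emph{depending only on $c,s,k$ and not on $L$}, such that $L\in\tsf{QMA}(2)_{c',s'}$; this is exactly what is needed for the stated class inclusion.

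First I would apply \cref{thm:clique-qmak-hard} to the parameters $c,s,k$ to obtain $\tilde{c},\tilde{s}:\N\rightarrow(0,1)_{\exp}$ with polynomial gap such that $\ttt{qClique}(2)_{\tilde{c},\tilde{s}}$ is $\tsf{QMA}(k)_{c,s}$-hard. By the definition of hardness under Karp reductions, this yields a single polynomial-time computable map $f$ sending each instance $x$ to a quantum circuit $f(x)$, with $x\in Y$ implying $f(x)$ a yes instance of $\ttt{qClique}(2)_{\tilde{c},\tilde{s}}$ and $x\in N$ implying $f(x)$ a no instance. Next I would invoke \cref{lem:in-qma2} with $\mc{C}$ the collection of all circuits, which gives a $\tsf{QMA}(2)$ verifier $V$ for $\ttt{qClique}(2)_{\tilde{c},\tilde{s}}$ with completeness $c'=\tfrac{1}{2}+(\tilde{c}-\tilde{s})\tfrac{\tilde{c}}{8}$ and soundness $s'=\tfrac{1}{2}+(\tilde{c}-\tilde{s})\tfrac{\tilde{c}+\tilde{s}}{16}$; since $\tilde{c},\tilde{s}$ have polynomial gap, so does the pair $(c',s')$, because $c'-s'=(\tilde{c}-\tilde{s})^2/16\in 1/\mathrm{poly}$. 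The final step is to compose: the $\tsf{QMA}(2)$ verifier for $L$ on input $x$ computes $f(x)$ classically in polynomial time and then runs $V$ on $f(x)$ using the same two unentangled proof registers. Completeness and soundness of this composed verifier are inherited from those of $V$ together with the correctness of $f$, so $L\in\tsf{QMA}(2)_{c',s'}$ with these explicit $c',s'$.

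I expect essentially no genuine obstacle here; the content is bookkeeping. The only points requiring care are that precomposing the verifier's Turing machine with the polynomial-time map $f$ preserves both the polynomial runtime bound and the bipartite proof structure, and that the resulting parameters $c',s'$ are uniform across $L$ — which holds precisely because \cref{thm:clique-qmak-hard} produces $\tilde{c},\tilde{s}$ depending only on $c,s,k$, and hence $c',s'$ depend only on $c,s,k$ as well. If one prefers to avoid carrying the explicit constants, one may instead absorb them using the parameter-robustness remark noted after the definition of $\tsf{QMA}(2)$, namely that $\tsf{QMA}(2)_{c',s'}=\tsf{QMA}(2)$ whenever $c',s'\in(0,1)_{\exp}$ have polynomial gap.
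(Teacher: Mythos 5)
Your proposal is correct and follows exactly the paper's route: the paper proves this corollary by concatenating \cref{thm:clique-qmak-hard} (hardness of $\ttt{qClique}(2)_{\tilde{c},\tilde{s}}$ for $\tsf{QMA}(k)_{c,s}$, with $\tilde{c},\tilde{s}$ depending only on $c,s,k$) with \cref{lem:in-qma2} (containment of $\ttt{qClique}(2)_{\tilde{c},\tilde{s}}$ in $\tsf{QMA}(2)_{c',s'}$), precisely as you do. Your additional bookkeeping --- uniformity of the parameters across $L$ and the fact that precomposing the verifier with the polynomial-time reduction preserves runtime and the bipartite proof structure --- is implicit in the paper's one-line proof and is correctly handled.
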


    The corollary follows immediately by concatenating the results of \cref{thm:clique-qmak-hard} and \cref{lem:in-qma2}.

\subsection{What about quantum independent sets?}\label{sec:qis}

The quantum independent set problem, defined in \cref{def:qClique/qIS,def:qcliqueqISProblems}, seems very similar to the quantum clique problem. However, its exact complexity remains vexing. Nevertheless, we are able to show that, as for the clique problem, the independent set problem is contained in $\tsf{QMA}(2)$.

\begin{lemma}
    Let $c,s:\N\rightarrow[0,1]$ and $k:\N\rightarrow\N$. Then, $\ttt{qIS}(k)_{c,s}\in\tsf{QMA}(k)_{c',s'}$ for $c'=\frac{1+\frac{c-s}{2}c}{2+c-s}$ and $s'=\frac{1+\frac{c-s}{2}\frac{c+s}{2}}{2+c-s}$.
\end{lemma}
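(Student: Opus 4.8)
The plan is to mirror the containment proof for the clique problem (\cref{lem:in-qma2}), now using $k$ unentangled proofs to certify the $k$ states of a putative independent set. Write $g=c-s$ for the promise gap. The $\tsf{QMA}(k)$ verifier expects a proof $\rho_1\otimes\cdots\otimes\rho_k\in\mc{D}(H)^{\otimes k}$, picks a pair $i<j$ uniformly at random, and branches: with probability $\frac{2}{2+g}$ it runs the \emph{orthogonality test} --- the swap test on $\rho_i\otimes\rho_j$, accepting iff the swap test \emph{fails}; and with probability $\frac{g}{2+g}$ it runs the \emph{independence test} --- apply $\Phi$ to both registers and swap-test $\Phi(\rho_i)\otimes\Phi(\rho_j)$, again accepting iff the test fails. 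Both tests are efficient since $\Phi$ is given by a circuit, and the weights are chosen precisely so that the normalisation $2+g$ appears in $c'$ and $s'$. Since the acceptance probability is multilinear in the $\rho_i$ separately, I may assume the proofs are pure, $\rho_i=\ketbra{\psi_i}$.

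Recall from \cref{subsec:SWAP} that on a product state the swap test accepts with probability $\tfrac12+\tfrac12\Tr(\rho_i\rho_j)$, so accepting on failure contributes $\tfrac12(1-\Tr(\rho_i\rho_j))$. Writing $O=\expec_{i<j}\abs{\braket{\psi_i}{\psi_j}}^2$ and $T=\expec_{i<j}\Tr(\Phi(\rho_i)\Phi(\rho_j))$, the overall acceptance probability is $\frac{(1-O)+\frac g2(1-T)}{2+g}$. For completeness, a yes instance supplies an orthogonal independent set with $O=0$ and $T\leq 1-c$, so $1-T\geq c$ and the acceptance is at least $\frac{1+\frac g2 c}{2+g}=c'$, as required.

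For soundness I would bound the acceptance of an arbitrary pure product proof against a no instance, where every orthogonal tuple has average output overlap at least $1-s$. Note that the target $s'$ leaves exactly $\frac{g^2}{4}$ of slack in the numerator over the value $\frac{1+\frac g2 s}{2+g}$ attained by a genuinely orthogonal tuple, and this slack must absorb the error incurred when the prover sends only approximately orthogonal states. The plan is a case split on $O$: when $O$ is large (at least a constant multiple of $g$) the orthogonality branch alone forces acceptance below $s'$ regardless of $T$; when $O$ is small, I would replace $(\ket{\psi_i})$ by a genuinely orthonormal tuple $(\ket{\tilde\psi_i})$ --- for instance via symmetric (Löwdin) orthogonalisation of the Gram matrix, or Gram--Schmidt --- invoke the promise to get $\expec_{i<j}\Tr(\Phi(\tilde\rho_i)\Phi(\tilde\rho_j))\geq 1-s$ for $\tilde\rho_i=\ketbra{\tilde\psi_i}$, and then control $\abs{T-(1-s)}$ by the orthogonalisation error, which is governed by $O$ through the off-diagonal mass of the Gram matrix.

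The main obstacle is exactly this perturbation estimate, and in particular its dependence on $k$. Because a channel is trace-norm contractive, the change in each pairwise output overlap is at most $\norm{\rho_i-\tilde\rho_i}_{\Tr}+\norm{\rho_j-\tilde\rho_j}_{\Tr}$, and these displacements are controlled by how far the Gram matrix is from the identity, i.e.\ by $O$. The delicate point is the constant: to land inside the $\frac{g^2}{4}$ slack with the stated, $k$-\emph{independent} $c',s'$, the averaged displacement must be at most $2\sqrt O$, whereas a direct Cauchy--Schwarz bound on the averaged error yields only $O(\sqrt{kO})$ (and one checks that in the fully symmetric configuration $\braket{\psi_i}{\psi_j}\equiv r$ this $\sqrt k$ factor is genuinely present, even after the first-order Löwdin term cancels on the diagonal). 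I therefore expect the real work to lie here: either finding a finer, $k$-uniform bound on the averaged output-overlap perturbation that survives the worst symmetric case, or else reweighting the two tests (which would make $c',s'$ mildly $k$-dependent while preserving the conclusion). Once a suitable bound is in hand, the remaining bookkeeping --- verifying $c'-s'=\frac{g^2}{4(2+g)}=\tfrac1{\mathrm{poly}}$, so that the problem lands in $\tsf{QMA}(k)=\tsf{QMA}(2)$ by \cite{HM13} --- is routine.
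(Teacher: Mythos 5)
Your verifier is exactly the paper's: draw a uniformly random pair $i<j$, run the orthogonality swap test with probability $p=\frac{2}{2+c-s}$ and the channel swap test otherwise, purify the proof, and your completeness computation reproduces the paper's $c'$. Your soundness plan is also the paper's: L\"owdin/Procrustes-orthogonalise the tuple via the SVD of the matrix $\Psi$ of columns $\ket{\psi_i}$ (the paper cites the orthogonal Procrustes problem explicitly), invoke the no-instance promise on the orthonormal tuple, bound the change in the average output overlap by the averaged trace-norm displacement, and trade this against the penalty the orthogonality branch imposes when $O=\expec_{i\neq j}\abs{\braket{\psi_i}{\psi_j}}^2$ is large. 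Where you do an explicit case split on $O$, the paper maximises the quadratic $(1-p)X-\frac{p}{2}X^2$ over $X=\norm{\Psi-\Psi'}_2$; these are the same argument.

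The obstacle you flag is real, and it is in fact a gap in the paper's own proof rather than a defect of your plan. The paper proves $\norm{\Psi-\Psi'}_2^2\leq\sum_{i\neq j}\abs{\braket{\psi_i}{\psi_j}}^2$ (a \emph{sum}) but then writes the orthogonality-branch acceptance as $\frac12-\frac12\norm{\Psi-\Psi'}_2^2$, whereas that acceptance is $\frac12-\frac12 O$ (an \emph{average}); this silently drops a factor $k(k-1)$. It also bounds $\expec_i\norm{\ket{\psi_i}-\ket{\psi_i'}}\leq\norm{\Psi-\Psi'}_2$, wasting a benign $\sqrt{k}$ in the other branch (this is your $O(\sqrt{kO})$ averaged-displacement bound, and your symmetric example $\braket{\psi_i}{\psi_j}\equiv r$ correctly certifies that this estimate cannot be improved within the trace-distance technique). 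Redoing the optimisation with the correct normalisations, $O\geq\frac{X^2}{k(k-1)}$ and channel-branch error at most $\frac{X}{\sqrt{k}}$, gives soundness slack $\frac{(1-p)^2(k-1)}{2p}$ in place of the paper's $\frac{(1-p)^2}{2p}$. These coincide at $k=2$, so the stated $k$-independent constants are actually sound there; but for $k\geq3$, with the paper's choice $p=\frac{2}{2+c-s}$ one finds $c'-s'=\frac{(c-s)^2\parens*{2-(k-1)}}{4(2+c-s)}\leq0$ and the argument is vacuous as written. The repair is precisely your fallback: reweight with $1-p=\Theta\parens*{\frac{c-s}{k}}$, yielding mildly $k$-dependent $c',s'$ with gap $\Theta\parens*{\frac{(c-s)^2}{k}}$, which remains inverse-polynomial for polynomial $k$, so the conclusion that matters downstream --- $\ttt{qIS}(k)_{c,s}\in\tsf{QMA}(k)\subseteq\tsf{QMA}(2)$ --- is unaffected. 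In short, your hesitation at exactly this step was warranted, and your proposed resolution is the correct one.
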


This implies immediately that for $k$ polynomial and $c,s:\N\rightarrow(0,1)_{\exp}$ with polynomial gap $\ttt{qIS}(k)_{c,s}\in\tsf{QMA}(2)$, since the gap $c'-s'=\frac{(c-s)^2}{4(2+c-s)}\geq\frac{(c-s)^2}{16}$ remains polynomial. The proof proceeds in the same way as the proof of \cref{lem:in-qma2}.

\begin{proof}
    Suppose the verifier receives an instance $C\in Y\cup N$ and a $\tsf{QMA}(k)$ proof $\rho_1\otimes\cdots\otimes\rho_k$. Let $\Phi=\Phi_C$. Then, the verifier effects the following procedure.
    \begin{enumerate}[1.]
        \item The verifier samples a uniformly random pair of distinct $i,j\in[k]$ and a random bit $b$, which is $0$ with probability $p$.

        \item If $b=0$, the verifier runs the swap test on $\rho_i\otimes \rho_j$ and outputs $1$ if and only if it fails.

        \item If $b=1$, the verifier computes $\Phi(\rho_i)\otimes\Phi(\rho_j)$, and again runs the swap test, outputting $1$ if it fails.
    \end{enumerate}
    Suppose $C\in Y$; then there exists a $(c(|C|),k)$-independent $\rho_1,\ldots,\rho_k$ set of $\Phi$. So, the prover can send $\rho_1\otimes\cdots\otimes\rho_k$. If $b=0$, the probability of accepting is $\expec_{i,j}\frac{1}{2}-\frac{1}{2}\Tr(\rho_i\rho_j)=\frac{1}{2}$; if $b=1$, the probability of accepting is $\expec_{i,j}\frac{1}{2}-\frac{1}{2}\Tr(\rho_i\rho_j)\geq\frac{1}{2}-\frac{1-c(|C|)}{2}=\frac{1}{2}c(|C|)$. Thus, the probability of accepting is $c'(|C|)=\frac{p+(1-p)c(|C|)}{2}$.

    Now, suppose $C\in N$. We may suppose that each $\rho_i=\ketbra{\psi_i}$ is pure. If $b=0$, then the probability of accepting is $\frac{1}{2}-\frac{1}{2}\expec_{i\neq j}|\braket{\psi_i}{\psi_j}|^2$; if $b=1$, the probability of accepting is $\frac{1}{2}-\frac{1}{2}\expec_{i\neq j}\Tr\squ*{\Phi(\rho_i)\Phi(\rho_j)}$. Let $\Psi$ be the matrix whose columns are $\ket{\psi_i}$, and let $\Psi=U\Sigma V^\dag$ be its singular value decomposition. The matrix with orthonormal columns $\Psi'$ nearest to $\Psi$ in Frobenius norm is $\Psi'=UV^\dag$ --- this is the solution to the orthogonal Procrustes problem~\cite{Sch66}, see also \cref{lem:orthogonaliser} for more. The norm distance between these matrices is
    \begin{align}
        \norm{\Psi-\Psi'}_2^2=\norm{\Sigma-\Id}_2^2=\sum_i(\sigma_i-1)^2\leq\sum_i(\sigma_i^2-1)^2=\norm{\Psi^\dag\Psi-\Id}^2_2=\sum_{i\neq j}\abs*{\braket{\psi_i}{\psi_j}}^2.
    \end{align}
    Let $\ket{\psi_i'}$ be the columns of $\Psi'$. Then, as these vectors are orthonormal, the fact there is no independent set implies $\expec_{i\neq j}\Tr\squ*{\Phi(\ketbra{\psi_i'})\Phi(\ketbra{\psi_j'})}\geq 1-s(|C|)$. As such, we can upper bound the probability of accepting if $b=1$ by
    \begin{align}
        \frac{1}{2}-\frac{1}{2}\expec_{i\neq j}\Tr\squ*{\Phi(\rho_i)\Phi(\rho_j)}&\leq\frac{1}{2}-\frac{1}{2}\expec_{i\neq j}\Tr\squ*{\Phi(\ketbra{\psi_i'})\Phi(\ketbra{\psi_j'})}+\expec_{i}\norm{\ketbra{\psi_i}-\ketbra{\psi_i'}}_{\Tr}\nonumber\\
        \begin{split}
        &\leq\frac{1}{2}-\frac{1-s(|C|)}{2}+\expec_i\norm{\ket{\psi_i}-\ket{\psi_i'}}\\
        &\leq\frac{s(|C|)}{2}+\sqrt{\expec_i\norm{\ket{\psi_i}-\ket{\psi_i'}}}=\frac{s(|C|)}{2}+\norm{\Psi-\Psi'}_2,
        \end{split}
    \end{align}
    using Jensen's inequality. Thus, putting the two cases together, the probability of accepting is
    \begin{align}
    \begin{split}
        s'(|C|)&\leq p\parens*{\frac{1}{2}-\frac{1}{2}\norm{\Psi-\Psi'}_2^2}+(1-p)\parens*{\frac{s(|C|)}{2}+\norm{\Psi-\Psi'}_2}\\
        &=\frac{p+(1-p)s(|C|)}{2}+(1-p)\norm{\Psi-\Psi'}_2-\frac{p}{2}\norm{\Psi-\Psi'}_2^2\\
        &\leq \frac{p+(1-p)s(|C|)}{2}+\frac{(1-p)^2}{2p},
    \end{split}
    \end{align}
    by maximising over $\norm{\Psi-\Psi'}_2\in\R$. Taking $p=\frac{2}{2+c-s}$ gives the wanted values of $c'$ and $s'$.
\end{proof}

\section{Specifying to \textsf{QMA}}\label{Sec:QMA}

Many $\tsf{NP}$-complete computational problems quantise naturally to $\tsf{QMA}$-complete. We have seen, however, that the clique problem quantises rather to a $\tsf{QMA}(2)$-complete problem. This situation begs the question of whether there is some subclass of circuits on which the clique problem is sufficiently weakened to be $\tsf{QMA}$-complete. In this section we show that these circuits are those representing (a large subclass of) the entanglement-breaking channels, consisting of a measurement followed by state preparation. This contrasts well with the result of the previous section, wherein we showed that the clique problem for direct sums of the entanglement-breaking channels with partial traces end up being $\tsf{QMA}(2)$-complete. First, we recall the definition of the class $\tsf{QMA}$.

\begin{definition}
    Let $c,s:\N\rightarrow(0,1)$. A promise problem $Y,N\subseteq\{0,1\}^\ast$ is in $\tsf{QMA}_{c,s}$ if there exist polynomials $p,q:\N\rightarrow\N$ and a Turing machine $V$ with one input tape and one output tape such that
    \begin{itemize}
        \item For all $x\in\{0,1\}^\ast$, $V$ halts on input $x$ in $q(|x|)$ steps and outputs the description of a quantum circuit from $p(|x|)$ qubits to one qubit.

        \item For all $x\in Y$, there exists a state $\rho\in\mc{D}(Q^{\otimes p(|x|)})$ such that $\braket{1}{\Phi_{V(x)}(\rho)}{1}\geq c(|x|)$.

        \item For all $x\in N$ and $\rho\in\mc{D}(Q^{\otimes p(|x|)})$, $\braket{1}{\Phi_{V(x)}(\rho)}{1}\leq s(|x|)$.
    \end{itemize}
\end{definition}

As long as $c,s:\N\rightarrow(0,1)_{\exp}$ with polynomial gap, we have that $\tsf{QMA}_{c,s}=\tsf{QMA}_{\frac{2}{3},\frac{1}{3}}$, so as previously we take this to be the class $\tsf{QMA}$.

\begin{definition}
    Write $\mc{C}_{\mathrm{EB}'}\subseteq\mc{C}_{\mathrm{EB}}$ for those circuits whose output is independent of at least one of the input qubits, \emph{i.e.} there exists a qubit $R$ such that $\Phi_C(\rho)=\Phi_C(\Tr_R(\rho)\otimes\sigma_R)$ for some $\sigma\in\mc{D}(Q)$.
\end{definition}

We aim to first show that $\ttt{qClique}(2,\mc{C}_{\mathrm{EB}'})\in\tsf{QMA}$. Note that we are only showing inclusion of a subclass of the entanglement-breaking channels in $\tsf{QMA}$. Why not all of them? For general channels, our proof of inclusion of the clique problem in $\tsf{QMA}(2)$ has the verifier run the swap test to verifiy orthogonality of the clique. However, in $\tsf{QMA}$, the verifier does not have this power --- in fact, Harrow and Montanaro show that there is no LOCC swap test~\cite{HM13}. As such, we need some other way that the verifier can guarantee that the provided clique is orthogonal. Leaving the result independent of one of the qubits allows the orthogonality to be provided by that qubit. In fact, the prover need not even provide orthogonal states to the verifier for a yes instance: the verifier checks on those qubits that are measured and for the remaining one she may be certain that there are extensions of the given states to that qubit that are both orhogonal and a clique.

We use a result of Brand\~ao~\cite{Bra08} to prove the inclusion. First, we need to define a useful subclass of $\tsf{QMA}(2)$. A two-answer POVM $\{M,\Id-M\}$ is a Bell measurement if there exist some $m,n\in\N$, $p_{ij}\in[0,1]$, and POVMs $\{M_i\}_{i=1}^m$, $\{N_j\}_{j=1}^n$ such that
\begin{align}
    M=\sum_{i=1}^m\sum_{j=1}^n p_{ij}M_i\otimes N_j.
\end{align}
Let $\tsf{QMA}^{\mathrm{Bell}}(2)\subseteq\tsf{QMA}(2)$ be the subset of problems where the verifier may make a Bell measurement in order to decide whether to accept or reject.

\begin{theorem}[Special case of theorem 6.5.2 in \cite{Bra08}]
    $\tsf{QMA}^{\mathrm{Bell}}(2)=\tsf{QMA}$.
\end{theorem}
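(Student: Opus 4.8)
The plan is to establish the two inclusions separately. The inclusion $\tsf{QMA}\subseteq\tsf{QMA}^{\mathrm{Bell}}(2)$ is the easy direction: given a $\tsf{QMA}$ verifier with acceptance POVM $\{E,\Id-E\}$ on a single proof, I would ask instead for a two-part proof $\rho\otimes\sigma$ and have the verifier apply $E\otimes\Id$, that is, measure $E$ on the first register and discard the second. This is a Bell measurement (take the trivial POVM $\{\Id\}$ on the second factor and $p_{ij}$ supported accordingly), and the optimal product proof achieves exactly the original acceptance probability, so completeness and soundness transfer verbatim.

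The substance is the reverse inclusion $\tsf{QMA}^{\mathrm{Bell}}(2)\subseteq\tsf{QMA}$. Here I would begin from the observation that, writing the Bell measurement as $M=\sum_{ij}p_{ij}M_i\otimes N_j$, the acceptance probability on a product proof factorises bilinearly,
\[
\Tr\squ*{M(\rho\otimes\sigma)}=\sum_{ij}p_{ij}\Tr\squ*{M_i\rho}\Tr\squ*{N_j\sigma},
\]
so the optimal value is $\max_\sigma\lambda_{\max}(E_\sigma)$, where $E_\sigma=\sum_{ij}p_{ij}\Tr\squ*{N_j\sigma}\,M_i$ is a POVM effect on the first register depending affinely on $\sigma$. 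Since $\sigma\mapsto\lambda_{\max}(E_\sigma)$ is convex, the optimum is attained at a pure $\sigma=\ketbra{\xi}$, and the second proof influences the outcome only through the classical statistics $q_j=\braket{\xi}{N_j}{\xi}$. The natural single-prover protocol is therefore to hand the verifier the first-register state $\rho$ together with a certificate of the effective reweighting that the second proof induces on $\{M_i\}$, and to have her measure $\{M_i\}$ and accept according to that reweighting; completeness is immediate from the honest product strategy.

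The hard part --- and the reason this is genuinely Brandão's theorem rather than a one-line reduction --- is soundness. A Bell (separable) measurement can assign strictly higher value to an \emph{entangled} bipartite state than to any product state, through correlations of Bell-inequality type, so one cannot simply submit $\rho\otimes\sigma$ as a single, possibly entangled, proof and re-run the Bell measurement; this is exactly the $\tsf{QMA}(2)$-versus-$\tsf{QMA}$ difficulty. Moreover, the classical certificate of the second register's statistics is \emph{a priori} exponentially large, since $\{M_i\}$ may have exponentially many outcomes. The crux is thus to certify the product-state optimum with a single quantum proof while ruling out any advantage from entanglement. I would invoke \cite{Bra08}, whose Theorem 6.5.2 resolves precisely this obstruction by exploiting the restricted, separable nature of the measurement --- bounding the gain available from entanglement and converting the bilinear optimisation into an equivalent single-prover verification --- to conclude $\tsf{QMA}^{\mathrm{Bell}}(2)\subseteq\tsf{QMA}$, and hence equality.
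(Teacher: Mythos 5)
Your proposal matches the paper exactly in substance: the paper does not prove this statement internally but imports it wholesale as a special case of Theorem 6.5.2 of \cite{Bra08}, and your treatment of the nontrivial inclusion $\tsf{QMA}^{\mathrm{Bell}}(2)\subseteq\tsf{QMA}$ likewise (and correctly) defers to that citation after accurately setting up the bilinear structure and identifying why soundness against entanglement is the genuine obstruction. Your added proof of the easy direction via $M=E\otimes\Id$ and the convexity argument reducing to pure $\sigma$ are both correct, so there is nothing to flag.
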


Now, we use this to show the inclusion.

\begin{proposition}\label{prop:eb-bell}
    Let $c,s:\N\rightarrow[0,1]$. Then, $\ttt{qClique}(2,\mc{C}_{\mathrm{EB}'})_{c,s}\subseteq\tsf{QMA}^{\mathrm{Bell}}(2)_{\frac{1+c}{2},\frac{1+s}{2}}$.
\end{proposition}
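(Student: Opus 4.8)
The plan is to construct a $\tsf{QMA}^{\mathrm{Bell}}(2)$ verifier and then invoke the quoted special case of Brand\~ao's theorem. Given an instance $C\in\mc{C}_{\mathrm{EB}'}$, write $\Phi=\Phi_C$ and use that it is entanglement-breaking, so it admits a representation $\Phi(\rho)=\sum_i\Tr\squ*{M_i\rho}\sigma_i$ for a POVM $\{M_i\}_i$ and states $\{\sigma_i\}_i$. The verifier receives a two-register product proof $\rho_1\otimes\rho_2$, applies $\Phi$ to each register separately, and runs the swap test on $\Phi(\rho_1)\otimes\Phi(\rho_2)$, accepting exactly when the swap test passes. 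Since the input to the swap test is a product state, the acceptance probability is $\tfrac{1}{2}+\tfrac{1}{2}\Tr\squ*{\Phi(\rho_1)\Phi(\rho_2)}$, as recorded in \cref{subsec:SWAP}.

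First I would check that this verification is genuinely a Bell measurement, which is what places the problem in $\tsf{QMA}^{\mathrm{Bell}}(2)$. Because $\Phi$ is entanglement-breaking, applying $\Phi$ to a register and then proceeding is equivalent to first measuring that register with $\{M_i\}$ and preparing $\sigma_i$; conditioned on the outcomes $i$ and $j$ from the two registers, the swap test then accepts with the \emph{fixed} probability $p_{ij}=\tfrac{1}{2}+\tfrac{1}{2}\Tr\squ*{\sigma_i\sigma_j}\in[0,1]$, so the remaining step is pure classical post-processing. Concretely, using $\sum_{i}\Tr\squ*{M_i\rho_1}=\sum_j\Tr\squ*{M_j\rho_2}=1$, the accept probability expands as
\[
\tfrac{1}{2}+\tfrac{1}{2}\Tr\squ*{\Phi(\rho_1)\Phi(\rho_2)}=\sum_{i,j}\Tr\squ*{M_i\rho_1}\Tr\squ*{M_j\rho_2}\,p_{ij},
\]
so the accept operator is $M=\sum_{i,j}p_{ij}M_i\otimes M_j$, exactly the Bell-measurement form with $N_j=M_j$.

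For completeness, if $C$ is a yes instance then $\Phi$ has a $(c,2)$-clique $\rho_1,\rho_2$; the prover sends $\rho_1\otimes\rho_2$ and the acceptance probability is at least $\tfrac{1}{2}+\tfrac{c}{2}=\tfrac{1+c}{2}$. The delicate direction is soundness, and here the defining feature of $\mc{C}_{\mathrm{EB}'}$ is essential: the verifier never tests orthogonality, yet orthogonality must be available \emph{for free}. Given any product proof $\rho_1\otimes\rho_2$ on a no instance, I would use that the output of $\Phi$ is independent of the distinguished qubit $R$, so $\Phi(\rho_i)=\Phi(\Tr_R(\rho_i)\otimes\sigma_R)$. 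Setting $\rho_1'=\Tr_R(\rho_1)\otimes\ketbra{0}$ and $\rho_2'=\Tr_R(\rho_2)\otimes\ketbra{1}$ on $R$ produces a pair of \emph{orthogonal} states (separated by the $R$ register) with $\Phi(\rho_i')=\Phi(\rho_i)$, hence $\Tr\squ*{\Phi(\rho_1)\Phi(\rho_2)}=\Tr\squ*{\Phi(\rho_1')\Phi(\rho_2')}$. Since $\rho_1',\rho_2'$ are orthogonal and $C$ is a no instance, the absence of an $(s,2)$-clique forces this overlap to be at most $s$, so the acceptance probability is at most $\tfrac{1+s}{2}$. Combining the three parts yields $\ttt{qClique}(2,\mc{C}_{\mathrm{EB}'})_{c,s}\subseteq\tsf{QMA}^{\mathrm{Bell}}(2)_{\frac{1+c}{2},\frac{1+s}{2}}$.

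The main obstacle is precisely the soundness step: the no-instance promise bounds the overlap only for orthogonal inputs, whereas a dishonest prover may send arbitrary, non-orthogonal states. Restricting to $\mc{C}_{\mathrm{EB}'}$ resolves this because the traced-out qubit $R$ lets me orthogonalise any pair without altering the channel outputs. Without this property one would be forced to verify orthogonality by a swap test on the raw proof states, which, as Harrow and Montanaro show, is not even an LOCC measurement, let alone a Bell measurement, so the reduction to $\tsf{QMA}^{\mathrm{Bell}}(2)$ would collapse.
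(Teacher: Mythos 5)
Your proposal is correct and follows essentially the same route as the paper: the same verifier (apply $\Phi$ to each register, then swap test), the same soundness argument via orthogonalising any proof pair on the traced-out qubit $R$ without changing the channel outputs, and the same Bell-measurement structure. The only cosmetic difference is that you derive the coefficients $p_{ij}=\tfrac{1}{2}+\tfrac{1}{2}\Tr\squ*{\sigma_i\sigma_j}$ operationally by conditioning on the POVM outcomes, whereas the paper computes the accept operator as $(\Phi^\dag\otimes\Phi^\dag)(\Pi_{sym})=\sum_{i,j}\Tr\squ*{\Pi_{sym}(\sigma_i\otimes\sigma_j)}M_i\otimes M_j$ --- and these coefficients coincide since $\Tr\squ*{\Pi_{sym}(\sigma_i\otimes\sigma_j)}=\tfrac{1}{2}+\tfrac{1}{2}\Tr\squ*{\sigma_i\sigma_j}$.
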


Using the result of Brand\~ao, we get as a corollary that $\ttt{qClique}(2,\mc{C}_{\mathrm{EB}'})_{c,s}\in\tsf{QMA}$ when $c,s:\N\rightarrow(0,1)_{\exp}$ with polynomial gap.

\begin{proof}
    Note that if $\Phi$ is a channel whose output is independent of one of its qubits $Q_i$, then if there exists a pair of states $\rho,\sigma$ such that $\Tr\squ*{\Phi(\rho)\Phi(\sigma)}\geq\alpha$, then there exists a pair of orthogonal states $\rho'=\Tr_{Q_i}(\rho)\otimes\ketbra{0}$, $\sigma'=\Tr_{Q_i}(\sigma)\otimes\ketbra{1}$ such that $\Tr\squ*{\Phi(\rho')\Phi(\sigma')}=\Tr\squ*{\Phi(\rho)\Phi(\sigma)}\geq\alpha$. As such, the verifier need not verify that the two states in the clique are orthogonal. Thus, given some circuit $C$ and a purported proof $\rho\otimes\sigma$, the verifier simply computes $\Phi(\rho)\otimes\Phi(\sigma)$, performs the swap test, and outputs $1$ if and only if the swap test accepts. If $C$ is a yes instance, then the prover can provide a proof $\rho\otimes\sigma$ such that $\Tr\squ*{\Phi(\rho)\Phi(\sigma)}\geq c(|C|)$, in which case the prover accepts with probability $\frac{1+c(|C|)}{2}$. Otherwise, if $C$ is a no instance, then for all states $\rho,\sigma$, $\Tr\squ*{\Phi(\rho)\Phi(\sigma)}\leq s(|C|)$, by the remark at the beginning of the proof. As such, the swap test accepts with probability $\frac{1+s(|C|)}{2}$.

    It remains to show that the measurement that the verifier makes is a Bell measurement. The measurement operator for outcome $1$ is $M=(\Phi^\dag\otimes\Phi^\dag)(\Pi_{sym})$ where $\Pi_{sym}$ is the projector onto the symmetric subspace. Writing $\Phi(\rho)=\sum_i \Tr\squ*{M_i\rho}\sigma_i$, as it is entanglement-breaking, we have that $\Phi^\dag(A)=\sum_i\Tr\squ*{A\sigma_i}M_i$, so
    \begin{align}
        M=\sum_{i,j}\Tr\squ*{\Pi_{sym}(\sigma_i\otimes\sigma_j)}M_i\otimes M_j,
    \end{align}
    which is manifestly a Bell measurement operator.
\end{proof}

\begin{theorem}\label{thm:qma-hard}
    There exist $c,s:\N\rightarrow(0,1)_{\exp}$ with polynomial gap such that $\ttt{qClique}(2,\mc{C}_{\mathrm{EB}'})_{c,s}$ is $\tsf{QMA}$-hard.
\end{theorem}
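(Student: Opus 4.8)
The plan is to mirror the proof of \cref{thm:clique-qma2-hard}, but in a substantially simplified form: because $\tsf{QMA}$ has a single prover, the soundness guarantee of the verification channel already holds over \emph{all} input states, not just the separable ones, so we do not need the separability-enforcing summand $\Phi_1$ of \cref{lem:separator} nor the self-testing machinery of \cref{lem:like-5.5}. The whole reduction comes down to one application of \cref{lem:algorithm-channel}. Concretely, given $L=(Y,N)\in\tsf{QMA}$, I would first amplify to $L\in\tsf{QMA}_{3/4,1/4}$, and for an instance $x$ take the polynomial-time constructible verification channel $\Phi_x:\mc{L}(Q^{\otimes p(|x|)})\rightarrow\mc{L}(Q)$, so that $\braket{1}{\Phi_x(\rho)}{1}\geq\tfrac34$ for some $\rho$ when $x\in Y$ and $\braket{1}{\Phi_x(\rho)}{1}\leq\tfrac14$ for all $\rho$ when $x\in N$. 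Applying \cref{lem:algorithm-channel} to $\Phi_x$ produces $\Psi_x:\mc{L}(Q^{\otimes(p(|x|)+1)})\rightarrow\mc{L}(Q_\perp)$, and the Karp reduction is $x\mapsto$ a circuit for $\Psi_x$, which has size polynomial in $|x|$.

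The first step is to check that $\Psi_x\in\mc{C}_{\mathrm{EB}'}$. It is entanglement-breaking because it traces out the appended qubit $R$, applies $\Phi_x$, measures the output in the computational basis, and then prepares one of the two \emph{fixed} states $\mu_Q$ or $\ketbra{\perp}$ according to the outcome; explicitly $\Psi_x(\rho)=\Tr\squ*{M_0\rho}\mu_Q+\Tr\squ*{M_1\rho}\ketbra{\perp}$ with POVM elements $M_b=(\Phi_x\circ\Tr_R)^\dag(\ketbra{b})$. Its output is manifestly independent of $R$ since $R$ is discarded immediately, placing $\Psi_x$ in the subclass $\mc{C}_{\mathrm{EB}'}$. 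This is exactly the role of the orthogonality qubit: since the verifier in \cref{prop:eb-bell} cannot run a swap test to certify orthogonality in $\tsf{QMA}$, we arrange for that qubit to be traced out, so the prover need not supply orthogonal states at all — the verifier is guaranteed an orthogonal extension exists.

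Finally I would establish completeness and soundness with a constant gap. For completeness, if $x\in Y$ with accepting state $\rho$, then $\rho\otimes\ketbra{0}$ and $\rho\otimes\ketbra{1}$ are orthogonal and, by \cref{lem:algorithm-channel}, form a $(\tfrac{19}{32},2)$-clique. For soundness, the key observation is that because $R$ is traced out, $\Psi_x(\rho')$ depends only on $\bar\rho'=\Tr_R(\rho')$ through $a:=\braket{1}{\Phi_x(\bar\rho')}{1}$; one computes $\Psi_x(\rho')=(1-a)\mu_Q+a\ketbra{\perp}$, and since $\Tr(\mu_Q\ketbra{\perp})=0$, $\Tr(\mu_Q^2)=\tfrac12$, and $\Tr(\ketbra{\perp}^2)=1$, any orthogonal pair $\rho',\sigma'$ with values $a,b$ satisfies $\Tr\squ*{\Psi_x(\rho')\Psi_x(\sigma')}=\tfrac12(1-a)(1-b)+ab$. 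When $x\in N$ single-prover soundness forces $a,b\leq\tfrac14$ for \emph{every} input, and on $[0,\tfrac14]^2$ the function $\tfrac12(1-a)(1-b)+ab=\tfrac12-\tfrac{a+b}{2}+\tfrac{3ab}{2}$ is decreasing in each variable, hence bounded by its value $\tfrac12$ at the origin. Thus no instances have all overlaps at most $\tfrac12$ while yes instances achieve $\tfrac{19}{32}$, so taking $c=\tfrac{19}{32}$ and any $s\in(\tfrac12,\tfrac{19}{32})$, say $s=\tfrac{17}{32}$, shows $\ttt{qClique}(2,\mc{C}_{\mathrm{EB}'})_{c,s}$ is $\tsf{QMA}$-hard with constant, hence polynomial, gap. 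The only real obstacle is the soundness bound over all (possibly entangled) input pairs; this is precisely what the immediate tracing-out of $R$ together with single-prover soundness renders routine, in contrast to the $\tsf{QMA}(2)$ setting where \cref{lem:separator,lem:like-5.5} were required.
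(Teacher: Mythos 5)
Your proposal is correct, but it takes a genuinely different route from the paper. The paper proves \cref{thm:qma-hard} by reducing from the $k$-local Hamiltonian problem: given $H=\sum_{i=1}^t H_i$, it builds the channel that samples $i$ uniformly, measures the POVM $\{H_i,\Id-H_i\}$ (tensored with identity on an extra, immediately-discarded qubit), and prepares $\mu_Q$ or $\ketbra{\perp}$ accordingly; soundness is obtained by bounding the purity $\Tr\squ*{\Phi(\rho)^2}\leq s$ and invoking Cauchy--Schwarz, which yields completeness $c=1-2\alpha/t$ and soundness $s=1-\beta/(2t)$, a $1/\mathrm{poly}$ gap. You instead reduce directly from an amplified $\tsf{QMA}_{3/4,1/4}$ verifier by reusing \cref{lem:algorithm-channel} --- in effect specializing the \cref{thm:clique-qma2-hard} construction with the separator summand $\Phi_1$ and \cref{lem:like-5.5} deleted, which is sound precisely because single-prover soundness already quantifies over \emph{all} states. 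Your soundness computation $\Tr\squ*{\Psi_x(\rho')\Psi_x(\sigma')}=\tfrac{1}{2}(1-a)(1-b)+ab$ with $a,b\leq\tfrac14$, maximized at the origin with value $\tfrac12$, is correct (and even slightly sharper than needed, since \cref{lem:algorithm-channel} only records the product-input case), and your verification that $\Psi_x\in\mc{C}_{\mathrm{EB}'}$ --- measure with the POVM $\{M_0,M_1\}$, $M_b=(\Phi_x\circ\Tr_R)^\dag(\ketbra{b})$, then prepare a fixed state, with the traced-out qubit supplying orthogonality --- matches the paper's own rationale for the class. What each approach buys: yours gives a \emph{constant} gap ($c=\tfrac{19}{32}$, $s=\tfrac{17}{32}$, stronger than the theorem's stated polynomial gap) and unifies the $\tsf{QMA}$- and $\tsf{QMA}(2)$-hardness proofs under one lemma; the paper's gives only a polynomial gap but exhibits hardness for an especially structured subfamily --- channels that measure a single randomly chosen $k$-local term --- tying the result directly to the canonical $\tsf{QMA}$-complete problem. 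Both are valid proofs of the statement.
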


\begin{proof}
    We show this by reducing the local Hamiltonian problem to the clique problem. Let $H=\sum_{i=1}^t H_i\in\mc{B}(Q^{\otimes n})$ be an instance of the $k$-local Hamiltonian problem with completeness $\alpha$ and soundness $\beta$. By construction~\cite{KSV02}, we may assume that the Hamiltonian is normalised in the sense that $0\leq H_i\leq\Id$. By gap amplification, we may further assume that $\beta>4\alpha$ with polynomial gap. Let $c=1-2\alpha/t$ and $s=1-\beta/(2t)$. This gives $c$ and $s$ with polynomial gap.
    
    We construct the channel $\Phi:\mc{B}(Q^{\otimes n+1})\rightarrow\mc{B}(Q_\perp)$ as
    \begin{align}
    \begin{split}
        \Phi(\rho)&=\Tr\squ*{(H/t\otimes\Id)\rho}\mu_Q+\Tr\squ*{\parens*{(\Id-H/t)\otimes\Id}\rho}\ketbra{\perp}\\
        &=\frac{1}{t}\sum_{i=1}^t\Tr\squ*{(H_i\otimes\Id)\rho}\mu_Q+\Tr\squ*{\parens*{(\Id-H_i)\otimes\Id}\rho}\ketbra{\perp}.
    \end{split}
    \end{align}
    Since this channel consists of randomly sampling $i\in\{1,\ldots,t\}$ uniformly at random, measuring according to the POVM $\{H_i,\Id-H_i\}$, and then preparing a state of fixed size, it can be described by a circuit $C_H$ of size polynomial in the size of the description of $H$. 
    
    Now, suppose $H$ is a yes instance. Then, there exists a state $\braket{\psi}{H}{\psi}\leq\alpha$. So, taking $\rho=\ketbra{\psi}\otimes\ketbra{0}$, $\sigma=\ketbra{\psi}\otimes\ketbra{1}$, we get that
    \begin{align}
    \begin{split}
        \Tr\squ*{\Phi(\rho)\Phi(\sigma)}&=\frac{1}{2}(\braket{\psi}{H}{\psi}/t)^2+(1-\braket{\psi}{H}{\psi}/t)^2\\
        &\geq1-2\braket{\psi}{H}{\psi}/t\geq 1-2\alpha/t=c.
    \end{split}
    \end{align}
    On the other hand, if $H$ is a no instance, then for all states $\rho\in\mc{B}(Q^{\otimes n})$, $\Tr\squ*{H\rho}\geq\beta$. Then, for any state $\rho\in\mc{B}(Q^{\otimes n+1})$, we have that $\Tr\squ*{(H\otimes\Id)\rho}\geq\beta$, giving
    \begin{align}
    \begin{split}
        \Tr\squ*{\Phi(\rho)^2}&=\frac{1}{2}\parens*{\Tr\squ*{(H\otimes\Id)\rho}/t}^2+(1-\Tr\squ*{(H\otimes\Id)\rho}/t)^2\\
        &\leq1-\frac{1}{2}\Tr\squ*{(H\otimes\Id)\rho}/t\leq1-\beta/2t=s.
    \end{split}
    \end{align}
    Thus, for any pair of orthogonal states $\rho,\sigma$, the triangle inequality gives that $\Tr\squ*{\Phi(\rho)\Phi(\sigma)}\leq\sqrt{\Tr\squ*{\Phi(\rho)^2}\Tr\squ*{\Phi(\sigma)^2}}\leq s$.
\end{proof}

We finish this section by noting what happens to the quantum $2$-independent set problem on this reduced set of circuits. It turns out it is also a $\tsf{QMA}$-complete problem in an almost identical way to the clique problem, so we only provide a sketch of this here. It is an easy exercise to verify that \cref{prop:eb-bell} may be slightly modified to show that $\ttt{qIS}(2,\mc{C}_{\mathrm{EB}'})_{c,s}\subseteq\tsf{QMA}^{\mathrm{Bell}}(2)_{\frac{c}{2},\frac{s}{2}}$, and hence the quantum $2$-independent set problem for this subset of entanglement-breaking channels is in $\tsf{QMA}$. On the other hand, by a construction similar to \cref{thm:qma-hard} and to a construction in Theorem 3.1 of~\cite{BS08}, we also get that the problem is $\tsf{QMA}$-hard. As such, this leads to the following theorem.

\begin{theorem}
    There exist $c,s:\N\rightarrow(0,1)_{\exp}$ with polynomial gap such that $\ttt{qIS}(2,\mc{C}_{\mathrm{EB}'})_{c,s}$ is $\tsf{QMA}$-complete.
\end{theorem}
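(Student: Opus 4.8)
The plan is to prove the two inclusions separately, for the completeness and soundness $c=1-2\alpha/t$ and $s=1-\beta/(2t)$ inherited from a $k$-local Hamiltonian instance with completeness $\alpha$, soundness $\beta>4\alpha$, and $t$ normalised terms, exactly as in \cref{thm:qma-hard}. For containment in $\tsf{QMA}$, I would mirror \cref{prop:eb-bell}, but have the verifier accept precisely when the swap test \emph{fails}. Given $C\in\mc{C}_{\mathrm{EB}'}$ and a product proof $\rho\otimes\sigma$, the verifier computes $\Phi(\rho)\otimes\Phi(\sigma)$ and runs the swap test, accepting with probability $\frac{1}{2}-\frac{1}{2}\Tr\squ*{\Phi(\rho)\Phi(\sigma)}$. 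As in \cref{prop:eb-bell}, the associated measurement operator is $(\Phi^\dag\otimes\Phi^\dag)(\Id-\Pi_{\mathrm{sym}})=\sum_{i,j}\Tr\squ*{(\Id-\Pi_{\mathrm{sym}})(\sigma_i\otimes\sigma_j)}M_i\otimes M_j$, which is a Bell measurement operator since its coefficients are nonnegative. Soundness uses that $C\in\mc{C}_{\mathrm{EB}'}$ ignores some qubit $R$, so the overlap of any (possibly non-orthogonal) pair $\rho,\sigma$ equals that of the orthogonal pair $\Tr_R(\rho)\otimes\ketbra{0},\Tr_R(\sigma)\otimes\ketbra{1}$; the no-promise that every orthogonal pair has overlap $>1-s$ therefore bounds the overlap of every proof. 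This yields $\ttt{qIS}(2,\mc{C}_{\mathrm{EB}'})_{c,s}\subseteq\tsf{QMA}^{\mathrm{Bell}}(2)_{c/2,s/2}$, and the theorem of Brand\~ao~\cite{Bra08} places it in $\tsf{QMA}$.

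For hardness, I would reduce the local Hamiltonian problem with a channel similar to that of \cref{thm:qma-hard} but with \emph{pure} outputs, following the idea in Theorem 3.1 of~\cite{BS08}. Concretely, build a measure-and-prepare (hence entanglement-breaking) channel acting on the $n$ Hamiltonian qubits together with a flag qubit and one extra ignored qubit $R$: the channel traces out $R$, measures the accept operator $(\Id-H/t)\otimes\ketbra{0}_{\mathrm{flag}}$ (so it rejects outright whenever the flag is set), and outputs the pure bit $\ket{1}$ on acceptance and $\ket{0}$ on rejection. Ignoring $R$ guarantees membership in $\mc{C}_{\mathrm{EB}'}$. On two inputs with acceptance probabilities $p,q$ the output overlap is $pq+(1-p)(1-q)$, which over $p,q\in[0,p_{\max}]$ is minimised at $(p_{\max},0)$ with value $1-p_{\max}$, where $p_{\max}=1-\lambda_{\min}(H)/t$ is attained with flag $0$ and $p=0$ is attained on any flag-$1$ state. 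Thus the minimum overlap equals $\lambda_{\min}(H)/t$: for a yes instance it is $\leq\alpha/t\leq 1-c$, giving a genuine $(c,2)$-independent set (orthogonality coming from the flag), while for a no instance it is $\geq\beta/t>1-s$, so no $(s,2)$-independent set exists. These thresholds have polynomial gap $(\beta-4\alpha)/(2t)$, and since $\alpha,\beta$ are inverse-polynomial and $t$ polynomial the same $(c,s)$ lie in $(0,1)_{\exp}$, so the inclusion and hardness combine to $\tsf{QMA}$-completeness.

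The main obstacle is ruling out \emph{spurious} independent sets in the no case. If one reused the channel of \cref{thm:qma-hard}, whose reject branch outputs the mixed state $\mu_Q$, then a highly mixed output has small overlap with itself, so a single high-energy input --- always available at the top of the spectrum --- produces a low-overlap pair irrespective of the ground energy, destroying soundness. I would resolve this exactly by keeping the outputs pure, so that distinguishability is governed solely by the acceptance probabilities and the minimum overlap collapses to $1-p_{\max}$, depending only on $\lambda_{\min}(H)$. The flag qubit is the complementary ingredient: it forces the existence of a rejecting input, which is what makes yes instances actually realise a distinguishable pair (a low-energy Hamiltonian need not otherwise have any rejecting state). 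Verifying that this pure-output channel is still entanglement-breaking and that the overlap computation is tight at the claimed corner are the remaining routine checks.
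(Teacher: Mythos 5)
Your proposal is correct and follows essentially the same route the paper sketches for this theorem: containment by modifying \cref{prop:eb-bell} so that the verifier accepts exactly when the swap test fails (which remains a Bell measurement, with nonnegative coefficients $\Tr\squ*{(\Id-\Pi_{sym})(\sigma_i\otimes\sigma_j)}$), and hardness via a pure-output measure-and-prepare channel in the spirit of \cref{thm:qma-hard} and Theorem 3.1 of~\cite{BS08}, with the flag qubit providing the completeness witness and the ignored qubit $R$ both placing the channel in $\mc{C}_{\mathrm{EB}'}$ and supplying orthogonality --- including your correct diagnosis that the mixed reject-output $\mu_Q$ of \cref{thm:qma-hard} would create pseudo-independent sets and must be replaced by pure outputs. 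One minor caveat: when $\lambda_{\min}(H)>t/2$ (so $p_{\max}<\tfrac{1}{2}$) the bilinear overlap $pq+(1-p)(1-q)$ on $[0,p_{\max}]^2$ is minimised at the corner $(p_{\max},p_{\max})$ rather than $(p_{\max},0)$, but its value there is $1-2p_{\max}(1-p_{\max})\geq\tfrac{1}{2}\geq 1-s$, so soundness is unaffected.
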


\section{Classical Collision and Clique Problems}\label{Sec:ClassicalCase}

In this section, we formally introduce the classical versions of the clique and collisions problems and study their computational complexity.

\subsection{The deterministic case}\label{subsec:DeterministicCase}

The \emph{collision problem} for functions is of fundamental interest to complexity theory and cryptography. It asks whether, given some function $f:X\rightarrow Y$, is there a pair of elements in the domain $x,y\in X$ with the same image $f(x)=f(y)$ --- a collision.

Collisions also have a natural interpretation in the language of channel capacity. The function $f$ can be seen as a deterministic (\emph{i.e.} not noisy) channel from the input set $X$ to the output set $Y$. As in the case for a noisy channel, this induces a \emph{confusability graph} on the set $X$: $x$ and $y$ are connected by an edge if and only if $f(x)=f(y)$. That is, the confusability graph indicates the collisions of the function. The graphs that this procedure induce are, in themselves, not particularly interesting --- they are simply disjoint unions of complete graphs on the subsets of the input set that collide --- but they allow to interpret collisions in a graph-theoretic way, as cliques. With this in mind, we define the following graph-theoretic notions for functions.

\begin{definition} Let $X,Y$ be sets and $k\in\N$.
    \begin{itemize}
        \item A function $f:X\rightarrow Y$ has a \emph{$k$-clique} (or $k$-collision) if there exist distinct $x_1,\ldots,x_k\in X$ such that $f(x_1)=\ldots=f(x_k)$.

        \item A function $f:X\rightarrow Y$ has a $k$-independent set if there exist distinct $x_1,\ldots,x_k\in X$ such that $f(x_i)\neq f(x_j)$ for all $i\neq j$.
    \end{itemize}
\end{definition}

To study the complexity of finding cliques and independent sets, we fix a presentation for the functions we consider. The most basic way to present a function is as a table of values. Though this is the most general, it is highly inefficient: a function that may be easy to compute is described with the same complexity as function that is more difficult, and the length of the function description is exponential in the size of the input set. Two natural and equivalent options from the point of view of computational complexity are to present a function as a Turing machine or as a circuit diagram. We will use the circuit description, as this will generalise more naturally to quantum channels, represented as quantum circuits. With this choice of convention, we can formally present the clique and independent set problems in the deterministic case.

\begin{definition}
    Let $k\in\N$. The language for the \emph{$k$-clique problem} is
    \begin{align}
        \ttt{Clique}(k)=\set*{\text{circuits }C}{f_C:\{0,1\}^{\ttt{in}(C)}\rightarrow\{0,1\}^{\ttt{out}(C)}\text{ has a $k$-clique}}.
    \end{align}
    The language for the \emph{$k$-independent set problem} is
    \begin{align}
        \ttt{IS}(k)=\set*{\text{circuits }C}{f_C:\{0,1\}^{\ttt{in}(C)}\rightarrow\{0,1\}^{\ttt{out}(C)}\text{ has a $k$-independent set}}.
    \end{align}
\end{definition}

It is well known that the collision problem for general functions is $\tsf{NP}$-complete; we will give simple constructions that provide this completeness in the context of our graph-theoretic language. First, we recall the definition of $\tsf{NP}$.

\begin{definition}
    A language $L\subseteq\{0,1\}^\ast$ is in $\tsf{NP}$ if there exist polynomials $p,q:\N\rightarrow\N$ and a Turing machine $V$ with two input tapes and one output tape such that
    \begin{itemize}
        \item For all $x,y\in\{0,1\}^\ast$, $V$ halts on input $(x,y)$ in $q(|x|)$ steps.

        \item For all $x\in L$, there exists $y\in\{0,1\}^{p(|x|)}$ such that $V(x,y)=1$.

        \item For all $x\notin L$ and for all $y\in\{0,1\}^\ast$, $V(x,y)=0$.
    \end{itemize}
\end{definition}

To finish this section, we give proofs of the $\tsf{NP}$-completeness of both the $2$-clique and the $2$-independent set problems. Note that the proofs of inclusion of the $2$-clique and $2$-independent set problems in $\tsf{NP}$ generalise directly to the case of $k$-cliques and $k$-independent sets (for $k$ polynomial in the instance size), showing that these \emph{a priori} harder problems are also $\tsf{NP}$-complete. In \cref{sec:a-k-cliques}, we also provide a direct reductions to the $2$-clique and independent set problems. 

\begin{proposition}
    $\ttt{Clique}(2)$ is $\tsf{NP}$-complete.
\end{proposition}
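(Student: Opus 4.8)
The plan is to show $\ttt{Clique}(2)\in\tsf{NP}$ and then that it is $\tsf{NP}$-hard. For membership, the natural witness for a yes instance $C$ is a colliding pair $(x,x')$ with $x\neq x'$. The verifier $V$, on input $(C,(x,x'))$, simply evaluates $f_C(x)$ and $f_C(x')$ by running the circuit $C$ twice, checks that $x\neq x'$, and accepts iff $f_C(x)=f_C(x')$. Since $C$ encodes a polynomial-size circuit, each evaluation runs in time polynomial in $|C|$, and the witness $(x,x')$ has length $2\,\ttt{in}(C)$, which is polynomial in $|C|$. If $C$ has a $2$-clique then such a witness exists and $V$ accepts; if not, then no pair collides and $V$ rejects on every witness. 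This establishes membership. (Note that this argument generalises verbatim to $\ttt{Clique}(k)$ by taking the witness to be a $k$-tuple of distinct inputs, as remarked in the excerpt.)

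For hardness, I would reduce from a known $\tsf{NP}$-complete problem --- the cleanest choice is $\ttt{SAT}$ (or $\ttt{CircuitSAT}$, which matches our circuit-based formalism most directly). Given a Boolean formula $\varphi$ on $n$ variables, I construct in polynomial time a circuit $C$ computing a function $f_C:\{0,1\}^{n+1}\rightarrow\{0,1\}^{n+1}$ designed so that $f_C$ has a colliding pair if and only if $\varphi$ is satisfiable. The idea is to use the extra input bit $b$ as a ``flag'': on input $(x,b)$, if $\varphi(x)=1$ the circuit outputs a fixed sentinel value $0^{n+1}$ regardless of $b$, so that the two inputs $(x,0)$ and $(x,1)$ collide; whereas if $\varphi(x)=0$ the circuit outputs something injective in $(x,b)$, for instance $(x,b)$ itself. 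Concretely one can set $f_C(x,b)=(x\cdot\varphi(x),\,b\cdot\varphi(x))$ -- wait, I want the sentinel to coincide only when $\varphi(x)=1$; a clean choice is $f_C(x,b)=(x,b)$ when $\varphi(x)=0$ and $f_C(x,b)=0^{n+1}$ when $\varphi(x)=1$. Then distinct inputs $(x,b)\neq(x',b')$ collide exactly when both map to the sentinel, i.e. exactly when $\varphi(x)=\varphi(x')=1$; in particular a collision exists iff some satisfying assignment exists. All of this is computable by a circuit of size polynomial in $|\varphi|$, since evaluating $\varphi$ and conditionally selecting the output are elementary.

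The correctness argument then splits into the two promised directions. If $\varphi$ is satisfiable by some $x^\ast$, then $(x^\ast,0)$ and $(x^\ast,1)$ are distinct and both map to $0^{n+1}$, so $f_C$ has a $2$-clique. Conversely, if $f_C$ has a $2$-clique, the two colliding inputs must both lie in the sentinel preimage (as $f_C$ is injective on the non-satisfying inputs, and the sentinel is disjoint from their images provided we choose the sentinel outside the range of the injective branch -- this is the one point requiring a small check), whence $\varphi$ is satisfied at the corresponding assignment. I expect the main subtlety to be exactly this injectivity bookkeeping: I must ensure the injective branch never accidentally produces the sentinel value and that no two non-satisfying inputs collide with each other. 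Reserving the extra output coordinate or shifting the sentinel out of the injective range handles this cleanly, so the obstacle is minor and purely a matter of choosing the encoding carefully. Combining membership with hardness gives $\tsf{NP}$-completeness.
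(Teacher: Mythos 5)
Your proposal is correct, and it matches the paper's proof in overall shape: the membership argument (witness a pair, evaluate the circuit twice, compare) is identical --- in fact yours is slightly more careful, since you explicitly check $x\neq x'$, a check the paper's verifier as literally written omits and which is needed to reject the degenerate proof $(x,x)$ on no instances. The hardness arguments share the same flag-bit idea but use different gadgets and sources. The paper reduces generically from an arbitrary $L\in\tsf{NP}$ with verifier $V$, setting $f_x(y,b)=(y,\,b\lor V(x,y))$: because the witness $y$ is copied into the output, collisions can only occur within a single fibre $\{(y,0),(y,1)\}$, exactly when $V(x,y)=1$, so injectivity in the no case is immediate and no sentinel bookkeeping arises. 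Your reduction from $\ttt{CircuitSAT}$ instead collapses all satisfying inputs to the global sentinel $0^{n+1}$, which is what produces the boundary case you flag: the sentinel can lie in the range of the identity branch (when $\varphi(0^n)=0$, the input $(0^n,0)$ maps to $0^{n+1}$). Note, however, that your construction is already correct even without the proposed fix: the identity branch is injective on its own, so any collision must involve at least one input routed through the satisfying branch, and the existence of such an input already witnesses satisfiability --- the overlap can only create extra collisions in the yes case, never a spurious one in the no case. If you prefer the cleaner statement ``collide iff both map to the sentinel,'' disjointifying the ranges with one extra output bit, say $f(x,b)=(x,b,0)$ if $\varphi(x)=0$ and $(0^{n+1},1)$ if $\varphi(x)=1$, does it. What the paper's fibrewise collapse buys is that this bookkeeping never appears; what your route buys is a reduction from one concrete complete problem rather than a generic verifier, at the (standard) cost of invoking completeness of $\ttt{CircuitSAT}$.
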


\begin{proof}
    First, it is direct to see that the problem $\ttt{Clique}(2)\in\tsf{NP}$. In fact, given a circuit $C$, the verifier expects to receive a proof of the form $(x,y)$ for $x,y\in\{0,1\}^{\ttt{in}(C)}$, computes $f_C(x)$ and $f_C(y)$, and accepts if and only if they are equal. If $C$ is a yes instance, then the prover can simply provide a collision, causing the verifer to accept; and else, there is no clique for the prover to provide, so the verifier never accepts.

    Next, we need to see that $\ttt{Clique}(2)$ is $\tsf{NP}$-hard. Let $L\in\tsf{NP}$, let $V:\{0,1\}^\ast\times\{0,1\}^\ast\rightarrow\{0,1\}$ be its verifier Turing machine, $q$ be the halting time, and $p$ be the proof length. Then, for each $x\in\{0,1\}^\ast$, let $f_x:\{0,1\}^{p(|x|)}\times\{0,1\}\rightarrow\{0,1\}^{p(|x|)}\times\{0,1\}$ be the function $f_x(y,b)=(y,b\lor V(x,y))$. Since $V(x,y)$ halts in $q(|x|)$ steps, $f_x$ is computable in time polynomial in $|x|$, and hence has a circuit description $C_x$ that is polynomial size in $|x|$. This provides an instance of the collision problem. Now, if $x\in L$, there exists $y$ such that $V(x,y)=1$, so $f_x(y,0)=(y,1)=f_x(y,1)$, so $C_x\in\ttt{Clique}(2)$; on the other hand, if $x\notin L$, then for all $y$, $V(x,y)=0$, giving $f_x(y,b)=(y,b)$ is injective, so $C_x\notin\ttt{Clique}(2)$. As such, we have the wanted Karp reduction.
\end{proof}

\begin{proposition}
    $\ttt{IS}(2)$ is $\tsf{NP}$-complete.
\end{proposition}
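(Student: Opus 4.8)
The plan is to mirror the two halves of the $\ttt{Clique}(2)$ proof. For membership in $\tsf{NP}$, I would have the verifier expect a proof consisting of a pair of distinct inputs $(x,x')$ with $x,x'\in\{0,1\}^{\ttt{in}(C)}$, compute $f_C(x)$ and $f_C(x')$, and accept if and only if $f_C(x)\neq f_C(x')$. Completeness holds because any $2$-independent set is exactly such a pair, and soundness holds because if $f_C$ has no $2$-independent set then no pair of distinct inputs yields distinct outputs, so the verifier never accepts. This argument is immediate and, as in the clique case, generalises directly to $k$-independent sets.

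The interesting direction is $\tsf{NP}$-hardness. The key observation is that $f_C$ has a $2$-independent set precisely when $f_C$ is non-constant, so the reduction must build, from an instance $x$ of an arbitrary $L\in\tsf{NP}$, a circuit computing a function that is constant exactly when $x\notin L$. Let $V$, $p$, and $q$ be the verifier Turing machine, the proof length, and the halting time for $L$. The naive choice $f_x(y)=V(x,y)$ fails: if $x\in L$ but $V(x,y)=1$ for every $y$, then the function is constant and there is no independent set, even though $x\in L$. To remedy this I would append a flag bit $b$ and set $f_x(y,b)=b\land V(x,y)$, which is computable in time polynomial in $|x|$ (since $V(x,y)$ halts in $q(|x|)$ steps) and hence has a polynomial-size circuit description $C_x$.

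Finally I would verify that this is the wanted Karp reduction. If $x\notin L$, then $V(x,y)=0$ for all $y$, so $f_x\equiv 0$ is constant and $C_x\notin\ttt{IS}(2)$. If $x\in L$, then there is some $y^\ast$ with $V(x,y^\ast)=1$, so the distinct inputs $(y^\ast,1)$ and $(y^\ast,0)$ satisfy $f_x(y^\ast,1)=1\neq 0=f_x(y^\ast,0)$, which is a $2$-independent set, so $C_x\in\ttt{IS}(2)$. The main obstacle here is conceptual rather than computational: because a $2$-independent set is such a weak condition --- merely non-constancy of $f_C$ --- one cannot rely on the verifier itself to exhibit two differing outputs, and the flag-bit trick is precisely what forces the \emph{only} route to non-constancy to be the existence of a satisfying proof.
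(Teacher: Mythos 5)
Your proof is correct and follows the same overall strategy as the paper's: the $\tsf{NP}$ membership argument is identical, and the hardness proof is a Karp reduction whose key property --- the constructed function is constant exactly when $x\notin L$, and a witness forces non-constancy --- is the same insight the paper uses. The only divergence is the gadget: you adjoin a fresh flag bit and output the single bit $f_x(y,b)=b\land V(x,y)$, whereas the paper keeps the input as just the proof string and outputs the pair $f_x(y)=(y_1\land V(x,y),\lnot y_1\land V(x,y))$, using the first proof bit $y_1$ in place of your flag. Your version is marginally cleaner on the completeness side (the independent set is canonically $((y^\ast,0),(y^\ast,1))$, whereas the paper must pick any $z$ with $z_1\neq y_1$ and argue $f_x(z)\in\{(0,0),(z_1,\lnot z_1)\}$ differs from $(y_1,\lnot y_1)$); interestingly, your flag-bit construction is the closer analogue of the paper's own $\ttt{Clique}(2)$ gadget $f_x(y,b)=(y,b\lor V(x,y))$, with $\land$ replacing $\lor$ and the passthrough of $y$ dropped. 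Both reductions are plainly polynomial-time, so there is no gap in your argument.
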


\begin{proof}
    First, we need that $\ttt{IS}(2)\in\tsf{NP}$. As for the clique problem, the proof for a circuit $C$ can be taken to be the description of the independent set $(x,y)$. In that case, the verifier simply needs to compute $f_C(x)$ and $f_C(y)$ and check that they are different, something that it can do efficiently in the length of the circuit. 

    Conversely, let $L\in\tsf{NP}$ be some language and let $V$ be a verifier for $L$. Now for $x\in\{0,1\}^\ast$, consider the function $f_x:\{0,1\}^{p(|x|)}\rightarrow\{0,1\}^2$ defined as $f_x(y)=(y_1\land V(x,y),\lnot y_1\land V(x,y))$, where $y_1$ is the first bit of $y$. If $x\notin L$, then $V(x,y)=0$ for all $y\in\{0,1\}^{p(|x|)}$, so $f_x(y)=(0,0)$ giving that there is no independent set. On the other hand, if there is $y$ such that $V(x,y)=1$, $f_x(y)=(y_1,\lnot y_1)\neq(0,0)$. Then, for any $z\in\{0,1\}^{p(|x|)}$ such that $z_1\neq y_1$, $f_x(y)\neq f_x(z)$. So, $f_x$ has a $2$-independent set if and only if $x\in L$. Finally, it is direct to see that, since the circuit of $y\mapsto V(x,y)$ is efficiently describable in $|x|$, the circuit of $f_x$ is also.
\end{proof}

\subsection{The probabilistic case}\label{subsec:Probabilistic}

The results of the previous section generalise easily to probabilistic functions. This is the setting where cliques and independent sets of functions were originally considered, since probabilistic functions model noisy channels~\cite{Sha56}. First, we formally introduce probabilistic functions.

\begin{definition}
    A \emph{probabilistic function} $f:X\rightarrow Y$ is a function that to each $x\in X$ associates a random variable $f(x)$ on $Y$.
\end{definition}

Intuitively, cliques of a noisy channel describe inputs that are likely to become confused, and independent sets describe inputs that are unlikely to. We can think of this in the context of zero-error capacity, wherein cliques are those inputs that are always confused, and independent sets those that are never confused. However, from a complexity-theoretic point of view, this is too restrictive, since we are looking to extract a property of a probabilistic object that should hold exactly. To remedy this, we define cliques/independent sets with respect to an additional parameter that describes the probability of confusion.

\begin{definition} Let $X,Y$ be sets, $k\in\N$, and $\alpha\in[0,1]$.
    \begin{itemize}
        \item A probabilistic function $f:X\rightarrow Y$ has an \emph{$(\alpha,k)$-clique} if there exist distinct $x_1,\ldots,x_k\in X$ such that for all $i\neq j$,
        \begin{align}
            \frac{2}{k(k-1)}\sum_{1\leq i<j\leq k}\Pr\squ*{f(x_i)=f(x_j)}\geq\alpha,
        \end{align}
        where the probability is with respect to independent evaluations of the function, that is $\Pr\squ*{f(x_i)=f(x_j)}=\sum_{y\in Y}\Pr\squ*{y=f(x_i)}\Pr\squ*{y=f(x_j)}$.

        \item A probabilistic function $f:X\rightarrow Y$ has an \emph{$(\alpha,k)$-independent set} if there exist distinct $x_1,\ldots,x_k\in X$ such that for all $i\neq j$,
        \begin{align}
            \frac{2}{k(k-1)}\sum_{1\leq i<j\leq k}\Pr\squ*{f(x_i)=f(x_j)}\leq1-\alpha.
        \end{align}
    \end{itemize}
\end{definition}

Similarly to the deterministic case, we describe functions by circuits. Here, however, to allow probabilistic functions, we will consider probabilistic circuits, where there are some bits of random input.

\begin{definition}
\begin{itemize}
    \item A \emph{probabilistic circuit} $C^m$ with $n$ inputs is a circuit $C$ on $n+m$ inputs such that on each evaluation of the circuit, the final $m$ wires are given uniformly random bits as input.
    
    \item A \emph{probabilistic Turing machine} $M^p$ with $k$ input tapes is a Turing machine $M$ with $k+1$ input tapes along with a function $p:\N\rightarrow\N$. On input $x$, $M^p(x)=M(x,r)$ where $r\in\{0,1\}^{p(|x|)}$ is sampled uniformly at random.
\end{itemize}
\end{definition}

Next, we can introduce the complexity problem for probabilistic cliques and independent sets. Rather than a language, we present it as a promise problem, which provides a probability gap between the yes and no instances.

\begin{definition}
    Let $k\in\N$ and let $c,s:\N\rightarrow(0,1)$. The \emph{probabilistic $k$-clique problem} with completeness $c$ and soundness $s$ is the promise problem $\ttt{pClique}(k)_{c,s}=(Y,N)$ with
    \begin{align}
    \begin{split}
        &Y=\set*{\text{probabilistic circuits }C^m}{f_{C^m}\text{ has a $(c(|C^m|),k)$-clique}},\\
        &N=\set*{\text{probabilistic circuits }C^m}{f_{C^m}\text{ has no $(s(|C^m|),k)$-clique}}.
    \end{split}
    \end{align}

    The \emph{probabilistic $k$-independent set problem} with completeness $c$ and soundness $s$ is the promise problem $\ttt{pIS}(k)_{c,s}=(Y,N)$ with
    \begin{align}
    \begin{split}
        &Y=\set*{\text{probabilistic circuits }C^m}{f_{C^m}\text{ has a $(c(|C^m|),k)$-independent set}},\\
        &N=\set*{\text{probabilistic circuits }C^m}{f_{C^m}\text{ has no $(s(|C^m|),k)$-independent set}}.
    \end{split}
    \end{align}
\end{definition}

Analogously to the deterministic case, the probabilistic clique and independent set problems are complete for the probabilistic analogue of $\tsf{NP}$, which is $\tsf{MA}$. First, we provide a definition of this class.

\begin{definition}
    Let $c,s:\N\rightarrow(0,1)$. A promise problem $Y,N\subseteq\{0,1\}^\ast$ is in $\tsf{MA}_{c,s}$ if there exist polynomials $p,q,l:\N\rightarrow\N$ and a probabilistic Turing machine $V^l$ with two input tapes and one output tape such that
    \begin{itemize}
        \item For all $x,y\in\{0,1\}^\ast$, $V^l$ halts on input $(x,y)$ in $q(|x|)$ steps.

        \item For all $x\in Y$, there exists $y\in\{0,1\}^{p(|x|)}$ such that $\Pr\squ*{V^l(x,y)=1}\geq c(|x|)$.

        \item For all $x\in N$ and for all $y\in\{0,1\}^\ast$, $\Pr\squ*{V^l(x,y)=1}\leq s(|x|)$.
    \end{itemize}
\end{definition}
Note that, as long as there exists $N\in\N$, a polynomial $P:\N\rightarrow\N$, and a function $E:\N\rightarrow\N$ no larger than exponential such that $c(n)-s(n)\geq\frac{1}{P(n)}$ and  $s(n),1-c(n)\geq\frac{1}{E(n)}$ for all $n\geq N$, $\tsf{MA}_{c,s}=\tsf{MA}_{\frac{2}{3},\frac{1}{3}}$, so we write this class simply $\tsf{MA}$.

Next, we show completeness of the $2$-clique and independent set problems. As in the deterministic case, we refer to \cref{sec:a-k-cliques} for the reduction from $k$-cliques and independent sets to this case.

\begin{proposition}
    There exist $c,s:\N\rightarrow(0,1)$ with constant gap such that $\ttt{pClique}(2)_{c,s}$ is $\tsf{MA}$-complete.
\end{proposition}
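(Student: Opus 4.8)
The plan is to mirror the structure of the deterministic $\tsf{NP}$-completeness proof, replacing the $\tsf{NP}$ verifier with an $\tsf{MA}$ verifier and tracking the collision probability rather than exact equality. I would first fix appropriate completeness and soundness functions; since $\tsf{MA}$ is robust to the precise choice (as noted just before the statement), it suffices to build gaps that are separated by a constant, so I will aim for something like $c-s\geq\frac{1}{6}$ after construction.

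For \emph{membership} $\ttt{pClique}(2)_{c,s}\in\tsf{MA}$, the Merlin proof is a pair of distinct inputs $(x,x')$ to the probabilistic circuit $C^m$. The Arthur verifier must estimate $\Pr[f_{C^m}(x)=f_{C^m}(x')]=\sum_y\Pr[y=f(x)]\Pr[y=f(x')]$, which is exactly the collision probability on two \emph{independent} evaluations. First I would have the verifier run two independent copies of the circuit on the two claimed inputs (drawing independent random strings for the $m$ random wires of each copy) and accept iff the two outputs coincide; a single such trial accepts with probability exactly the collision probability. This gives an $\tsf{MA}$ protocol with completeness $c$ and soundness $s$ directly, with no amplification needed since $\tsf{MA}$ tolerates any inverse-polynomially separated gap. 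The only subtlety is guaranteeing $x\neq x'$, which the verifier checks syntactically and rejects if violated.

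For \emph{hardness}, I would take an arbitrary $L\in\tsf{MA}$ with probabilistic verifier $V^l$, completeness $\frac23$, soundness $\frac13$, proof length $p$, and randomness length $l$. Imitating the deterministic reduction, for each instance $x$ I would build a probabilistic function $f_x$ on inputs $(y,b)$ with $y\in\{0,1\}^{p(|x|)}$ and $b\in\{0,1\}$, which on its random tape runs $V^l(x,y)$ and outputs a value that collides across the two settings of $b$ precisely when $V$ accepts --- e.g.\ output $(y, \text{accept-bit})$ where the accept-bit is forced equal for both values of $b$ when $V(x,y)=1$ but is made to depend on $b$ (hence differ) when $V(x,y)=0$. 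The collision probability of $(y,0)$ and $(y,1)$ then equals $\Pr[V^l(x,y)=1]$ (using independent evaluations for each term), so a yes instance yields a pair with collision probability $\geq\frac23=:c$, and in a no instance \emph{every} pair differing in the relevant coordinate has collision probability $\leq\frac13=:s$. The main obstacle, and the point needing care, is ruling out \emph{spurious} cliques among input pairs that do not differ solely in $b$: I must design $f_x$ so that pairs with distinct $y$-components (or otherwise unrelated inputs) have provably small collision probability in the no case, so that the overall soundness bound holds over \emph{all} distinct pairs, not just the intended ones. This is handled by having $f_x$ copy the $y$-register into its output, forcing any pair with $y\neq y'$ to collide with probability $0$; thus the only candidate cliques are the $(y,0),(y,1)$ pairs, reducing the analysis to the single-$y$ case above. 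Polynomial-time computability of the circuit $C_x$ for $f_x$ follows since $V^l$ halts in $q(|x|)$ steps, completing the Karp reduction.
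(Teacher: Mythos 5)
Your proposal is correct and matches the paper's proof in all essentials: membership is shown by sampling the circuit independently on the two claimed inputs and accepting iff the outputs coincide, and hardness by the reduction $f_x(y,b)=(y,\,V^l(x,y)\lor b)$, where copying $y$ into the output kills all spurious cliques with $y\neq y'$, exactly as you describe. Your ``accept-bit forced equal when $V$ accepts, equal to $b$ otherwise'' is precisely the paper's $V^l(x,y)\lor b$, and the resulting parameters ($c=\tfrac23$, $s=\tfrac13$) coincide as well.
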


\begin{proof}
    First, we see that $\ttt{pClique}(2)_{c,s}\in\tsf{MA}_{c,s}$ for all $c,s$. An instance is simply a probabilistic circuit $C^m$, and the verifier expects a proof that is a $(c,2)$-clique $(x,y)$. Then, the verifier computes samples of $f_{C^m}(x)$ and $f_{C^m}(y)$, and accepts if and only if they are equal. Then, if $C^m\in Y$, $\Pr\squ*{f_{C^m}(x)=f_{C^m}(y)}\geq c(|C^m|)$, so $\Pr\squ*{V^l(C^m,(x,y))=1}\geq c(|C^m|)$. Conversely, if $C^m\in N$, $\Pr\squ*{f_{C^m}(x)=f_{C^m}(y)}\leq s(|C^m|)$, so $\Pr\squ*{V^l(C^m,(x,y))=1}\leq s(|C^m|)$, completing the proof of inclusion.
    
    Let $L=(Y,N)\in\tsf{MA}_{\frac{2}{3},\frac{1}{3}}$. Then, there exists a probabilistic polynomial time Turing machine $V^l$ such that for all $x\in Y$, there exists $y$ such that $\Pr\squ*{V^l(x,y)=1}\geq\frac{2}{3}$; and if $x\in N$, then for all $y$, $\Pr\squ*{V^l(x,y)=1}\leq\frac{1}{3}$. For each $x$, Let $C_{x}^l$ be a probabilistic circuit describing the probabilistic function $f_{C_{x}^l}(y,b)=(y,V^l(x,y)\lor b)$. If $x\in Y$, then $(y,0),(y,1)$ forms a $(2/3,2)$-clique, as
    \begin{align}
        \Pr\squ*{f_{C_{x}^l}(y,0)=f_{C_{x}^l}(y,1)}=\Pr\squ*{V^l(x,y)=1}=\frac{2}{3}.
    \end{align}
    Conversely, if $x\in N$, then suppose there exists a $(s,2)$-clique $(y,b),(y',b')$ for $s\geq 1/3$. First, since the probability of acceptance is greater than $0$, we must have $y=y'$. Also, since $(y,b)\neq(y,b')$, we can take without loss of generality $b=0$ and $b'=1$. Therefore,
    \begin{align}
        \frac{1}{3}<s\leq\Pr\squ*{f_{C_{x}^l}(y,0)=f_{C_{x}^l}(y,1)}=\Pr\squ*{V^l(x,y)=1},
    \end{align}
    which is a contradiction. This provides that $\ttt{qClique}_{\frac{2}{3},\frac{1}{3}}(2)$ is $\tsf{MA}$-hard.
\end{proof}

\begin{proposition}
    There exist $c,s:\N\rightarrow(0,1)$ with constant gap such that $\ttt{pIS}_{c,s}(2)$ is a $\tsf{MA}$-complete.
\end{proposition}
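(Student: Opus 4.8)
The plan is to mirror the structure of the preceding clique proof, making the single change that the verifier now accepts when the two sampled outputs \emph{differ} rather than agree, and to reuse the deterministic independent-set reduction in a noise-tolerant form.

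For inclusion, I would build an $\tsf{MA}$ verifier that, on instance $C^m$ and proof $(x,y)$, first rejects immediately if $x=y$, and otherwise draws one independent sample of each of $f_{C^m}(x)$ and $f_{C^m}(y)$ and accepts iff they are unequal. Since the two samples are independent, the acceptance probability is exactly $1-\Pr[f_{C^m}(x)=f_{C^m}(y)]$, so a $(c,2)$-independent set $(x,y)$ makes the verifier accept with probability $\ge c$, while on a no instance every distinct pair has collision probability $>1-s$ and hence acceptance probability $<s$. The one point requiring care---absent in the clique case---is the explicit check $x\neq y$: without it a prover could submit $x=y$ on a no instance, and two independent samples of a single spread-out output distribution would disagree with high probability, spuriously passing the test. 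Rejecting equal pairs closes this gap and gives $\ttt{pIS}(2)_{c,s}\in\tsf{MA}_{c,s}$ for all $c,s$.

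For hardness, I would start from $L=(Y,N)\in\tsf{MA}_{2/3,1/3}$ with probabilistic verifier $V^l$ and take exactly the deterministic reduction function, now probabilistic through its dependence on $V^l$: set $f_x(y)=(y_1\land V^l(x,y),\,\lnot y_1\land V^l(x,y))$, realized by an efficiently-describable probabilistic circuit $C_x^l$. Writing $p_y=\Pr[V^l(x,y)=1]$, the output $f_x(y)$ is supported on $(0,0)$ together with exactly one of $(1,0)$ or $(0,1)$ according to the bit $y_1$. The collision probability of two distinct inputs then has a clean closed form by cases on $(y_1,y_1')$: when $y_1\neq y_1'$ the only common value is $(0,0)$, so $\Pr[f_x(y)=f_x(y')]=(1-p_y)(1-p_{y'})$; and when $y_1=y_1'$ it equals $p_yp_{y'}+(1-p_y)(1-p_{y'})$. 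In the yes case, picking an accepting input $\bar y$ with $p_{\bar y}\ge 2/3$ and any $z$ with $z_1\neq\bar y_1$ gives $\Pr[f_x(\bar y)=f_x(z)]=(1-p_{\bar y})(1-p_z)\le 1/3$, so $(\bar y,z)$ is a $(2/3,2)$-independent set. In the no case every $p_y\le 1/3$, so the mixed-bit expression is at least $(2/3)^2=4/9$ and the equal-bit expression is at least $5/9$; hence every distinct pair has collision probability at least $4/9$, and there is no $(s,2)$-independent set for any $s$ with $1-s<4/9$. Taking the constant functions $c=2/3$ and $s=11/18$ yields $1-s=7/18<8/18=4/9$ and a constant gap $c-s=1/18$, which together with the inclusion establishes completeness.

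I expect the main obstacle to be the no-instance bound: one must verify that the collision probability stays bounded away from $1-c$ \emph{uniformly} over all distinct pairs and all admissible values $p_y,p_{y'}\in[0,1/3]$. This is exactly where the case split on the parity bits $(y_1,y_1')$ and the monotonicity of $g(a,b)=ab+(1-a)(1-b)$ on $[0,1/3]^2$ enter, the mixed-bit case being the binding one. The secondary subtlety is the $x\neq y$ check in the inclusion, which is genuinely necessary here even though the analogous issue was harmless for cliques, since a single mixed output can have small self-overlap and thus large disagreement probability.
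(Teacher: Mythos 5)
Your proposal is correct and follows essentially the same route as the paper: the same sample-and-compare verifier for inclusion, and the same reduction $f_x(y)=(y_1\land V^l(x,y),\,\lnot y_1\land V^l(x,y))$ with the same $4/9$ lower bound on no-instance collision probabilities for hardness. Your two refinements are actually improvements on the paper's write-up: the explicit $x\neq y$ check is genuinely needed for soundness (the verifier as stated in the paper could be fooled by a repeated input whose output distribution is spread out, since by Cauchy--Schwarz such an input can have self-collision probability as low as $(1-s)^2$ while still colliding with every other input --- precisely the pseudo-independence issue the paper only discusses in the quantum setting), and your strictly interior choice $s=11/18$ avoids the boundary defect in the paper's $s=5/9$, where a distinct pair attaining collision probability exactly $4/9$ would still constitute a $(5/9,2)$-independent set and so break the mapping of no instances to no instances.
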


\begin{proof}
    It is, as before, direct to see that $\ttt{pIS}_{c,s}(2)\in\tsf{MA}_{c,s}$ for all $c,s$ with polynomial gap. Given a circuit $C^m$ as instance and a proof $(x,y)$, the verifier simply computes samples of $f_{C^m}(x)$ and $f_{C^m}(y)$ and accepts iff they are not equal. If $(x,y)$ is a $(c,2)$-independent set of $f_{C^m}$, then $\Pr\squ*{V^l(C^m,(x,y))=1}=\Pr\squ*{f_{C^m}(x)\neq f_{C^m}(y)}\geq c$; and conversely, if $f_{C^m}$ has no $(s,2)$-independent set, $\Pr\squ*{V^l(C^m,(x,y))=1}\leq s$.

    Now, let $L=(Y,N)\in\tsf{MA}_{\frac{2}{3},\frac{1}{3}}$ with verification circuit $V_l$, and define the probabilistic circuit $C_{x}^l$ with function $f_{C_{x}^l}(y)=(y_1\land V^l(x,y),\lnot y_1\land V^l(x,y))$. If $x\in Y$, then there exists $y$ such that $\Pr\squ*{V^l(x,y)=1}\geq\frac{2}{3}$, giving that $\Pr\squ*{f_{C_{x}^l}(y)=(y_1,\lnot y_1)}\geq\frac{2}{3}$. Taking any $z$ such that $z_1\neq y_1$, the possible values of $f_{C_{x}^l}(z)$ are $(0,0)$ and $(z_1,\lnot z_1)$, so $f_{C_{x}^l}(z)\neq(y_1,\lnot y_1)$. As such, $\Pr\squ*{f_{C_{x}^l}(y)\neq f_{C_{x}^l}(z)}\geq\Pr\squ*{f_{C_{x}^l}(y)=(y_1,\lnot y_1)}\Pr\squ*{f_{C_{x}^l}(z)\neq(y_1,\lnot y_1)}\geq\frac{2}{3}$. On the other hand, if $x\in N$, for all $y$, $\Pr\squ*{V^l(x,y)=1}\leq\frac{1}{3}$, so $\Pr\squ*{f_{C_{x}^l}(y)=(0,0)}\geq\frac{2}{3}$. Thus, for all $y,y'$, $\Pr\squ*{f_{C_{x}^l}(y)=f_{C_{x}^l}(y')}\geq \Pr\squ*{f_{C_{x}^l}(y)=(0,0)}\Pr\squ*{f_{C_{x}^l}(y')=(0,0)}\geq \frac{4}{9}$. Thus, $\ttt{pIS}_{\tfrac{2}{3},\tfrac{5}{9}}$ is $\tsf{MA}$-hard.
\end{proof}

\appendix

\section{Reductions from \textit{k}-cliques to 2-cliques}\label{sec:a-k-cliques}
	
	In this appendix, we provide reductions for deterministic, probabilistic, and quantum $k$-clique and independent set problems to the corresponding $2$-clique/independent set problem. Matching the completeness results, we are able to show the reduction for both the clique and independent set problems, but for the quantum case, we show only the reduction in the case of cliques. For conciseness, we work at the level of channels --- it is direct that the results below imply complexity-theoretic results at the level of circuits.
 
	\begin{lemma}
		Let $f:X\rightarrow Y$ be a function. Define $g:X^k\times\Z_2\rightarrow X^k\times\Z_2$ as
		\begin{align}
			g(x_1,\ldots,x_k,b)=\begin{cases}(x_1,\ldots,x_k,1)&\text{ if }f(x_1)=\ldots=f(x_k)\text{ and } x_1,\ldots,x_k\text{ all distinct}\\(x_1,\ldots,x_k,b)&\text{ else}\end{cases}\,.
		\end{align}
		Then, $g$ has a $2$-clique if and only if $f$ has a $k$-clique.
	\end{lemma}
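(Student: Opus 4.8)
The plan is to prove the biconditional directly from the definitions, resting everything on one structural observation: the map $g$ never alters its first $k$ coordinates. It acts as the identity on the $X^k$ factor and can only ever overwrite the final bit $b$ with $1$. Consequently $g$ is ``almost injective,'' and any collision it exhibits must be produced by this bit-flipping mechanism alone. This is the same bookkeeping role that the appended bit plays in the $\tsf{NP}$-hardness reduction for $\ttt{Clique}(2)$ earlier in the paper.

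For the forward direction, I would suppose $f$ has a $k$-clique, witnessed by distinct $x_1,\dots,x_k$ with $f(x_1)=\dots=f(x_k)$, and then exhibit the explicit $2$-clique $(x_1,\dots,x_k,0)$ and $(x_1,\dots,x_k,1)$ for $g$. These two inputs are distinct because they differ in the last coordinate, and since the clique condition in the definition of $g$ is satisfied, both are sent to the common image $(x_1,\dots,x_k,1)$. Hence they collide, giving the desired $2$-clique of $g$.

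For the reverse direction, suppose $g$ has a $2$-clique, i.e.\ distinct inputs $(x_1,\dots,x_k,b)$ and $(x_1',\dots,x_k',b')$ with equal image. Comparing the first $k$ output coordinates, which coincide with the first $k$ input coordinates, forces $(x_1,\dots,x_k)=(x_1',\dots,x_k')$, so the inputs can differ only in the last bit; without loss of generality $b=0$ and $b'=1$. I would then split on whether the clique condition holds for $(x_1,\dots,x_k)$. If it fails, $g$ fixes both inputs, so their images differ in the last bit, contradicting the assumed collision. Therefore the condition must hold, which is precisely the assertion that $x_1,\dots,x_k$ are distinct with $f(x_1)=\dots=f(x_k)$, i.e.\ $f$ has a $k$-clique.

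The argument is a routine case analysis, so I do not anticipate a genuine obstacle. The only point requiring care is the observation that a collision in $g$ is forced to arise solely from the orthogonality bit $b$; this is what pins the equivalence to the $k$-clique condition and is the crux of why the reduction is correct. One should also note, to complete the complexity-theoretic consequence, that if $f$ has a polynomial-size circuit then so does $g$, since evaluating $g$ amounts to running $f$ on each of the $k$ coordinates, checking pairwise distinctness, and conditionally setting one bit.
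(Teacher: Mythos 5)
Your proof is correct and follows essentially the same route as the paper's: the forward direction exhibits the identical explicit $2$-clique $(x_1,\ldots,x_k,0)$, $(x_1,\ldots,x_k,1)$, and your reverse direction is just the contrapositive restatement of the paper's observation that, absent a $k$-clique of $f$, the map $g$ fixes every input (the paper says ``$g$ is the identity, hence injective,'' while you trace any collision back to the bit $b$ directly --- the same key fact that $g$ never alters its first $k$ coordinates). Your closing remark on circuit-size efficiency matches the paper's note that the mapping from the circuit of $f$ to that of $g$ is polynomial.
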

	
	It is direct to note that the mapping from the circuit of $f$ to the circuit of $g$ is efficient, that is polynomial in $k$ and the size of the description of $f$.
	
	\begin{proof}
		Suppose $f$ has a $k$-clique. Then, there exist distinct $x_1,\ldots,x_k\in X$ such that $f(x_1)=\ldots=f(x_k)$. By definition of $g$, $g(x_1,\ldots,x_k,0)=(x_1,\ldots,x_k,1)=g(x_1,\ldots,x_k,1)$, so $g$ has a $2$-clique. Conversely, suppose $f$ does not have a $k$-clique, \emph{i.e.} the pre-image of every element in $Y$ has cardinality at most $k-1$. Then, $g$ is identity -- in particular injective -- so it does not have a $2$-clique.
	\end{proof}
	
	\begin{lemma}
		Let $f:X\rightarrow Y$ be a function. Define $g:X^k\rightarrow\Z_2$ as
		\begin{align}
			g(x_1,\ldots,x_k)=\begin{cases}1&\text{if $x_1,\ldots,x_k$ all distinct and $f(x_1),\ldots,f(x_k)$ all distinct as well}\\0&\text{else}. \end{cases}
		\end{align}
		Then, $g$ has a $2$-independent set if and only if $f$ has a $2$-independent set.
	\end{lemma}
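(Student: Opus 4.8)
The plan is to reduce the entire question to whether the Boolean-valued function $g$ is constant. Since the codomain of $g$ is $\Z_2=\{0,1\}$, a pair of distinct inputs $\vec a,\vec b\in X^k$ with $g(\vec a)\neq g(\vec b)$ exists exactly when $g$ attains both values, so $g$ has a $2$-independent set if and only if $g$ is non-constant. Moreover, for $k\geq 2$ and $X$ nonempty, $g$ always attains $0$: setting all coordinates equal, $x_1=\cdots=x_k$, violates distinctness, so $g(x_1,\ldots,x_k)=0$. Hence the whole problem collapses to a single question: does $g$ attain the value $1$? By definition, $g(x_1,\ldots,x_k)=1$ precisely when $x_1,\ldots,x_k$ are all distinct \emph{and} $f(x_1),\ldots,f(x_k)$ are all distinct. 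I would also record, as in the companion clique lemma, that the map from a circuit for $f$ to a circuit for $g$ is polynomial in $k$ and the size of $f$, so the reduction is efficient.

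First I would prove the direction that directly delivers the stated conclusion. Suppose $g$ has a $2$-independent set. By the setup, $g$ is non-constant and therefore attains $1$, so there exist distinct $x_1,\ldots,x_k$ with $f(x_1),\ldots,f(x_k)$ all distinct. In particular $x_1\neq x_2$ and $f(x_1)\neq f(x_2)$, so $f$ has a $2$-independent set. This half is completely routine and yields exactly ``$f$ has a $2$-independent set'' with no loss.

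The converse is where the genuine obstacle lies, and it is the step I would flag as the crux. Running the equivalence backwards, ``$g$ attains $1$'' amounts to the existence of $k$ distinct inputs with $k$ distinct images, that is, to $f$ having a $k$-independent set. A $2$-independent set of $f$ only supplies two distinct inputs with distinct images, which for $k>2$ need not extend to a $1$-preimage of $g$; so a bare $2$-independent set of $f$ does not by itself force $g$ to be non-constant. The clean two-sided equivalence is therefore between ``$g$ has a $2$-independent set'' and ``$f$ has a $k$-independent set,'' matching the companion clique lemma that reduces $k$-cliques to $2$-cliques and matching the stated purpose of this appendix. For $k=2$ this coincides with the printed statement; for general $k$ the occurrence of ``$2$-independent set'' for $f$ should be read as ``$k$-independent set.'' Under that reading the converse is immediate: any witness $x_1,\ldots,x_k$ of a $k$-independent set of $f$ is a preimage of $1$ under $g$, so $g$ is non-constant and hence has a $2$-independent set, completing the equivalence together with the forward direction above.
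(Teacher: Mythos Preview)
Your proposal is correct and matches the paper's approach. You rightly flag that the printed statement has a typo: the paper's own proof establishes the equivalence between $g$ having a $2$-independent set and $f$ having a \emph{$k$-independent set}, exactly as you argue, via the same two observations---$g$ always attains $0$ at a constant tuple, and $g$ attains $1$ precisely when $f$ has a $k$-independent set.
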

	
	As for the previous lemma, it is direct that the construction of $g$ is efficient.
	
	\begin{proof}
		Suppose $f$ has a $k$-independent set. Then, there exist $x_1,\ldots,x_k\in X$ distinct such that $f(x_1),\ldots,f(x_k)$ are all distinct. As such $g(x_1,\ldots,x_k)=1$. But since $g(x_1,\ldots,x_1)=0$, we have a $2$-independent set $\parens*{(x_1,\ldots,x_k),(x_1,\ldots,x_1)}$. Conversely, suppose that $f$ does not have a $k$-independent set. Then, for all $x_1,\ldots,x_k\in X$ there exist $i,j$ such that $f(x_i)=f(x_j)$. As such, we always have $g(x_1,\ldots,x_k)=0$, giving that there is no $2$-independent set.
	\end{proof}
	
	Next, we extend this to the probabilistic case.
	
	\begin{lemma}
		Let $f:X\rightarrow Y$ be a probabilistic function, and let $S=\set*{(i,j)}{1\leq i<j\leq k}$. Define $g:X^k\times\Z_2\rightarrow X^k\times\Z_2^2\times S$ to be the probabilistic function where $g(x_1,\ldots,x_k,b)$ is the random variable $(x_1,\ldots,x_k,b\lor\delta_{f(x_i),f(x_j)},b\lor d,(i,j))$ for $d=1$ if and only if $x_1,\ldots,x_k$ are all distinct and $(i,j)$ sampled uniformly at random from~$S$. Then, $g$ has an $(2\alpha/k(k-1),2)$-clique if and only if $f$ has a $(\alpha,k)$-clique, with $\alpha>0$.
	\end{lemma}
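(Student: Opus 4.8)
The plan is to compute $\Pr[g(u)=g(v)]$ exactly for an arbitrary pair of distinct inputs $u,v$ of $g$ and read the correspondence off the resulting formula, so both directions of the equivalence fall out of a single identity together with a positivity argument.

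First I would pin down the structural constraints forced by positivity of the clique value. Since the first output block of $g$ is a deterministic copy of the $X^k$ part of its input, $\Pr[g(u)=g(v)]>0$ can only hold if $u$ and $v$ agree on their $X^k$ coordinates; as $u\neq v$, they must then differ in the bit $b$, so without loss of generality $u=(\vec x,0)$ and $v=(\vec x,1)$. Next, the third output block equals $b\lor d$, where $d=1$ iff the entries of $\vec x$ are pairwise distinct; with $b=0$ on the $u$-side and $b=1$ on the $v$-side these blocks agree only when $d=1$. Hence any clique with positive value forces the $x_i$ to be pairwise distinct, which is precisely the distinctness demanded of a $k$-clique of $f$. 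Because $\alpha>0$ makes the target value $2\alpha/k(k-1)$ strictly positive, this structural reduction applies to every clique that concerns us.

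Then I would carry out the main computation under $u=(\vec x,0)$, $v=(\vec x,1)$ with the $x_i$ distinct. The two evaluations are independent; each draws a pair uniformly from $S$ (probability $2/k(k-1)$ each) and makes fresh evaluations of $f$. Matching the $S$-coordinate forces the two sampled pairs to coincide, and with $b=0$ the second output block $b\lor\delta_{f(x_i),f(x_j)}$ forces $f(x_i)=f(x_j)$ on the $u$-side sample, while the $v$-side block is $1\lor\cdots=1$ and imposes nothing. Summing over the common sampled pair gives
\[
\Pr[g(u)=g(v)]=\left(\frac{2}{k(k-1)}\right)^{\!2}\sum_{(i,j)\in S}\Pr[f(x_i)=f(x_j)].
\]

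Finally I would match thresholds. The $(\alpha,k)$-clique condition for $f$ reads $\frac{2}{k(k-1)}\sum_{(i,j)\in S}\Pr[f(x_i)=f(x_j)]\geq\alpha$, i.e. $\sum_{(i,j)\in S}\Pr[f(x_i)=f(x_j)]\geq\frac{k(k-1)}{2}\alpha$; substituting into the displayed identity turns this into $\Pr[g(u)=g(v)]\geq\frac{2\alpha}{k(k-1)}$, and the same algebra runs in reverse. This delivers both directions: a $k$-clique $x_1,\dots,x_k$ of $f$ yields the $2$-clique $\bigl((\vec x,0),(\vec x,1)\bigr)$ of $g$, while any $(2\alpha/k(k-1),2)$-clique of $g$ has positive value, hence by the structural step is of exactly this form with distinct $x_i$ satisfying the $(\alpha,k)$-clique inequality. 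The step I expect to require the most care is bookkeeping the sampling factor: the uniform pair is drawn \emph{independently} in each of the two evaluations, so requiring the two draws to agree contributes $2/k(k-1)$ twice, and it is the interaction of this $\bigl(2/k(k-1)\bigr)^2$ with the $|S|=k(k-1)/2$ terms of the sum that produces the exact normalisation $2\alpha/k(k-1)$ rather than $\alpha$ or $\alpha^2$.
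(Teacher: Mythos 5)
Your proof is correct and takes essentially the same route as the paper's: both directions rest on the exact collision probability $\bigl(\tfrac{2}{k(k-1)}\bigr)^2\sum_{(i,j)\in S}\Pr[f(x_i)=f(x_j)]$ for the pair $(\vec{x},0),(\vec{x},1)$, combined with the observation that positivity of the clique value forces agreement of the $X^k$-parts, $b\neq b'$, and pairwise distinctness of the $x_i$ via the $b\lor d$ block. Organising both directions around the single displayed identity is only a presentational difference from the paper, which performs the same computation and the same structural elimination in the converse.
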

	
	As in the deterministic case, this gives a reduction from the $k$-clique problem $\ttt{pClique}(k)_{c,s}$ to the $2$-clique problem $\ttt{pClique}(2)_{2c/k(k-1),2s/k(k-1)}$, which preserves the polynomial completeness-soundness gap if and only if $k$ is polynomial in the circuit size. This however is a reasonable restriction on $k$, as a proof for a clique of size larger than polynomial cannot even be read in polynomial time.
	
	\begin{proof}
		Suppose $f$ has an $(\alpha,k)$-clique. Then, there exist distinct elements $x_1,\ldots,x_k\in X$ such that ${\frac{2}{k(k-1)}\sum_{(i,j)\in S}\Pr\squ*{f(x_i)=f(x_j)}\geq\alpha}$. Then,
		\begin{align}
			\begin{split}
				\Pr\squ*{g(x_1,\ldots,x_k,0)=g(x_1,\ldots,x_k,1)}&=\sum_{(i,j),(i',j')\in S}\Pr\squ*{(i,j)=(i',j')\land \delta_{f(x_i),f(x_j)}=1}\\
				&=\frac{2}{k(k-1)}\sum_{(i,j)\leftarrow S}\Pr\squ*{f(x_i)=f(x_j)}\geq\frac{2\alpha}{k(k-1)}.
			\end{split}
		\end{align}
		Conversely, suppose that $g$ has a $(2\alpha/k(k-1),2)$-clique. Then, there exist ${x_1,\ldots,x_k,x_1',\ldots,x_k'\in X}$ and $b,b'\in\Z_2$ such that $\Pr\squ*{g(x_1,\ldots,x_k,b)=g(x_1',\ldots,x_k',b')}\geq2\alpha/k(k-1)$. If $x_i\neq x_i'$ for some $i$, then this probability is $0$, which is a contradiction. On the other hand, if $b=b'$, then the two elements of the clique are not distinct. As such, the clique must be $(x_1,\ldots,x_k,0)$, $(x_1,\ldots,x_k,1)$. Further, if $x_i=x_j$ for any $i\neq j$, then $d=0$, so $g(x_1,\ldots,x_k,0)\neq g(x_1,\ldots,x_k,1)$, giving that the $x_1,\ldots,x_k$ are all distinct. By the calculation from the previous direction, this gives that
		\begin{align}
			\alpha\leq\frac{k(k-1)}{2}\Pr\squ*{g(x_1,\ldots,x_k,b)=g(x_1',\ldots,x_k',b')}=\frac{2}{k(k-1)}\sum_{(i,j)\in S}\Pr\squ*{f(x_i)=f(x_j)},
		\end{align}
		so $x_1,\ldots,x_k$ is an $(\alpha,k)$-clique of $f$.
	\end{proof}
	
	\begin{lemma}
		Let $k\geq 3$. Let $f:X\rightarrow Y$ be a probabilistic function, and let $S=\set*{(i,j)}{1\leq i<j\leq k}$. Define $g:X^k\rightarrow\Z_2\times S$ to be the probabilistic function where $g(x_1,\ldots,x_k)$ is the random variable $(d_{i,j},(i,j))$ for $d=1$ if and only if $x_1,\ldots,x_k$ are all distinct and $f(x_i)\neq f(x_j)$ with $(i,j)$ sampled uniformly at random from~$S$. Then, $g$ has an $(1-2(1-\alpha)/k(k-1),2)$-independent set if and only if $f$ has a $(\alpha,k)$-independent set, with $\alpha>0$.
	\end{lemma}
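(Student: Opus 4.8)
The plan is to follow the template of the deterministic and probabilistic reductions above, collapsing everything to a computation of the collision probability of two independent evaluations of $g$. Abbreviate $q_{ij}(\vec{x})=\Pr[f(x_i)\neq f(x_j)]$ when $x_1,\dots,x_k$ are all distinct and $q_{ij}(\vec{x})=0$ otherwise, and write $|S|=\binom{k}{2}=k(k-1)/2$. Since $g$ draws $(i,j)$ uniformly and independently on each evaluation, a direct expansion gives
\begin{align}
\Pr[g(\vec{x})=g(\vec{x}')]=\frac{1}{|S|^2}\sum_{(i,j)\in S}\Big(q_{ij}(\vec{x})q_{ij}(\vec{x}')+(1-q_{ij}(\vec{x}))(1-q_{ij}(\vec{x}'))\Big).
\end{align}
The entire argument then amounts to comparing this quantity with the threshold $1-\beta=2(1-\alpha)/k(k-1)=(1-\alpha)/|S|$.

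For the forward direction I would take an $(\alpha,k)$-independent set $x_1,\dots,x_k$ of $f$, which by definition satisfies $\sum_{i<j}q_{ij}(\vec{x})\geq|S|\alpha$, and pair it against the constant tuple $\vec{x}'=(x_1,\dots,x_1)$, for which $q_{ij}(\vec{x}')=0$. The formula collapses to $\tfrac{1}{|S|^2}\big(|S|-\sum_{i<j}q_{ij}(\vec{x})\big)\leq(1-\alpha)/|S|=1-\beta$, and the two tuples are distinct because $k\geq2$, so $(\vec{x},\vec{x}')$ is a $(\beta,2)$-independent set of $g$. This direction is routine.

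For the converse I would argue by contraposition. Assuming $f$ has no $(\alpha,k)$-independent set, every distinct tuple satisfies $\sum_{i<j}q_{ij}<|S|\alpha$ (non-distinct tuples give $\sum q_{ij}=0$ trivially), and I want every distinct pair $\vec{x}\neq\vec{x}'$ to satisfy $\Pr[g(\vec{x})=g(\vec{x}')]>1-\beta$. Writing each bracketed term as $1-\big(q_{ij}(\vec{x})+q_{ij}(\vec{x}')-2q_{ij}(\vec{x})q_{ij}(\vec{x}')\big)$ reduces this to proving that the disagreement sum $\sum_{i<j}\big(q_{ij}(\vec{x})+q_{ij}(\vec{x}')-2q_{ij}(\vec{x})q_{ij}(\vec{x}')\big)$ is strictly below $|S|\alpha$. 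When at least one tuple is non-distinct the corresponding $q$ vanishes and the bound follows at once from $\sum q<|S|\alpha$.

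The case where both tuples have distinct coordinates is the main obstacle, and I expect it to be genuinely delicate. The naive estimate $q_{ij}(\vec{x})+q_{ij}(\vec{x}')-2q_{ij}(\vec{x})q_{ij}(\vec{x}')\leq q_{ij}(\vec{x})+q_{ij}(\vec{x}')$ only bounds the disagreement sum by $2|S|\alpha$, losing a crucial factor of two, so one must control the cross term $\sum_{i<j}q_{ij}(\vec{x})q_{ij}(\vec{x}')$ from below using structure of $f$ — namely that $D(x,y)=\Pr[f(x)\neq f(y)]$ is a pseudometric (the triangle inequality follows from a union bound over one joint sampling of $f(x),f(y),f(z)$), equivalently that the collision kernel $\langle p_x,p_y\rangle=\Pr[f(x)=f(y)]$ is positive semidefinite, where $p_x$ is the output distribution of $f(x)$. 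Even this structure, however, appears insufficient at the stated parameters: placing $2k$ points in regular-simplex position, with $\Pr[f(x_a)\neq f(x_b)]=\varepsilon$ for all $a\neq b$ (realised by letting each $f(x_a)$ equal a common symbol with probability $\sqrt{1-\varepsilon}$ and a private symbol otherwise), produces an $f$ with no $(\alpha,k)$-independent set whenever $\varepsilon<\alpha$, while two disjoint $k$-tuples $\vec{x},\vec{x}'$ achieve $\Pr[g(\vec{x})=g(\vec{x}')]=\tfrac{1}{|S|}(\varepsilon^2+(1-\varepsilon)^2)$, which is at most $1-\beta$ as soon as $\alpha\leq2\varepsilon(1-\varepsilon)$; both conditions hold simultaneously for every $\varepsilon\in(0,\tfrac12)$. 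I would therefore expect the converse to force either a strengthened gadget that records enough of the tuple to make the two candidate inputs share coordinates (as the clique reduction does via its extra register) or a correspondingly weakened soundness parameter, and I would treat reconciling this gap as the crux of the proof.
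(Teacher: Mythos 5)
Your forward direction is correct and coincides with the paper's (once one fixes the paper's garbled intermediate normalisation: the correct chain is $\Pr[g(\vec x)=g(x_1,\ldots,x_1)]=\tfrac{1}{|S|^2}\sum_{(i,j)\in S}\Pr[f(x_i)=f(x_j)]\le\tfrac{1-\alpha}{|S|}$, which is exactly what you computed). The substantive divergence is in the converse, and there you should trust your own analysis over the paper's. The paper's proof of the converse rests on the identity $\Pr[g(\vec x)=g(\vec x')]=\Pr[g(\vec x)=0]\Pr[g(\vec x')=0]+(1-\Pr[g(\vec x)=0])(1-\Pr[g(\vec x')=0])$, i.e.\ it computes the collision probability as though the output of $g$ were only the bit $d$, using the \emph{marginal} probability of $d=0$. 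But the output of $g$ includes the tag $(i,j)$, freshly sampled in each of the two independent evaluations, so the correct expression is your per-tag formula $\tfrac{1}{|S|^2}\sum_{(i,j)\in S}\bigl(q_{ij}(\vec x)q_{ij}(\vec x')+(1-q_{ij}(\vec x))(1-q_{ij}(\vec x'))\bigr)$: a collision must also match tags (costing a factor $1/|S|$), and the bit statistics must be compared tag by tag rather than on average. The paper's subsequent anti-concentration step (that $xy+(1-x)(1-y)\le\tfrac12$ forces $x\le\tfrac12$ or $y\le\tfrac12$) is sound for its formula but simply does not apply to the correct one.

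Your simplex counterexample checks out and is fatal to the lemma at the stated parameters: with $2k$ inputs, pairwise $\Pr[f(x_a)\ne f(x_b)]=\varepsilon$ realised by a common-versus-private-symbol channel, and any $\alpha$ with $\varepsilon<\alpha\le2\varepsilon(1-\varepsilon)$ (a nonempty range for every $\varepsilon\in(0,\tfrac12)$), $f$ has no $(\alpha,k)$-independent set, yet two disjoint all-distinct tuples satisfy $\Pr[g(\vec x)=g(\vec x')]=\tfrac{1}{|S|}\bigl(\varepsilon^2+(1-\varepsilon)^2\bigr)\le\tfrac{1-\alpha}{|S|}$ and hence form a valid $(1-\tfrac{2(1-\alpha)}{k(k-1)},2)$-independent set of $g$, violating the ``only if'' direction. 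This is precisely the classical shadow of the \emph{pseudo-independence} phenomenon the paper itself flags for quantum channels: because each output of $g$ is diluted over the $|S|$ tags, two inputs with essentially identical output distributions can still fall below the overlap threshold. Your closing diagnosis is also apt --- the clique gadget escapes this by echoing the tuple in its output, forcing any $2$-clique to share coordinates, and no such echo is available for independent sets (it would make every pair of distinct tuples independent) --- so a repair requires either a genuinely different gadget or rescaled soundness parameters. In short, you did not prove the lemma, but for the right reason: you correctly located the step at which the paper's argument fails and showed that, as stated, the converse cannot be salvaged at these parameters.
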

	
	\begin{proof}
		Suppose $f$ has an $(\alpha,k)$-independent set. Then, there exist distinct $x_1,\ldots,x_k\in X$ such that $\frac{2}{k(k-1)}\sum_{(i,j)}\Pr\squ*{f(x_i)=f(x_j)}\leq 1-\alpha$. Then, I claim $(x_1,\ldots,x_k)$ and $(x_1,\ldots,x_1)$ form a $(2\alpha/k(k-1),2)$-independent set of $g$. In fact,
		\begin{align}
			\Pr\squ*{g(x_1,\ldots,x_k)=g(x_1,\ldots,x_1)}=\frac{2}{k(k-1)}\sum_{(i,j)\in S}\Pr\squ*{f(x_i)=f(x_j)}\leq\frac{2}{k(k-1)}(1-\alpha).
		\end{align}
		Conversely, suppose that $g$ has a $(1-2(1-\alpha)/k(k-1),2)$-independent set. Then, there exist $(x_1,\ldots,x_k)$ and $(x_1',\ldots,x_k')$ such that $\Pr\squ*{g(x_1,\ldots,x_k)=g(x_1',\ldots,x_k')}\leq\frac{2(1-\alpha)}{k(k-1)}$. First, this probability is $\Pr\squ*{g(x_1,\ldots,x_k)=g(x_1',\ldots,x_k')}=\Pr\squ*{g(x_1,\ldots,x_k)=0}\Pr\squ*{g(x_1',\ldots,x_k')=0}+(1-\Pr\squ*{g(x_1,\ldots,x_k)=0})(1-\Pr\squ*{g(x_1',\ldots,x_k')=0})$. If $x_1,\ldots,x_k$ are all distinct, we have
		\begin{align}
			\Pr\squ*{g(x_1,\ldots,x_k)=0}=\frac{2}{k(k-1)}\sum_{(i,j)\in S}\Pr\squ*{f(x_i)=f(x_j)},
		\end{align}
		and otherwise $\Pr\squ*{g(x_1,\ldots,x_k)=0}=1$. The same holds in the same way for $\Pr\squ*{g(x_1',\ldots,x_k')=0}$
		
		Next, consider the function $p(x,y)=xy+(1-x)(1-y)$ on $x,y\in[0,1]$. We have that, if $x\leq\frac{1}{2}$ or $y\leq\frac{1}{2}$, $p(x,y)\geq\min\{x,y\}$. Further, if $p(x,y)\leq\frac{1}{2}$, then either $x\leq\frac{1}{2}$ or $y\leq\frac{1}{2}$. The above implies that $p(\Pr\squ*{g(x_1,\ldots,x_k)=0},\Pr\squ*{g(x_1',\ldots,x_k')=0})\leq\frac{2(1-\alpha)}{k(k-1)}\leq\frac{2}{k(k-1)}\leq\frac{1}{2}$, so we may without loss of generality assume $\Pr\squ*{g(x_1,\ldots,x_k)=0}\leq\frac{1}{2},\Pr\squ*{g(x_1',\ldots,x_k')=0}$. This implies that $\sum_{(i,j)\in S}\Pr\squ*{f(x_i)=f(x_j)}\leq \frac{k(k-1)}{2}\Pr\squ*{g(x_1,\ldots,x_k)=g(x_1',\ldots,x_k')}\leq(1-\alpha)$, giving that $(x_1,\ldots,x_k)$ is an $(\alpha,k)$-independent set of $f$.
	\end{proof}

    Finally, we provide the reduction between the quantum clique problems. To do so, we make use of the techniques introduced in \cref{Sec:QMA(2)}. First, we need to add an additional channel to our toolkit, that will allow us to guarantee orthogonality of a purported clique.
	
	\begin{lemma}\label{lem:orthogonaliser}
		Let $H,K$ be Hilbert spaces and let $\Pi_H:H\otimes H\rightarrow H\otimes H$ be the projector onto the symmetric subspace. Define the set $S=\set*{(i,j)}{1\leq i<j\leq k}$ and the quantum channel $\Phi_2:\mc{B}(H^{\otimes k}\otimes K)\rightarrow\mc{B}(Q'\otimes\C^S)$ as
		\begin{align}
			\Phi_2(\rho)=\expec_{(i,j)\in S}\parens*{\Tr\squ*{\Pi_H\rho_{i,j}}\mu_Q+\Tr\squ*{(\Id-\Pi_H)\rho_{i,j}}\ketbra{\perp}}\otimes\ketbra{(i,j)},
		\end{align}
		where $\rho_{i,j}$ is the marginal on the $i$-th and $j$-th subregisters.
		Then, for any pure separable states of the form $\rho=\ketbra{\psi_1}\otimes\cdots\otimes\ketbra{\psi_k}\otimes\ketbra{\alpha}$, $\sigma=\ketbra{\psi_1}\otimes\cdots\otimes\ketbra{\psi_k}\otimes\ketbra{\beta}\in\mc{D}(H^{\otimes k}\otimes K)$, $\Tr\squ*{\Phi_2(\rho)\Phi_2(\sigma)}\leq\frac{1}{2k(k-1)}$ and if $\Tr\squ*{\Phi_2(\rho)\Phi_2(\sigma)}\geq\frac{1}{2k(k-1)}-\varepsilon$, there exist orthogonal states $\ket{\psi_1'},\ldots,\ket{\psi_k'}\in H$ such that
		\begin{align}
			\norm{\rho-\tilde{\rho}}_{\Tr}+\norm{\sigma-\tilde{\sigma}}_{\Tr}\leq4k\sqrt{k}(k-1)\sqrt{\varepsilon},
		\end{align}
		where $\tilde{\rho}=\ketbra{\psi_1'}\otimes\cdots\otimes\ketbra{\psi'_k}\otimes\ketbra{\alpha}$ and $\tilde{\sigma}=\ketbra{\psi_1'}\otimes\cdots\otimes\ketbra{\psi_k'}\otimes\ketbra{\beta}$.
	\end{lemma}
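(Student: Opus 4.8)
The plan is to compute the overlap exactly and then run an orthogonalisation argument of the same flavour as the orthogonal Procrustes step used for $\ttt{qIS}(k)_{c,s}$. First, observe that $\Phi_2$ depends on its input only through the marginals $\rho_{i,j}$ on the $H$-subregisters, and these coincide for $\rho$ and $\sigma$ (which differ only on the register $K$, traced out in forming $\rho_{i,j}$). Hence $\Phi_2(\rho)=\Phi_2(\sigma)$ and the quantity to bound is the purity $\Tr\squ*{\Phi_2(\rho)^2}$. Since the outputs are block-diagonal across the mutually orthogonal flags $\ketbra{(i,j)}$, this purity splits as $\frac{1}{|S|^2}\sum_{(i,j)\in S}\Tr\squ*{X_{i,j}^2}$, where $X_{i,j}=\Tr\squ*{\Pi_H\rho_{i,j}}\mu_Q+\Tr\squ*{(\Id-\Pi_H)\rho_{i,j}}\ketbra{\perp}$ and $|S|=\binom{k}{2}$. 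Writing $c_{i,j}=\abs{\braket{\psi_i}{\psi_j}}^2$, the swap-test identity gives $\Tr\squ*{\Pi_H\rho_{i,j}}=\tfrac12(1+c_{i,j})$ and $\Tr\squ*{(\Id-\Pi_H)\rho_{i,j}}=\tfrac12(1-c_{i,j})$; because $\mu_Q$ and $\ketbra{\perp}$ have orthogonal support, $\Tr\squ*{X_{i,j}^2}=\tfrac14(1+c_{i,j})^2\Tr\squ*{\mu_Q^2}+\tfrac14(1-c_{i,j})^2$. The state $\mu_Q$ is taken maximally mixed on a large register, so $\Tr\squ*{\mu_Q^2}$ is negligible and each block is at most $\tfrac14(1-c_{i,j})^2\le\tfrac14$; summing and using $|S|=\tfrac{k(k-1)}{2}$ yields $\Tr\squ*{\Phi_2(\rho)^2}\le\frac{1}{4|S|}=\frac{1}{2k(k-1)}$, with equality exactly when every $c_{i,j}=0$, i.e.\ the $\ket{\psi_i}$ are pairwise orthogonal.

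For the robustness statement, suppose $\Tr\squ*{\Phi_2(\rho)\Phi_2(\sigma)}\ge\frac{1}{2k(k-1)}-\varepsilon$. Re-running the bound above, the deficit forces $\sum_{(i,j)\in S}\bigl[\tfrac14-\tfrac14(1-c_{i,j})^2\bigr]\le|S|^2\varepsilon$; since $1-(1-c)^2=c(2-c)\ge c$ for $c\in[0,1]$, this gives $\sum_{(i,j)\in S}c_{i,j}\le 4|S|^2\varepsilon=k^2(k-1)^2\varepsilon$, and hence $\sum_{i\ne j}\abs{\braket{\psi_i}{\psi_j}}^2\le 2k^2(k-1)^2\varepsilon$. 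In words, near-optimal overlap forces the $\ket{\psi_i}$ to be close to pairwise orthogonal.

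To produce genuinely orthonormal nearby vectors I would invoke the orthogonal Procrustes argument already used for $\ttt{qIS}$: let $\Psi$ be the matrix with columns $\ket{\psi_i}$ and $\Psi'=UV^\dag$ obtained from its SVD $\Psi=U\Sigma V^\dag$, with columns $\ket{\psi_i'}$ orthonormal. Exactly as there (via~\cite{Sch66}), $\norm{\Psi-\Psi'}_2^2\le\norm{\Psi^\dag\Psi-\Id}_2^2=\sum_{i\ne j}\abs{\braket{\psi_i}{\psi_j}}^2\le2k^2(k-1)^2\varepsilon$, so $\sum_i\norm{\ket{\psi_i}-\ket{\psi_i'}}^2\le2k^2(k-1)^2\varepsilon$. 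Finally, since $\rho$ and $\tilde{\rho}=\ketbra{\psi_1'}\otimes\cdots\otimes\ketbra{\psi_k'}\otimes\ketbra{\alpha}$ differ only in the tensor factors $\ketbra{\psi_i}\mapsto\ketbra{\psi_i'}$, the triangle inequality over tensor factors together with $\norm{\ketbra{\psi_i}-\ketbra{\psi_i'}}_{\Tr}\le\norm{\ket{\psi_i}-\ket{\psi_i'}}$ (\cref{lem:relate-norms}) and Cauchy--Schwarz give $\norm{\rho-\tilde{\rho}}_{\Tr}\le\sqrt{k}\,\bigl(\sum_i\norm{\ket{\psi_i}-\ket{\psi_i'}}^2\bigr)^{1/2}\le\sqrt{2}\,k\sqrt{k}(k-1)\sqrt{\varepsilon}$. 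The same bound holds for $\sigma$ with the identical $\ket{\psi_i'}$, and adding the two gives $\norm{\rho-\tilde{\rho}}_{\Tr}+\norm{\sigma-\tilde{\sigma}}_{\Tr}\le2\sqrt{2}\,k\sqrt{k}(k-1)\sqrt{\varepsilon}\le4k\sqrt{k}(k-1)\sqrt{\varepsilon}$, as claimed.

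The main obstacle I anticipate is the bookkeeping in the two robustness conversions: first turning the scalar overlap deficit into the quadratic bound $\sum_{i\ne j}\abs{\braket{\psi_i}{\psi_j}}^2=O\bigl(k^2(k-1)^2\varepsilon\bigr)$ with the correct constants, and then passing from near-orthogonality of the individual $\ket{\psi_i}$ to a trace-norm bound on the full tensor products while keeping the final constant at or below $4$. The Procrustes step is the cleanest pivot between these, and the tensor telescoping together with Cauchy--Schwarz is what controls the $k$-dependence. One must also be slightly careful that the $\mu_Q$-branch really is negligible in the purity computation, which is the reason $\mu_Q$ is taken maximally mixed on a large register rather than on a single qubit; otherwise the threshold $\frac{1}{2k(k-1)}$ is not attained at the orthogonal configuration.
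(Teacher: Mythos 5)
Your proposal follows the paper's own proof almost step for step in its second half: the paper likewise converts the overlap deficit into a bound on $\sum_{i\neq j}\abs*{\braket{\psi_i}{\psi_j}}^2$ (it obtains $4k^2(k-1)^2\varepsilon$ via the pointwise inequality $\varepsilon_{ij}\leq 2\parens*{1-(1-\varepsilon_{ij})^2-\tfrac12\varepsilon_{ij}^2}$, comparable to your $2k^2(k-1)^2\varepsilon$), then performs exactly your orthogonal Procrustes step --- SVD $\Psi=U\Sigma V^\dag$, candidate $\Psi'=UV^\dag$, and $\norm{\Psi-\Psi'}_2^2=\sum_i(\sigma_i-1)^2\leq\sum_i(\sigma_i^2-1)^2=\norm{\Psi^\dag\Psi-\Id}_2^2$ --- and finishes with the same telescoping over tensor factors plus Cauchy--Schwarz, landing at $4k\sqrt{k}(k-1)\sqrt{\varepsilon}$ where you land at $2\sqrt{2}\,k\sqrt{k}(k-1)\sqrt{\varepsilon}$. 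That half of your argument is correct and is essentially the paper's.

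The divergence is in the overlap computation, and there you are right and the paper's proof is not. You correctly evaluate $\Tr\squ*{\Pi_H(\ketbra{\psi_i}\otimes\ketbra{\psi_j})}=\tfrac12(1+c_{ij})$ with $c_{ij}=\abs*{\braket{\psi_i}{\psi_j}}^2$; the paper's displayed formula $\frac{1}{2k(k-1)}\expec_{(i,j)}\squ*{\tfrac12 c_{ij}^2+(1-c_{ij})^2}$ implicitly uses $\Tr\squ*{\Pi_H\rho_{i,j}}=c_{ij}$, which is the expectation of the swap operator, not of the symmetric projector. With the channel as literally defined (paper notation makes $\mu_Q$ the maximally mixed state on a single qubit, $\Tr\squ*{\mu_Q^2}=\tfrac12$), the correct value is $\frac{1}{2k(k-1)}\expec_{(i,j)}\squ*{\tfrac12(1+c_{ij})^2+(1-c_{ij})^2}$, which is maximised at $c_{ij}=1$ (all states equal) with value $\frac{1}{k(k-1)}$, so the stated threshold fails and near-maximal overlap certifies the opposite of orthogonality --- precisely the obstruction you flag at the end. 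Your repair, replacing $\mu_Q$ by a maximally mixed state of negligible purity, does make the lemma sound; but note two caveats: as a proof of the literal statement it modifies the channel rather than proving it, and even after the fix the bound $\frac{1}{2k(k-1)}$ holds only up to an additive $O(\Tr\squ*{\mu^2})$ correction, which also propagates into your deficit conversion and into \cref{lem:k-to-2-intro} (which asserts $\Tr\squ*{\Phi_2(\sigma_0)\Phi_2(\sigma_1)}=\frac{1}{2k(k-1)}$ exactly for an orthogonal clique). This slack is harmless downstream, since \cref{lem:like-5.5} and \cref{thm:q-k-to-2-clique} tolerate additive $\eta$, but a fully rigorous write-up should state the corrected threshold explicitly rather than calling the extra term negligible.
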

	
	\begin{proof}
		First, it is direct to see that
		\begin{align}
			\Tr\squ*{\Phi_2(\rho)\Phi_2(\sigma)}=\frac{1}{2k(k-1)}\expec_{(i,j)}\squ*{\frac{1}{2}\abs*{\braket{\psi_i}{\psi_j}}^4+\parens*{1-\abs*{\braket{\psi_i}{\psi_j}}^2}^2}\leq\frac{1}{2k(k-1)}.
		\end{align}
		Now, suppose $\Tr\squ*{\Phi_2(\rho)\Phi_2(\sigma)}\geq\frac{1}{2k(k-1)}-\varepsilon$. Let $\varepsilon_{ij}=\abs*{\braket{\psi_i}{\psi_j}}^2$. Then,
		\begin{align}
			\expec_{(i,j)}\varepsilon_{ij}\leq 2\expec_{(i,j)}1-(1-\varepsilon_{ij})^2-\frac{1}{2}\varepsilon_{i,j}^2\leq 4k(k-1)\varepsilon.
		\end{align}
		
		Let $\Psi=\begin{bmatrix}\ket{\psi_1}&\cdots&\ket{\psi_k}\end{bmatrix}$ be a $\dim H\times k$ matrix. For any matrix $\Psi'=\begin{bmatrix}\ket{\psi_1'}&\cdots&\ket{\psi_k'}\end{bmatrix}$, the Frobenius norm
		\begin{align}
			\begin{split}
				\norm{\Psi-\Psi'}_2^2&=\sum_i\norm{\ket{\psi_i}-\ket{\psi_i'}}^2\geq\frac{1}{2}\sum_i\norm{\ketbra{\psi_i}-\ketbra{\psi_i'}}_2^2\\
				&\geq\frac{1}{2k}\parens[\Big]{\sum_i\norm{\ketbra{\psi_i}-\ketbra{\psi_i'}}_2}^2.
			\end{split}
		\end{align}
		So, our goal becomes to find an isometry $\Psi'$ that minimises this norm, since this is the class of matrices whose columns are orthonormal --- this question is called the \emph{orthogonal Procrustes problem}. Let $\Psi=U\Sigma V^\dag$ be the singular-value decomposition, where $\Sigma\in\mbb{M}_{k\times k}(\C)$ is a diagonal matrix with elements $\sigma_i\geq0$ and $U\in\mbb{M}_{\dim H\times k}$, $V\in\mbb{M}_{k\times k}(\C)$ are isometries: the solution of the problem is $\Psi'=UV^\dag$. In fact, the unitary that minimises $\norm{\Psi-\Psi'}_2$ is the one that maximises $\Tr\squ*{\Psi^\dag\Psi'}=\sum_{i}\sigma_i(U^\dag\Psi' V)_{i,i}$, which by unitarity is maximised when the diagonal elements of $U^\dag\Psi' V$ are all $1$, giving $U^\dag\Psi' V=\Id$. With this candidate, it suffices to compute the norm. Note that $(\Psi^\dag\Psi)_{ij}=\braket{\psi_i}{\psi_j}$, so
		\begin{align}
			\norm{\Psi^\dag\Psi-\Id}_2^2=\sum_{i\neq j}|\braket{\psi_i}{\psi_j}|^2\leq 4k^2(k-1)^2\varepsilon.
		\end{align}
		On the other hand, $\norm{\Psi^\dag\Psi-\Id}_2^2=\norm{\Sigma^2-\Id}_2^2=\sum_i(\sigma_i^2-1)^2$. This provides that
		\begin{align}
			\norm{\Psi-\Psi'}_2^2=\norm{U\Sigma V^\dag-UV^\dag}_2^2=\sum_i(\sigma_i-1)^2\leq \sum_i(\sigma_i^2-1)^2\leq 4k^2(k-1)^2\varepsilon.
		\end{align}
		Therefore, we have orthonormal vectors $\{\ket{\psi_1'},\ldots,\ket{\psi_k'}\}$ such that $\sum_i\norm{\ketbra{\psi_i}-\ketbra{\psi_i'}}_2\leq 2\sqrt{2}k^{3/2}(k-1)\sqrt{\varepsilon}$. Taking $\tilde{\rho},\tilde{\sigma}$ as in the statement, we have immediately
		\begin{align}
			\begin{split}
				\norm{\rho-\tilde{\rho}}_{\Tr}+\norm{\sigma-\tilde{\sigma}}_{\Tr}&=\sqrt{2}\norm{\ketbra{\psi_1}\otimes\cdots\otimes\ketbra{\psi_k}-\ketbra{\psi_1'}\otimes\cdots\otimes\ketbra{\psi_k'}}_2\\
				&\leq\sqrt{2}\sum_i\norm{\ketbra{\psi_i}-\ketbra{\psi_i'}}_2\leq4k\sqrt{k}(k-1)\sqrt{\varepsilon}.
			\end{split}
		\end{align}
	\end{proof}
	
	\begin{lemma}\label{lem:k-to-2-intro}
		Let $\Phi:\mc{B}(H)\rightarrow\mc{B}(K)$ be a quantum channel and let $p_1,p_2,p_3>0$ such that $p_1+p_2+p_3=1$. Let $S=\set*{(i,j)}{1\leq i<j\leq k}$ as in the previous lemma. Let $\Pi_H$ and $\Pi_K$ be the projectors onto the symmetric subspaces of $H\otimes H$ and $K\otimes K$, respectively. Let $\Phi_1:\mc{B}(H^{\otimes k}\otimes Q)\rightarrow\mc{B}(H\otimes\C^{[k]})$ be the channel from \cref{lem:separator} and $\Phi_2:\mc{B}(H^{\otimes k}\otimes Q)\rightarrow\mc{B}(Q'\otimes\C^S)$ be the channel from \cref{lem:orthogonaliser}. Define the channel $\Phi_3:\mc{B}(H^{\otimes k}\otimes Q)\rightarrow\mc{B}(Q'\otimes\C^S)$ as 
		\begin{align}
			&\Phi_3(\rho)=\expec_{(i,j)\in S}\parens[\big]{\Tr\parens*{\Pi_K(\Phi\otimes\Phi)(\rho_{i,j})}\ketbra{\perp}+\Tr\parens*{(\Id-\Pi_K)(\Phi\otimes\Phi)(\rho_{i,j})}\mu_Q}\otimes\ketbra{(i,j)}.
		\end{align}
		Define $\Phi'=p_1\Phi_1\oplus p_2\Phi_2\oplus p_3\Phi_3$. Suppose $\Phi$ has an $(\alpha,k)$-clique. Then, $\Phi'$ has an $(\alpha',2)$-clique, with $\alpha'=\frac{p_1^2}{k}+\frac{p_2^2+p_3^2(3/2+\alpha)}{2k(k-1)}$. 
	\end{lemma}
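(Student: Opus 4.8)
The plan is to exhibit an explicit $(\alpha',2)$-clique for $\Phi'$ directly from the $k$-clique of $\Phi$, exploiting the fact that the overlap of a direct-sum channel splits across its orthogonal output blocks. By convexity we may assume the $(\alpha,k)$-clique of $\Phi$ is given by orthonormal vectors $\ket{\psi_1},\ldots,\ket{\psi_k}\in H$ with $\frac{2}{k(k-1)}\sum_{i<j}\Tr\squ*{\Phi(\ketbra{\psi_i})\Phi(\ketbra{\psi_j})}\geq\alpha$. I then propose the candidate pair
\[
\rho=\ketbra{\psi_1}\otimes\cdots\otimes\ketbra{\psi_k}\otimes\ketbra{0},\qquad \sigma=\ketbra{\psi_1}\otimes\cdots\otimes\ketbra{\psi_k}\otimes\ketbra{1},
\]
which are orthogonal since they differ on the final qubit $Q$. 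Because $\rho$ and $\sigma$ agree on $H^{\otimes k}$, all three constituent channels act identically on them, so $\Phi_a(\rho)=\Phi_a(\sigma)$ for $a=1,2,3$ and every overlap below is in fact of the form $\Tr\squ*{\Phi_a(\rho)^2}$.

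The key structural step is that, since $\Phi'=p_1\Phi_1\oplus p_2\Phi_2\oplus p_3\Phi_3$ sends its output into three mutually orthogonal subspaces (\cref{def:circuit-direct-sum}), off-diagonal blocks are traceless and the overlap factorizes as
\[
\Tr\squ*{\Phi'(\rho)\Phi'(\sigma)}=p_1^2\Tr\squ*{\Phi_1(\rho)\Phi_1(\sigma)}+p_2^2\Tr\squ*{\Phi_2(\rho)\Phi_2(\sigma)}+p_3^2\Tr\squ*{\Phi_3(\rho)\Phi_3(\sigma)}.
\]
It then suffices to handle the three summands. For $\Phi_1$, the marginal of $\rho$ on the $i$-th copy of $H$ is the pure state $\ketbra{\psi_i}$, so $\Phi_1(\rho)=\frac1k\sum_i\ketbra{\psi_i}\otimes\ketbra{i}$ and the overlap is exactly $\frac1k$; here purity is essential, as mixed marginals would only decrease this term. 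For $\Phi_2$, the vectors $\ket{\psi_i}$ are orthonormal, so substituting $\abs*{\braket{\psi_i}{\psi_j}}^2=0$ into the closed form from \cref{lem:orthogonaliser} gives overlap exactly $\frac{1}{2k(k-1)}$.

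The only genuine computation, and the step I expect to be the main obstacle, is the $\Phi_3$ term. Writing $t_{ij}=\Tr\squ*{\Phi(\ketbra{\psi_i})\Phi(\ketbra{\psi_j})}$, the swap-test identity $\Tr\squ*{\Pi_K(\tau_1\otimes\tau_2)}=\frac12+\frac12\Tr\squ*{\tau_1\tau_2}$ from \cref{subsec:SWAP} gives
\[
\Phi_3(\rho)=\expec_{(i,j)}\parens*{\tfrac12(1+t_{ij})\ketbra{\perp}+\tfrac12(1-t_{ij})\mu_Q}\otimes\ketbra{(i,j)}.
\]
Since $\ketbra{\perp}$ and $\mu_Q$ have orthogonal supports with $\Tr\squ*{\ketbra{\perp}^2}=1$ and $\Tr\squ*{\mu_Q^2}=\frac12$, and the registers $\ketbra{(i,j)}$ are orthogonal, the overlap equals $\frac{2}{k(k-1)}\expec_{(i,j)}\parens*{\frac14(1+t_{ij})^2+\frac18(1-t_{ij})^2}$. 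Expanding and discarding the nonnegative $\frac38 t_{ij}^2$ contribution bounds this below by $\frac{2}{k(k-1)}\expec_{(i,j)}\parens*{\frac38+\frac14 t_{ij}}$, and the clique hypothesis $\expec_{(i,j)}t_{ij}\geq\alpha$ then yields $\Tr\squ*{\Phi_3(\rho)^2}\geq\frac{3/2+\alpha}{2k(k-1)}$. Assembling the three weighted contributions gives precisely $\Tr\squ*{\Phi'(\rho)\Phi'(\sigma)}\geq\frac{p_1^2}{k}+\frac{p_2^2+p_3^2(3/2+\alpha)}{2k(k-1)}=\alpha'$, so $(\rho,\sigma)$ is the desired $(\alpha',2)$-clique. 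The delicate points are all bookkeeping: correctly invoking purity so the $\Phi_1$ overlap saturates at $1/k$, and keeping the $\mu_Q$-versus-$\ket{\perp}$ purity weights straight in $\Phi_3$ so that the coefficient $3/2+\alpha$ emerges.
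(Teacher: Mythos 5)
Your proposal is correct and takes essentially the same route as the paper's proof: the identical witness pair $\ketbra{\psi_1}\otimes\cdots\otimes\ketbra{\psi_k}\otimes\ketbra{0}$ versus $\ketbra{\psi_1}\otimes\cdots\otimes\ketbra{\psi_k}\otimes\ketbra{1}$, the same splitting of the overlap into $p_1^2,p_2^2,p_3^2$ terms across the orthogonal output blocks, the values $\tfrac{1}{k}$ and $\tfrac{1}{2k(k-1)}$ imported from \cref{lem:separator} and \cref{lem:orthogonaliser}, and the same elementary bound $\parens*{\tfrac{1}{2}+\tfrac{x}{2}}^2+\tfrac{1}{2}\parens*{\tfrac{1}{2}-\tfrac{x}{2}}^2\geq\tfrac{3}{8}+\tfrac{x}{4}$ for the $\Phi_3$ contribution. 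The only differences are presentational: you make explicit the per-block purity bookkeeping and the vanishing of cross-block traces that the paper leaves implicit.
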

	
	\begin{proof}
		Let $\rho_1,\ldots,\rho_k$ be an $(\alpha,k)$-clique of $\Phi$. We may, without loss of generality, suppose that $\rho_i=\ketbra{\psi_i}$ are all pure, as convex combinations cannot increase the overlap. Now, I claim that $\sigma_0=\rho_1\otimes\cdots\otimes\rho_k\otimes\ketbra{0}$ and $\sigma_1=\rho_1\otimes\cdots\otimes\rho_k\otimes\ketbra{1}$ forms a $(\alpha',2)$-clique of $\Phi'$. First, $\sigma_0$ and $\sigma_1$ are orthogonal. Next, due to \cref{lem:separator}, $\Tr\squ*{\Phi_1(\sigma_0)\Phi_1(\sigma_1)}=\frac{1}{k}$ and due \cref{lem:orthogonaliser}, $\Tr\squ*{\Phi_2(\sigma_0)\Phi_2(\sigma_1)}=\frac{1}{2k(k-1)}$. Finally, 
		\begin{align}
			\Phi_3(\sigma_y)=\expec_{(i,j)}\parens*{(\tfrac{1}{2}+\tfrac{1}{2}\Tr(\Phi(\rho_i)\Phi(\rho_j)))\ketbra{\perp}+(\tfrac{1}{2}-\tfrac{1}{2}\Tr(\Phi(\rho_i)\Phi(\rho_j)))\mu_Q}\otimes\ketbra{(i,j)}
		\end{align}
		Putting this together,
		\begin{align}
			\begin{split}
				&\Tr(\Phi'(\sigma_0)\Phi'(\sigma_1))=p_1^2\Tr(\Phi_1(\sigma_0)\Phi_1(\sigma_1))+p_2^2\Tr(\Phi_2(\sigma_0)\Phi_2(\sigma_1))+p_3^2\Tr(\Phi_3(\sigma_0)\Phi_3(\sigma_1))\\
				&=\frac{p_1^2}{k}+\frac{p_2^2}{2k(k-1)}+\frac{2p_3^2}{k(k-1)}\expec_{(i,j)}\parens*{\tfrac{1}{2}+\tfrac{1}{2}\Tr(\Phi(\rho_i)\Phi(\rho_j))}^2+\frac{1}{2}\parens*{\tfrac{1}{2}-\tfrac{1}{2}\Tr(\Phi(\rho_i)\Phi(\rho_j))}^2
			\end{split}
		\end{align}
		Since the function $(\frac{1}{2}+\frac{x}{2})^2+\frac{1}{2}(\frac{1}{2}-\frac{x}{2})^2\geq\frac{3}{8}+\frac{x}{4}$ we have that $\expec_{(i,j)}\parens*{\tfrac{1}{2}+\tfrac{1}{2}\Tr(\Phi(\rho_i)\Phi(\rho_j))}^2+\frac{1}{2}\parens*{\tfrac{1}{2}-\tfrac{1}{2}\Tr(\Phi(\rho_i)\Phi(\rho_j))}^2\geq\frac{3}{8}+\frac{\alpha}{4}$, by the clique property. Thus, we have $\Tr(\Phi'(\sigma_0)\Phi'(\sigma_1))=\frac{p_1^2}{k}+\frac{p_2^2}{2k(k-1)}+\frac{p_3^2(3/2+\alpha)}{2k(k-1)}=\alpha'$, as wanted.
	\end{proof}
	
	\begin{theorem}\label{thm:q-k-to-2-clique}
		Let $\Phi$ and $\Phi'$ be as in \cref{lem:k-to-2-intro}. Then, if $\Phi$ has no $(\beta,k)$-clique, then, for all $\eta$, there are $p_1,p_2,p_3$ such that $\Phi'$ has no $(\beta'+\eta,2)$-clique, with $\beta'=\frac{p_1^2}{k}+\frac{p_2^2+p_3^2(3/2+(5/2)\beta)}{2k(k-1)}$.
	\end{theorem}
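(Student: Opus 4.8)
The plan is to bound the $2$-clique value of $\Phi'$ over all orthogonal pairs of states by chaining \cref{lem:like-5.5} twice, using $\Phi_1$ and $\Phi_2$ as successive ``testers'' that force a near-optimal clique first to be separable and then to have orthogonal tensor factors, at which point the no-$(\beta,k)$-clique hypothesis on $\Phi$ controls the contribution of $\Phi_3$. By bilinearity of $(\rho,\sigma)\mapsto\Tr\squ*{\Phi'(\rho)\Phi'(\sigma)}$ together with convexity (cliques may be taken pure), it suffices to bound this overlap over the set $A$ of pairs of orthogonal pure states. I would introduce the two intermediate sets: $B_1$, the pairs $(\ketbra{\psi_1}\otimes\cdots\otimes\ketbra{\psi_k}\otimes\ketbra{\alpha},\,\ketbra{\psi_1}\otimes\cdots\otimes\ketbra{\psi_k}\otimes\ketbra{\beta})$ with $\braket{\alpha}{\beta}=0$, and $B_2\subseteq B_1$ where additionally the $\ket{\psi_i}$ are orthonormal.

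First I would compute the contribution of $\Phi_3$ on $B_2$. Since both states in such a pair have identical marginals $\ketbra{\psi_i}$ on every subregister (and $\Phi_3$ depends only on these marginals), $\Phi_3$ produces the same output on both, so writing $t_{ij}=\Tr\squ*{\Phi(\ketbra{\psi_i})\Phi(\ketbra{\psi_j})}$, the same direct computation as in \cref{lem:k-to-2-intro} gives $\Tr\squ*{\Phi_3(\tilde\rho)\Phi_3(\tilde\sigma)}=\frac{2}{k(k-1)}\expec_{(i,j)}\bigl[\tfrac38+\tfrac14 t_{ij}+\tfrac38 t_{ij}^2\bigr]$. Using $t_{ij}^2\leq t_{ij}$ on $[0,1]$ and the hypothesis that $\Phi$ has no $(\beta,k)$-clique — so that the orthogonal tuple $\ketbra{\psi_1},\ldots,\ketbra{\psi_k}$ satisfies $\expec_{(i,j)}t_{ij}<\beta$ — I obtain the bound $q_0:=\frac{2}{k(k-1)}\bigl(\tfrac38+\tfrac58\beta\bigr)$ on the value of $\Phi_3$ over $B_2$.

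Next I would apply \cref{lem:like-5.5} twice. For the inner application, take $A\leftarrow B_1$, $B\leftarrow B_2$, tester $\Phi_2$ (whose value over $B_1$ is $\leq p_0=\frac{1}{2k(k-1)}$, with near-optimal cliques forced into $B_2$ at rate $O(\sqrt\varepsilon)$ by \cref{lem:orthogonaliser}), and inner channel $\Phi_3$ with the bound $q_0$ above; this produces, for any $\eta_1>0$, a weight $w_1$ so that $\Psi_{23}:=w_1\Phi_2\oplus(1-w_1)\Phi_3$ has value at most $w_1^2 p_0+(1-w_1)^2 q_0+\eta_1$ over $B_1$. For the outer application, take $A$ (orthogonal pure pairs), $B\leftarrow B_1$, tester $\Phi_1$ (value $\leq\frac1k$ over $A$, near-optimal cliques forced into $B_1$ at rate $10k\varepsilon^{1/4}$ by \cref{lem:separator}), and inner channel $\Psi_{23}$; this produces, for any $\eta_2>0$, a weight $w_2$ so that $\Phi'=w_2\Phi_1\oplus(1-w_2)\Psi_{23}$ has value over $A$ at most $\frac{w_2^2}{k}+(1-w_2)^2\bigl(w_1^2 p_0+(1-w_1)^2 q_0+\eta_1\bigr)+\eta_2$. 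Setting $p_1=w_2$, $p_2=(1-w_2)w_1$, $p_3=(1-w_2)(1-w_1)$ recovers exactly $\frac{p_1^2}{k}+p_2^2 p_0+p_3^2 q_0=\beta'$, since one checks $p_2^2 p_0+p_3^2 q_0=\frac{p_2^2+p_3^2(3/2+(5/2)\beta)}{2k(k-1)}$, and leaves an error $(1-w_2)^2\eta_1+\eta_2\leq\eta_1+\eta_2$. Choosing $\eta_1,\eta_2\leq\eta/2$ gives the overlap bound $\beta'+\eta$ over $A$, and hence over all orthogonal pairs by convexity, so $\Phi'$ has no $(\beta'+\eta,2)$-clique.

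I expect the main obstacle to be the bookkeeping in the nested application of \cref{lem:like-5.5}: one must verify that the $p^2$ and $(1-p)^2$ weight factors produced by each direct sum combine to exactly the $p_1^2,p_2^2,p_3^2$ appearing in $\beta'$, that the asymmetric bound $t^2\leq t$ (rather than discarding $t^2\geq 0$ as in the completeness direction of \cref{lem:k-to-2-intro}) is precisely what produces the coefficient $3/2+(5/2)\beta$ in place of $3/2+\alpha$, and that there is no circular dependence between the choice of weights and the error parameters — which is avoided because $q_0$ is a fixed number determined before either application and $(1-w_2)^2\leq 1$ regardless of the value of $w_2$.
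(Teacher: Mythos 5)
Your proposal is correct and takes essentially the same route as the paper's proof: two nested applications of \cref{lem:like-5.5}, with your $B_1$ and $B_2$ playing exactly the roles of the paper's sets $A$ and $B$ in the first application, the same inequality $\parens*{\frac{1}{2}+\frac{x}{2}}^2+\frac{1}{2}\parens*{\frac{1}{2}-\frac{x}{2}}^2\leq\frac{3}{8}+\frac{5}{8}x$ (your $t^2\leq t$ step) producing the coefficient $3/2+(5/2)\beta$, and the identical weight decomposition $p_1=w_2$, $p_2=(1-w_2)w_1$, $p_3=(1-w_2)(1-w_1)$ with error $(1-w_2)^2\eta_1+\eta_2\leq\eta_1+\eta_2$. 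The only difference is presentational: you make explicit the computation of the $\Phi_3$ overlap on $B_2$ and the non-circularity of the parameter choices, which the paper leaves implicit.
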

	
	Together with the previous lemma, this gives a reduction from $\ttt{qClique}(k)_{c,s}$ to $\ttt{qClique}(2)_{c',s'}$, where as long as $c\geq\frac{5}{2}s$ with polynomial gap, $c'$ and $s'$ have polynomial gap.
	
	\begin{proof}
		The proof proceeds by two succesive applications of \cref{lem:like-5.5}. For the first application, let $A=\{(\ketbra{\psi_1}\otimes\cdots\otimes\ketbra{\psi_k}\otimes\ketbra{\alpha},\ketbra{\psi_1}\otimes\cdots\otimes\ketbra{\psi_k}\otimes\ketbra{\beta})|\ket{\psi_1},\ldots\ket{\psi_k}\in H\text{ and }\ket{\alpha},\ket{\beta}\in H'\text{ orthonormal}\}$, and $B=\set*{(\rho,\sigma)\in A}{\ket{\psi_1},\ldots\ket{\psi_k}\text{ orthonormal}}$. Then, as $\Phi$ has no $(\beta,k)$-clique, for all $(\rho,\sigma)\in B$, $\Tr\squ*{\Phi_3(\rho)\Phi_3(\sigma)}\leq \frac{3/2+(5/2)\beta}{2k(k-1)}$, due to the fact that $(\frac{1}{2}+\frac{x}{2})^2+\frac{1}{2}(\frac{1}{2}-\frac{x}{2})^2\leq\frac{3}{8}+\frac{5}{8}x$ for $x\in[0,1]$. Then, by using \cref{lem:orthogonaliser}, there exists $p\in(0,1)$ such that $p\Phi_2\oplus(1-p)\Phi_3$ has no $(\frac{p^2+(1-p)^2(3/2+(5/2)\beta)}{2k(k-1)}+\eta_1,2)$-clique. Now for the second application, we take $B$ to be $A$ from above and $A=\set*{(\ketbra{\psi},\ketbra{\phi})}$. Then, using \cref{lem:separator}, there exists $q\in(0,1)$ such that $q\Phi_1\oplus (1-q)p\Phi_2\oplus(1-q)(1-p)\Phi_3$ has no $(\frac{q^2}{k}+\frac{((1-q)p)^2+((1-q)(1-p))^2(3/2+(5/2)\beta)}{2k(k-1)}+(1-q)^2\eta_1+\eta_2,2)$-clique. Take $\eta=(1-q)^2\eta_1+\eta_2\leq\eta_1+\eta_2$, which can be made arbitrarily small, and $p_1=q$, $p_2=(1-q)p$, $p_3=(1-q)(1-p)$, which completes the proof.
	\end{proof}
    
\section{Alternate proof of \textsf{QMA}(2)-hardness}\label{a-alternateproof}

In this appendix, we give an alternate proof that the quantum clique problem is $\tsf{QMA}(2)$-hard. The proof is conceptually simpler than the proof in \cref{Sec:QMA(2)}, but provides no additional constraints on the structure of the channels.

\begin{theorem}
	There exist $c,s:\N\rightarrow(0,1)_{\exp}$ with polynomial gap such that $\ttt{qClique}(2)_{c,s}$ is $\tsf{QMA}(2)$-complete.
\end{theorem}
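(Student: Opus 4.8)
The inclusion $\ttt{qClique}(2)_{c,s}\in\tsf{QMA}(2)$ is immediate from \cref{lem:in-qma2} applied with $\mc{C}$ the set of all quantum circuits, valid for any $c,s:\N\rightarrow(0,1)_{\exp}$ with polynomial gap. So the only thing left to prove is $\tsf{QMA}(2)$-hardness, and the point of this appendix is to do so more directly than \cref{thm:clique-qma2-hard}, at the price of giving up the guarantee that the output circuit lies in the restricted class $\mc{C}_{\Tr}\oplus\mc{C}_{\Tr}\oplus\mc{C}_{\mathrm{EB}}$.

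The plan is to reduce from a canonical $\tsf{QMA}(2)$ promise problem. Given an instance $x$, I would first amplify so that the verifier is a polynomial-time-constructible channel $\Phi_x:\mc{L}(Q^{\otimes 2p})\rightarrow\mc{L}(Q)$ accepting some product proof $\rho\otimes\sigma$ with $\braket{1}{\Phi_x(\rho\otimes\sigma)}{1}\geq\frac{3}{4}$ in the yes case, and accepting every product proof with probability at most $\frac{1}{4}$ in the no case. I would then append one orthogonality qubit $R$ and pass $\Phi_x$ through \cref{lem:algorithm-channel} to obtain $\Psi_x$, which traces out $R$, runs the verification, and emits a near-pure state on acceptance and the maximally mixed state on rejection. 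Finally I would form the mixture $\Phi_p=p\,\Phi_1\oplus(1-p)\,\Psi_x$ through \cref{def:circuit-direct-sum}, where $\Phi_1$ is the marginal-splitting separator of \cref{lem:separator} with $k=2$; this circuit has size polynomial in $|x|$. Since I no longer need to certify that the result lies in $\mc{C}_{\Tr}\oplus\mc{C}_{\Tr}\oplus\mc{C}_{\mathrm{EB}}$, I may combine the pieces by a single direct sum and analyse them with one application of the self-testing machinery, rather than the more delicate class-tracking bookkeeping of \cref{thm:clique-qma2-hard}.

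The yes case is routine: the orthogonal pair $\rho\otimes\sigma\otimes\ketbra{0}$ and $\rho\otimes\sigma\otimes\ketbra{1}$ has equal pure marginals and is accepted by $\Psi_x$, so its clique value is bounded below by a fixed constant. For the no case I would argue the contrapositive via the self-testing principle: by \cref{lem:separator}, any orthogonal pair whose $\Phi_1$-overlap is near-maximal must have nearly pure, nearly equal marginals, hence lie within trace distance $O(\varepsilon^{1/4})$ of a product state differing only in $R$; the $\tsf{QMA}(2)$ soundness promise then bounds the $\Psi_x$-contribution on that nearby product state, and \cref{lem:like-5.5} transfers the bound back to the original, possibly entangled, pair up to an arbitrarily small slack $\eta$. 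Choosing $p$ and $\eta$ as in the remark after \cref{lem:like-5.5} preserves a constant, and hence polynomial, completeness--soundness gap.

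The main obstacle is precisely this no-case step: the $\tsf{QMA}(2)$ promise says nothing about entangled cliques, so the whole reduction rests on the separator $\Phi_1$ forcing near-separability together with the stability estimate of \cref{lem:like-5.5}, which converts ``the nearby product state is rejected'' into ``the given pair has small overlap.'' Everything else --- the yes-case lower bound and the polynomial accounting of circuit size via \cref{def:circuit-direct-sum} --- is routine. The conceptual saving over \cref{thm:clique-qma2-hard} is that, having abandoned the structural restriction on the channel class, I may use the verifier channel directly and invoke the separability machinery just once, instead of arranging the construction to decompose as a direct sum of partial traces with an entanglement-breaking block.
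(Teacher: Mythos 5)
Your argument is correct, and the statement does indeed follow exactly as you say: inclusion is \cref{lem:in-qma2}, and hardness over all circuits is implied \emph{a fortiori} by hardness over the restricted class. But note that what you have written is essentially the paper's main-text proof (\cref{thm:clique-qma2-hard}) rather than the proof the paper gives for this particular theorem, which takes a genuinely different and more elementary route. The paper's appendix proof dispenses entirely with the separator channel of \cref{lem:separator}, the stability lemma \cref{lem:like-5.5}, \cref{lem:algorithm-channel}, and the direct-sum construction. Instead it invokes the Harrow--Montanaro normal form~\cite{HM13} to assume the witness is two copies $\ket{\psi}\otimes\ket{\psi}$ of one pure state and the verifier is a single unitary $V_x$ followed by a one-qubit measurement, and then builds the channel by running $V_x^\dag$ \emph{backwards} on $\ketbra{1}\otimes\rho_G$ (after tracing out an appended orthogonality qubit), measuring the workspace against $\ket{0^k}$ (replacing the state with a maximally mixed one on failure), and tracing out one witness register. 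Completeness holds because $V_x^\dag\ket{1}\ket{g}$ is $\sqrt{2-2c}$-close to $\ket{\psi}\ket{\psi}\ket{0^k}$, so the output is nearly the pure state $\ket{\psi}$; soundness holds by Schmidt-decomposing the post-$V_x^\dag$ state, observing that every eigenvalue of the output is at most $s+2^{-m}$, and applying Cauchy--Schwarz to bound every overlap by $s+2^{-m}$ --- so the no case needs no separability self-test at all, because the inverse-circuit construction bounds the purity of \emph{every} output directly. The trade-offs are real: your route is independent of~\cite{HM13} (which matters, since the paper exploits exactly this independence in \cref{subsec:NewReduction} to re-derive $\tsf{QMA}(k)=\tsf{QMA}(2)$) and retains structural control over the channel class, at the price of the heavier self-testing machinery; the appendix route is shorter and avoids all of that machinery, but leans on the two-identical-copies witness structure of~\cite{HM13} and yields no information about the form of the resulting channel.
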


\begin{proof}
	Let $L\in\tsf{QMA}(2)$. Based on the construction of \cite{HM13}, we can make some assumptions about the structure of the verification circuit. First, we may assume that the witness state is pure, and that it consists of two copies of the same state $\ket{\psi}\otimes\ket{\psi}\in(\C^2)^{\otimes 2m}$. Second, we can assume that, for any classical input $x$, the verification circuit takes the form of some unitary $V_x$ acting on the space of the witness state and some polynomial-sized workspace $W=(\C^2)^{\otimes k}$, and that the verification algorithm accepts or rejects by making a measurement of the first qubit of the output of the unitary. That is, if $x\in L$, then, there exists $\ket{\psi}\in(\C^2)^{\otimes m}$ such that $\norm{(\bra{1}\otimes\Id)V_x\ket{\psi}\ket{\psi}\ket{0^k}}^2\geq c$; and if $x\notin L$, then for all $\ket{\psi},\ket{\phi}\in(\C^2)^{\otimes m}$, $\norm{(\bra{1}\otimes\Id)V_x\ket{\psi}\ket{\phi}\ket{0^k}}^2\leq s$.
	
	Now, we want to construct a channel $\Phi_x$ that has a $2$-clique if and only if $x\in L$. Write $H=(\C^2)^{\otimes m}$, $W=(C^2)^{\otimes k}$, $Y=\C^2$, and $G=(\C^2)^{\otimes 2m+k-1}$, so that $V_x:H\otimes H\otimes W\rightarrow Y\otimes G$. Run the channel $\Phi_x:\mc{B}(G\otimes\C^2)\rightarrow\mc{B}(H)$ on input $\rho$ as follows.
	\begin{enumerate}[1.]
		\item Trace out the last qubit of $\rho$ to get $\rho_G$.
		\item Act with $V_x^\dag$ on $\ketbra{1}\otimes\rho_G$.
		\item Measure $W$ in the computational basis. If the measurement result is $0^k$, do nothing to the remaining state; else replace it with a maximally-mixed state.
		\item Trace out the first copy of $H$.
	\end{enumerate}
	This construction is illustrated in \cref{fig:qma2-channel}
	
	\begin{figure}[h!]
		\centering
		\begin{tikzpicture}
			\draw (-0.5,-1) rectangle (0.5, 1) node[pos=0.5]{$V_x^\dag$};
			
			\draw (2.5,0.45) rectangle (2,-0.05) node[pos=0.5]{$\Tr$};
			\draw (2,-0.25) rectangle (1.5,-0.75) node[pos=0.5]{$M$};
			\draw (-0.25,-1.75) rectangle (0.25,-1.25) node[pos=0.5]{$\Tr$};
			
			\draw (1.75,-0.25) -- (1.75,0.8);
			\filldraw (1.65,0.7) rectangle (1.85,0.9);
			
			\draw (-1.5,0.5) node[left]{$\ket{1}$} -- (-0.5,0.5) node[above,pos=0.5]{$Y$} (0.5,0.8) -- (1.5,0.8) node[above,pos=0.5]{$H$} -- (2.75,0.8) (0.5,0.2) -- (1.5,0.2) node[above,pos=0.5]{$H$} -- (2,0.2) (-1.5,-0.5) -- (-0.5,-0.5) node[above,pos=0.5]{$G$} (0.5,-0.5) -- (1.5,-0.5) node[above,pos=0.5]{$W$} (-1.5,-1.5) -- (-0.5,-1.5) node[above,pos=0.5]{$\C^2$} -- (-0.25,-1.5);
			
			\node at (-1.7,-1) {$\Bigg\{$};
			\node at (-2,-1) {$\rho$};
			
		\end{tikzpicture}
		\caption{Construction of the channel from a $\ttt{QMA}(2)$-language verification circuit.}
		\label{fig:qma2-channel}
	\end{figure}
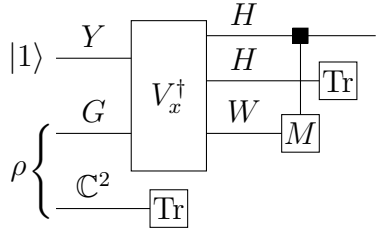
	
	First, consider the case that $x\in L$. Then, there exists $\ket{g}\in G$ such that $\abs*{\bra{1}\bra{g}V_x\ket{\psi}\ket{\psi}\ket{0^k}}^2\geq c$. I claim the wanted clique is $\rho=\ketbra{g}\otimes\ketbra{0}$, $\sigma=\ketbra{g}\otimes\ketbra{1}$. For either of those two states, the state after step 2 is $V_x^\dag\ket{1}\ket{g}=\alpha\ket{\psi}\ket{\psi}\ket{0^k}+\ket{\perp}$, where $|\alpha|^2\geq c$. Were $\alpha=1$, the output would be $\ket{\psi}$, in which case $\Tr(\Phi(\rho_0)\sigma(\sigma_0))=1$. In reality, $\norm{V_x^\dag\ket{1}\ket{g}-\ket{\psi}\ket{\psi}\ket{1}}=\norm{\ket{\perp}}=\sqrt{|1-\alpha|^2+1-|\alpha|^2}=\sqrt{2-2\latRe\alpha}\leq\sqrt{2-2c}$. Thus,
	\begin{align}
		\begin{split}
			\Tr(\Phi(\rho)\Phi(\sigma))&\geq\Tr(\Phi(\rho_0)\Pi(\sigma_0))-\norm{\Phi(\rho)-\Phi(\rho_0)}_{\Tr}-\norm{\Phi(\sigma)-\Phi(\sigma_0)}_{\Tr}\\
			&\geq 1-2\sqrt{2-2c}\geq4(c-\tfrac{7}{8})
		\end{split}
	\end{align}
	
	Now, consider the case that $x\notin L$. In this case, for all $\ket{\psi},\ket{\phi}\in H$, and $\ket{g}\in G$, $\abs*{\bra{1}\bra{g}V_x\ket{\psi}\ket{\phi}\ket{0^k}}^2\leq s$. Since every mixed state is a convex combination of pure states, if we show that for all pure states $\ket{\psi},\ket{\phi}\in G\otimes\C^2$, $\Tr(\Phi(\ketbra{\psi})\Phi(\ketbra{\phi}))$ is upper bounded by some constant, the same bound must hold for all states. Further, we may, by the same argument, work only with states that remain pure after the partial trace is taken in step 1 --- these are the separable states $\ket{\psi}=\ket{\psi_G}\otimes\ket{\psi_0}$, $\ket{\phi}=\ket{\phi_G}\otimes\ket{\phi_0}$. With this assumption, after step 2, $\ket{\psi}$ becomes $V_x^\dag\ket{1}\ket{\psi_G}=\alpha\ket{\psi_{HH}}\ket{0^k}+\beta\ket{\perp}$ for some states $\ket{\psi_{HH}}\in H\otimes H$ and $\ket{\perp}\in H\otimes H\otimes W$, where $|\alpha|^2+|\beta|^2=1$. Letting $\mu_{HH}$ be the maximally mixed state on $H\otimes H$, the state after step 3 is the $|\alpha|^2\ketbra{\psi_{HH}}+|\beta|^2\mu_{HH}$. Expanding $\ket{\psi_{HH}}=\sum_{i}\sqrt{p_i}\ket{\psi^1_i}\ket{\psi^2_i}$ via the Schmidt decomposition, the final state after step 4 is
	\begin{align}
		\Phi_x(\ketbra{\psi})=\sum_i|\alpha|^2p_i\ketbra{\psi^2_i}+|\beta|^2\mu_H=\sum_i\parens*{|\alpha|^2p_i+\tfrac{|\beta|^2}{2^m}}\ketbra{\psi^2_i}.
	\end{align}
	Note also that $|\alpha|^2p_i=\abs*{\bra{1}\bra{\psi}V_x\ket{\psi^1_i}\ket{\psi^2_i}\ket{0^k}}^2\leq s$ for all $i$. In the same way, $\Phi_x(\ketbra{\phi})=\sum_i\parens*{|\alpha'|^2q_i+\tfrac{|\beta'|^2}{2^m}}\ketbra{\phi^2_i}$ for some $|\alpha'|^2q_i\leq s$. Finally, we get that, by the Cauchy-Schwarz inequality,
	\begin{align}
		\begin{split}
			\Tr(\Phi(\ketbra{\psi})\Phi(\ketbra{\phi}))&\leq\sqrt{\sum_{i}\parens*{|\alpha|^2p_i+\tfrac{|\beta|^2}{2^m}}^2\sum_j\parens*{|\alpha'|^2q_j+\tfrac{|\beta'|^2}{2^m}}^2}\\
			&\leq\sqrt{(s+\tfrac{|\beta|^2}{2^m})\sum_{i}\parens*{|\alpha|^2p_i+\tfrac{|\beta|^2}{2^m}}(s+\tfrac{|\beta'|^2}{2^m})\sum_j\parens*{|\alpha'|^2q_j+\tfrac{|\beta'|^2}{2^m}}}\\
			&\leq s+\tfrac{1}{2^m}.
		\end{split}
	\end{align}
\end{proof}

\bibliographystyle{bibtex/bst/alphaarxiv.bst}
\bibliography{bibtex/bib/full.bib,bibtex/bib/quantum.bib,bibtex/quantum_new.bib,bibtex/complexity_refs.bib}

\newcommand{\etalchar}[1]{$^{#1}$}
\begin{thebibliography}{BCWdW01}

\bibitem[ABD{\etalchar{+}}09]{ABDFS09}
S.~Aaronson, S.~Beigi, A.~Drucker, B.~Fefferman, and P.~Shor.
\newblock The power of unentanglement.
\newblock {\em Theory of Computing}, 5(1): 1--42, 2009.
\newblock \\
  \texttt{DOI:\,\href{http://dx.doi.org/10.4086/toc.2009.v005a001}{10.4086/toc.2009.v005a001}}.

\bibitem[AG19]{AG19}
D.~Aharonov and A.~B. Grilo.
\newblock Stoquastic pcp vs. randomness.
\newblock In {\em 2019 IEEE 60th Annual Symposium on Foundations of Computer
  Science (FOCS)}, pages 1000--1023. IEEE, 2019.
\newblock \\
  \texttt{DOI:\,\href{http://dx.doi.org/10.1109/FOCS.2019.00065}{10.1109/FOCS.2019.00065}}.

\bibitem[BCWdW01]{BCWdW01}
H.~Buhrman, R.~Cleve, J.~Watrous, and R.~de~Wolf.
\newblock {Q}uantum {F}ingerprinting.
\newblock {\em Phys. Rev. Lett.}, 87: 167902, 2001.
\newblock \\
  \texttt{DOI:\,\href{http://dx.doi.org/10.1103/PhysRevLett.87.167902}{10.1103/PhysRevLett.87.167902}}.

\bibitem[BEVW22]{BEVW22}
M.~Brannan, K.~Eifler, C.~Voigt, and M.~Weber.
\newblock Quantum {C}untz-{K}rieger algebras.
\newblock {\em Transactions of the American Mathematical}, 9, 2022.
\newblock \\
  \texttt{DOI:\,\href{http://dx.doi.org/https://doi.org/10.1090/btran/88}{https://doi.org/10.1090/btran/88}}.

\bibitem[BFM23]{BFM23}
R.~Bassirian, B.~Fefferman, and K.~Marwaha.
\newblock Quantum {M}erlin-{A}rthur and proofs without relative phase, 2023.
\newblock \\
  \texttt{arXiv:\,\href{http://arxiv.org/abs/2306.13247}{2306.13247}}.

\bibitem[BG22]{BG22}
A.~Broadbent and A.~B. Grilo.
\newblock {QMA}-hardness of consistency of local density matrices with
  applications to quantum zero-knowledge.
\newblock {\em {SIAM} Journal on Computing}, 51(4): 1400--1450, 2022.
\newblock \\
  \texttt{DOI:\,\href{http://dx.doi.org/10.1137/21m140729x}{10.1137/21m140729x}}.

\bibitem[BGH22]{BGH19}
M.~Brannan, P.~Ganesan, and S.~J. Harris.
\newblock {The quantum-to-classical graph homomorphism game}.
\newblock {\em Journal of Mathematical Physics}, 63(11): 112204, 2022.
\newblock \\
  \texttt{DOI:\,\href{http://dx.doi.org/10.1063/5.0072288}{10.1063/5.0072288}}.

\bibitem[BMZ23]{BMZ23}
A.~Broadbent, A.~Mehta, and Y.~Zhao.
\newblock Quantum delegation with an off-the-shelf device, 2023.
\newblock \\
  \texttt{arXiv:\,\href{http://arxiv.org/abs/2304.03448}{2304.03448}}.

\bibitem[BOGG{\etalchar{+}}88]{BOGG88}
M.~Ben-Or, O.~Goldreich, S.~Goldwasser, J.~H{\aa}stad, J.~Kilian, S.~Micali,
  and P.~Rogaway.
\newblock Everything provable is provable in zero-knowledge.
\newblock In {\em Advances in Cryptology---CRYPTO 1988}, pages 37--56, 1988.
\newblock \\
  \texttt{DOI:\,\href{http://dx.doi.org/10.1007/0-387-34799-2\_4}{10.1007/0-387-34799-2\_4}}.

\bibitem[Bra06]{Bra06}
S.~Bravyi.
\newblock Efficient algorithm for a quantum analogue of 2-{SAT}, 2006.
\newblock \\
  \texttt{arXiv:\,\href{http://arxiv.org/abs/quant-ph/0602108}{quant-ph/0602108}}.

\bibitem[Bra08]{Bra08}
F.~G. d. S.~L. Brand{\~a}o.
\newblock {\em {E}ntanglement {T}heory and the {Q}uantum {S}imulation of
  {M}any-{B}ody {P}hysics}.
\newblock PhD thesis, University of London, 2008.
\newblock \\ \texttt{arXiv:\,\href{http://arxiv.org/abs/0810.0026}{0810.0026}}.

\bibitem[BS08]{BS08}
S.~Beigi and P.~W. Shor.
\newblock On the complexity of computing zero-error and holevo capacity of
  quantum channels, 2008.
\newblock \\ \texttt{arXiv:\,\href{http://arxiv.org/abs/0709.2090}{0709.2090}}.

\bibitem[BT10]{BT10}
S.~Bravyi and B.~Terhal.
\newblock Complexity of stoquastic frustration-free hamiltonians.
\newblock {\em SIAM Journal on Computing}, 39(4): 1462--1485, 2010.
\newblock \\
  \texttt{DOI:\,\href{http://dx.doi.org/10.1137/08072689X}{10.1137/08072689X}}.

\bibitem[BTW21]{BTW21}
G.~Boreland, I.~Todorov, and A.~Winter.
\newblock Sandwich theorems and capacity bounds for non-commutative graphs.
\newblock {\em Journal of Combinatorial Theory, Series A}, 177: 105302, 2021.
\newblock \\
  \texttt{DOI:\,\href{http://dx.doi.org/10.1016/j.jcta.2020.105302}{10.1016/j.jcta.2020.105302}}.

\bibitem[Bur90]{Bur90}
S.~A. Burr.
\newblock {\em On the Computational Complexity of {R}amsey---Type Problems},
  pages 46--52.
\newblock Springer Berlin Heidelberg, Berlin, Heidelberg, 1990.
\newblock \\
  \texttt{DOI:\,\href{http://dx.doi.org/10.1007/978-3-642-72905-8\_5}{10.1007/978-3-642-72905-8\_5}}.

\bibitem[CFGS18]{CFGS18}
A.~Chiesa, M.~Forbes, T.~Gur, and N.~Spooner.
\newblock Spatial isolation implies zero knowledge even in a quantum world.
\newblock In {\em 59th Annual Symposium on Foundations of Computer
  Science---FOCS 2018}, pages 755--765, 2018.
\newblock \\
  \texttt{DOI:\,\href{http://dx.doi.org/10.1109/FOCS.2018.00077}{10.1109/FOCS.2018.00077}}.

\bibitem[CS12]{CS12}
A.~Chailloux and O.~Sattath.
\newblock The complexity of the separable hamiltonian problem.
\newblock In {\em 2012 {IEEE} 27th Conference on Computational Complexity}.
  {IEEE}, 2012.
\newblock \\
  \texttt{DOI:\,\href{http://dx.doi.org/10.1109/ccc.2012.42}{10.1109/ccc.2012.42}}.

\bibitem[CS19]{CS19}
M.~Coudron and W.~Slofstra.
\newblock {Complexity Lower Bounds for Computing the Approximately-Commuting
  Operator Value of Non-Local Games to High Precision}.
\newblock In A.~Shpilka, editor, {\em 34th Computational Complexity Conference
  (CCC 2019)}, volume 137 of {\em Leibniz International Proceedings in
  Informatics (LIPIcs)}, pages 25:1--25:20, Dagstuhl, Germany, 2019. Schloss
  Dagstuhl--Leibniz-Zentrum fuer Informatik.
\newblock \\
  \texttt{DOI:\,\href{http://dx.doi.org/10.4230/LIPIcs.CCC.2019.25}{10.4230/LIPIcs.CCC.2019.25}}.

\bibitem[CvS22]{CS22}
A.~Connes and W.~D. van Suijlekom.
\newblock Tolerance relations and operator systems.
\newblock {\em Acta Scientiarum Mathematicarum}, 88, 2022.
\newblock \\
  \texttt{DOI:\,\href{http://dx.doi.org/https://doi.org/10.1007/s44146-022-00012-3}{https://doi.org/10.1007/s44146-022-00012-3}}.

\bibitem[CW22]{CW22}
A.~Chirvasitu and M.~Wasilewski.
\newblock Random quantum graphs.
\newblock {\em Transactions of the American Mathematical Society}, 2022.
\newblock \\
  \texttt{DOI:\,\href{http://dx.doi.org/10.1090/tran/8584}{10.1090/tran/8584}}.

\bibitem[Daw22]{Daw22}
M.~Daws.
\newblock Quantum graphs: different perspectives, homomorphisms and quantum
  automorphisms, 2022.
\newblock \\
  \texttt{arXiv:\,\href{http://arxiv.org/abs/2203.08716}{2203.08716}}.

\bibitem[DSW13]{DSW13}
R.~Duan, S.~Severini, and A.~Winter.
\newblock Zero-error communication via quantum channels, noncommutative graphs,
  and a quantum {L}ov{\'{a} }sz number.
\newblock {\em {IEEE} Transactions on Information Theory}, 59(2): 1164--1174,
  2013.
\newblock \\
  \texttt{DOI:\,\href{http://dx.doi.org/10.1109/tit.2012.2221677}{10.1109/tit.2012.2221677}}.

\bibitem[EKS98]{EKS98}
J.~Erd{\H{o}}s, A.~Katavolos, and V.~Shulman.
\newblock Rank one subspaces of bimodules over maximal abelian selfadjoint
  algebras.
\newblock {\em Journal of Functional Analysis}, 157(2): 554--587, 1998.
\newblock \\
  \texttt{DOI:\,\href{http://dx.doi.org/https://doi.org/10.1006/jfan.1998.3274}{https://doi.org/10.1006/jfan.1998.3274}}.

\bibitem[Gan23]{Gan21}
P.~Ganesan.
\newblock Spectral bounds for the quantum chromatic number of quantum graphs.
\newblock {\em Linear Algebra and its Applications}, 674: 351--376, 2023.
\newblock \\
  \texttt{DOI:\,\href{http://dx.doi.org/https://doi.org/10.1016/j.laa.2023.06.007}{https://doi.org/10.1016/j.laa.2023.06.007}}.

\bibitem[GHMW15]{GHMW15}
G.~Gutoski, P.~Hayden, K.~Milner, and M.~M. Wilde.
\newblock Quantum interactive proofs and the complexity of separability
  testing.
\newblock {\em Theory of Computing}, 11(1): 59--103, 2015.
\newblock \\
  \texttt{DOI:\,\href{http://dx.doi.org/10.4086/toc.2015.v011a003}{10.4086/toc.2015.v011a003}}.

\bibitem[GMW91]{GMW91}
O.~Goldreich, S.~Micali, and A.~Wigderson.
\newblock Proofs that yield nothing but their validity or all languages in {NP}
  have zero-knowledge proof systems.
\newblock {\em Journal of the ACM}, 38(3): 690--728, 1991.
\newblock \\
  \texttt{DOI:\,\href{http://dx.doi.org/10.1145/116825.116852}{10.1145/116825.116852}}.

\bibitem[Gri19]{Gri19}
A.~B. Grilo.
\newblock A simple protocol for verifiable delegation of quantum computation in
  one round.
\newblock In {\em 46th International Colloquium on Automata, Languages, and
  Programming---ICALP 2019}, page~28, 2019.
\newblock \\
  \texttt{DOI:\,\href{http://dx.doi.org/10.4230/LIPIcs.ICALP.2019.28}{10.4230/LIPIcs.ICALP.2019.28}}.

\bibitem[GSY19]{GSY19arxiv}
A.~Grilo, W.~Slofstra, and H.~Yuen.
\newblock Perfect zero knowledge for quantum multiprover interactive proofs.
\newblock In {\em Proceedings of FOCS}, pages 611--635, 2019.
\newblock \\
  \texttt{DOI:\,\href{http://dx.doi.org/10.1109/FOCS.2019.00044}{10.1109/FOCS.2019.00044}}.

\bibitem[GW83]{GW83}
H.~Galperin and A.~Wigderson.
\newblock Succinct representations of graphs.
\newblock {\em Information and Control}, 56(3): 183--198, 1983.
\newblock \\
  \texttt{DOI:\,\href{http://dx.doi.org/https://doi.org/10.1016/S0019-9958(83)80004-7}{https://doi.org/10.1016/S0019-9958(83)80004-7}}.

\bibitem[Har23]{Har2023}
S.~J. Harris.
\newblock Universality of graph homomorphism games and the quantum coloring
  problem, 2023.
\newblock \\
  \texttt{arXiv:\,\href{http://arxiv.org/abs/2305.18116}{2305.18116}}.

\bibitem[HM13]{HM13}
A.~W. Harrow and A.~Montanaro.
\newblock Testing product states, quantum {M}erlin-{A}rthur games and tensor
  optimization.
\newblock {\em Journal of the ACM}, 60(1): 1--43, 2013.
\newblock \\
  \texttt{DOI:\,\href{http://dx.doi.org/10.1145/2432622.2432625}{10.1145/2432622.2432625}}.

\bibitem[HMPS19]{HMPS19}
J.~W. Helton, K.~P. Meyer, V.~I. Paulsen, and M.~Satriano.
\newblock Algebras, synchronous games, and chromatic numbers of graphs.
\newblock {\em New York Journal of Mathematics}, 25: 328--361, 2019.
\newblock Available at
  \texttt{\href{http://nyjm.albany.edu/j/2019/25-16.html}{nyjm.albany.edu/j/2019/25-16.html}}.

\bibitem[HSR03]{HSR03}
M.~Horodecki, P.~W. Shor, and M.~B. Ruskai.
\newblock Entanglement breaking channels.
\newblock {\em Reviews in Mathematical Physics}, 15(06): 629--641, 2003.
\newblock \\
  \texttt{DOI:\,\href{http://dx.doi.org/10.1142/s0129055x03001709}{10.1142/s0129055x03001709}}.

\bibitem[JNV{\etalchar{+}}21]{JNV+20arxiv}
Z.~Ji, A.~Natarajan, T.~Vidick, J.~Wright, and H.~Yuen.
\newblock {$\mathsf{MIP}^*=\mathsf{RE}$}.
\newblock {\em Communications of the ACM}, 64(11): 131–138, 2021.
\newblock \\
  \texttt{DOI:\,\href{http://dx.doi.org/10.1145/3485628}{10.1145/3485628}}.

\bibitem[KKS20]{KKS20}
M.~Kennedy, T.~Kolomatski, and D.~Spivak.
\newblock An infinite quantum {R}amsey theorem.
\newblock {\em Journal of Operator Theory}, 84, 2020.
\newblock \\
  \texttt{DOI:\,\href{http://dx.doi.org/10.7900/jot.2018dec11.2259}{10.7900/jot.2018dec11.2259}}.

\bibitem[KM19]{KM19}
S.-J. Kim and A.~Mehta.
\newblock Chromatic numbers, {S}abidussi's theorem and {H}edetniemi's
  conjecture for non-commutative graphs.
\newblock {\em Linear Algebra and its Applications}, 582: 291--309, 2019.
\newblock \\
  \texttt{DOI:\,\href{http://dx.doi.org/https://doi.org/10.1016/j.laa.2019.08.002}{https://doi.org/10.1016/j.laa.2019.08.002}}.

\bibitem[KMY03]{KMY03}
H.~Kobayashi, K.~Matsumoto, and T.~Yamakami.
\newblock Quantum {M}erlin-{A}rthur proof systems: Are multiple {M}erlins more
  helpful to {A}rthur?
\newblock In {\em Algorithms and Computation}, pages 189--198, Berlin,
  Heidelberg, 2003. Springer Berlin Heidelberg.
\newblock \\
  \texttt{DOI:\,\href{http://dx.doi.org/10.1007/978-3-540-24587-2\_21}{10.1007/978-3-540-24587-2\_21}}.

\bibitem[KSV02]{KSV02}
A.~Y. Kitaev, A.~Shen, and M.~N. Vyalyi.
\newblock {\em Classical and Quantum Computation}.
\newblock American Mathematical Society, 2002.

\bibitem[Lov73]{Lov73}
L.~Lov\'{a}sz.
\newblock Coverings and colorings of hypergraphs.
\newblock In {\em In Proceedings of the 4th Southeastern Conference on
  Combinatorics, Graph Theory, and Computing}, 1973.
\newblock \\ Online: \url{www.cs.elte.hu/~lovasz/scans/covercolor.pdf}.

\bibitem[Mat22]{Mat22}
J.~Matsuda.
\newblock Classification of quantum graphs on {$M_2$} and their quantum
  automorphism groups.
\newblock {\em Journal of Mathematical Physics}, 63(9), 2022.
\newblock \\
  \texttt{DOI:\,\href{http://dx.doi.org/10.1063/5.0081059}{10.1063/5.0081059}}.

\bibitem[MNY22]{MSY22}
H.~Mousavi, S.~S. Nezhadi, and H.~Yuen.
\newblock Nonlocal games, compression theorems, and the arithmetical hierarchy.
\newblock In {\em Proceedings of the 54th Annual {ACM} {SIGACT} Symposium on
  Theory of Computing}. {ACM}, 2022.
\newblock \\
  \texttt{DOI:\,\href{http://dx.doi.org/10.1145/3519935.3519949}{10.1145/3519935.3519949}}.

\bibitem[MR16]{MR16}
L.~Mančinska and D.~E. Roberson.
\newblock Quantum homomorphisms.
\newblock {\em Journal of Combinatorial Theory, Series B}, 118: 228--267, 2016.
\newblock \\
  \texttt{DOI:\,\href{http://dx.doi.org/https://doi.org/10.1016/j.jctb.2015.12.009}{https://doi.org/10.1016/j.jctb.2015.12.009}}.

\bibitem[OP16]{OP16}
C.~M. Ortiz and V.~I. Paulsen.
\newblock Quantum graph homomorphisms via operator systems.
\newblock {\em Linear Algebra and its Applications}, 497: 23--43, 2016.
\newblock \\
  \texttt{DOI:\,\href{http://dx.doi.org/https://doi.org/10.1016/j.laa.2016.02.019}{https://doi.org/10.1016/j.laa.2016.02.019}}.

\bibitem[Pau02]{Pau02}
V.~I. Paulsen.
\newblock {\em Completely bounded maps and operator algebras}.
\newblock Cambridge studies in advanced mathematics ; 78. Cambridge University
  Press, Cambridge ;, 2002.

\bibitem[Pau16]{Pau16}
V.~Paulsen.
\newblock Online course notes: Entanglement and non-locality, {U}niversity of
  {W}aterloo, 2016.
\newblock Available at
  \href{www.math.uwaterloo.ca/~vpaulsen/EntanglementAndNonlocality_LectureNotes_7.pdf
  }{www.math.uwaterloo.ca/$\sim$vpaulsen/EntanglementAndNonlocality\_LectureNotes\_7.pdf}.

\bibitem[Ram30]{R30}
F.~Ramsey.
\newblock On a problem of formal logic.
\newblock {\em Proceedings of the London Mathematical Society}, 30: 264--286,
  1930.
\newblock \\
  \texttt{DOI:\,\href{http://dx.doi.org/10.1112/plms/s2-30.1.264}{10.1112/plms/s2-30.1.264}}.

\bibitem[Sch66]{Sch66}
P.~H. Sch{\"o}nemann.
\newblock A generalized solution of the orthogonal {P}rocrustes problem.
\newblock {\em Psychometrika}, 31(1): 1--10, 1966.
\newblock \\
  \texttt{DOI:\,\href{http://dx.doi.org/10.1007/BF02289451}{10.1007/BF02289451}}.

\bibitem[Sha56]{Sha56}
C.~Shannon.
\newblock The zero error capacity of a noisy channel.
\newblock {\em IRE Transactions on Information Theory}, 2(3): 8--19, 1956.
\newblock \\
  \texttt{DOI:\,\href{http://dx.doi.org/10.1109/TIT.1956.1056798}{10.1109/TIT.1956.1056798}}.

\bibitem[Sip12]{Sip12}
M.~Sipser.
\newblock {\em Introduction to the Theory of Computation}.
\newblock Cengage Learning, 3rd edition, 2012.

\bibitem[Sta16]{Sta16}
D.~Stahlke.
\newblock Quantum zero-error source-channel coding and non-commutative graph
  theory.
\newblock {\em {IEEE} Transactions on Information Theory}, 62(1): 554--577,
  2016.
\newblock \\
  \texttt{DOI:\,\href{http://dx.doi.org/10.1109/tit.2015.2496377}{10.1109/tit.2015.2496377}}.

\bibitem[TT20]{TT20}
I.~G. Todorov and L.~Turowska.
\newblock Quantum no-signalling correlations and non-local games, 2020.
\newblock \\
  \texttt{arXiv:\,\href{http://arxiv.org/abs/2009.07016}{2009.07016}}.

\bibitem[VZ20]{VZ19arxiv}
T.~Vidick and T.~Zhang.
\newblock Classical zero-knowledge arguments for quantum computations.
\newblock {\em {Quantum}}, 4: 266, 2020.
\newblock \\
  \texttt{DOI:\,\href{http://dx.doi.org/10.22331/q-2020-05-14-266}{10.22331/q-2020-05-14-266}}.

\bibitem[Wat18]{Wat18}
J.~Watrous.
\newblock {\em The Theory of Quantum Information}.
\newblock Cambridge University Press, 1\textsuperscript{st} edition, 2018.

\bibitem[Wea10]{Wea10}
N.~Weaver.
\newblock Quantum relations.
\newblock {\em Memoirs of the American Mathematical Society}, 215, 2010.
\newblock \\
  \texttt{DOI:\,\href{http://dx.doi.org/10.1090/S0065-9266-2011-00637-4}{10.1090/S0065-9266-2011-00637-4}}.

\bibitem[Wea17]{Wea17}
N.~Weaver.
\newblock A ``quantum'' {R}amsey theorem for operator systems.
\newblock {\em Proceedings of the American Mathematical Society}, 145:
  4595--4605, 2017.
\newblock \\
  \texttt{DOI:\,\href{http://dx.doi.org/10.1090/proc/13606}{10.1090/proc/13606}}.

\bibitem[Wea21]{Wea21}
N.~Weaver.
\newblock Quantum graphs as quantum relations.
\newblock {\em The Journal of Geometric Analysis}, 31(9): 9090--9112, 2021.
\newblock \\
  \texttt{DOI:\,\href{http://dx.doi.org/10.1007/s12220-020-00578-w}{10.1007/s12220-020-00578-w}}.

\end{thebibliography}

\end{document}